\definecolor{alizarin}{rgb}{0.82, 0.1, 0.26}
\definecolor{orcidlogocol}{HTML}{A6CE39}
\theoremstyle{definition}
\newtheorem{theorem}{Theorem}
\newtheorem{lemma}{Lemma}
\newtheorem{definition}{Definition}
\newtheorem{assumption}{Assumption}
\newtheorem{example}{Example}
\newtheorem{corollary}{Corollary}
\newtheorem{proposition}{Proposition}
\newcommand*{\cln}{c\ell_n(\bm{\theta})}
\title{When Composite Likelihood Meets \\ Stochastic Approximation}
\author{\orcidlink{https://orcid.org/0000-0002-6118-6553} Giuseppe Alfonzetti*, Ruggero Bellio \\ {\small Department of Economics and Statistics,}\\ {\small University of Udine, via Tomadini,}\\ {\small 33100, Udine, Italy} \\ {\small *\texttt{giuseppe.alfonzetti@uniud.it}} \and Yunxiao Chen, Irini Moustaki \\ {\small Department of Statistics,}\\ {\small London School of Economics,} \\ {\small
    Houghton Street, WC2A 2AE, London, UK} }
\date{}
\begin{document}
\maketitle
\begin{abstract}
A composite likelihood is an inference function derived by multiplying a set of likelihood components. This approach provides a flexible framework for drawing inferences when the likelihood function of a statistical model is computationally intractable. While composite likelihood has computational advantages, it can still be demanding when dealing with numerous likelihood components and a large sample size. This paper tackles this challenge by employing an approximation of the conventional composite likelihood estimator based on a stochastic optimization procedure. This novel estimator is shown to be asymptotically normally distributed around the true parameter. In particular,
based on the relative divergent rate of the sample size and the number of iterations of the optimization, the variance of the limiting distribution is shown to compound for two sources of uncertainty: the sampling variability of the data and the optimization noise, with the latter depending on the sampling distribution used to construct the stochastic gradients.
The advantages of the proposed framework are illustrated through simulation studies on two working examples: an Ising model for binary data and a gamma frailty model for count data. Finally, a real-data application is presented, showing its effectiveness in a large-scale mental health survey. 
\bigskip

\noindent
Keywords: Ising model, gamma frailty model, pairwise likelihood, stochastic optimization
\end{abstract}

\section{Introduction}

The seminal work of \citet{Besag1974} and the general framework proposed by \citet{lindsay_statistical_1988} have paved the way for the wide adoption of composite likelihood methods as a practical approach for modelling multivariate responses with complex dependence structures (e.g., \citealp{henderson2003, bellio_pairwise_2005, katsikatsou_pairwise_2012, lee_learning_2015}). Such methods replace an intractable likelihood with an inference function constructed by multiplying many lower-dimensional marginal or conditional likelihood components, enabling frequentist estimation when traditional maximum likelihood approaches are infeasible or unattainable; see \citet{varin_overview_2011} for an overview.
However, in settings with large sample sizes and moderate response dimensions, the numerical optimization of the composite likelihood function requires evaluating many likelihood components at each iteration. Thus, it becomes, in turn, computationally unattainable. 

Natural candidates for such settings are stochastic approximations, computationally convenient alternatives to numerical optimization that replace the score used by gradient-based routines with an adequately defined stochastic substitute \citep{robbins_stochastic_1951}. Thanks to their computational convenience, methods based on stochastic gradients (SGs) have quickly gained popularity among practitioners, becoming the standard choice for estimating complex models on large-scale data \citep{bottou_optimization_2018}. While their success is due to the capability of providing computationally affordable point estimates of the parameters of interest, the last decade has seen rising attention to conducting statistical inference with such estimates. Most of the recent developments in this regard build on the seminal work of \citet{ruppert1988efficient} and \citet{polyak1992acceleration}, who first established the asymptotic normality and statistical optimality of averaged stochastic estimators. We identify two challenges that have attracted the interest of researchers in recent years. The first one is the theoretical extension of the asymptotic results of \citet{polyak1992acceleration} to more general and flexible settings than the ones outlined in the original paper. In this regard, \citet{toulis_asymptotic_2017} establish the asymptotic optimality of the Polyak-Ruppert averaged version of their proposed estimator based on implicit stochastic updates; \citet{lee_etal_2022} extend \citet{polyak1992acceleration} results to a functional form; \citet{su2023} relax the original assumptions to allow for globally convex and locally strongly convex objective functions; \citet{wei_etal_2023} include the effect of general averaging schemes in the asymptotic distribution of the averaged estimates; \citet{chen_etal_2024} establish the asymptotic properties of the averaged version of the Kiefer-Wolfowitz algorithm. The second challenge is the online estimation of the uncertainty of stochastic estimates. With few exceptions (e.g. \citealp{chee_etal_2023}),  most recent works in this area focus on Polyak-Ruppert type estimators, including the online bootstrap \citep{fang_etal_2018}, mean-batch estimators \citep{chen_etal_2020,zhu_etal_2023}, the random scaling approach \citep{lee_etal_2022, chen_etal_2023, chen_etal_2024}, and HiGrad \citep{su2023}.

While most of the papers mentioned above explicitly refer to settings with online data (i.e., new observations sampled from the true data-generating distribution), stochastic optimization can also be used offline on an observed dataset, which is the classical setting of frequentist estimation. In the online-data setting, the result in \citet{polyak1992acceleration} and its extensions directly guarantee the asymptotic statistical optimality of the averaged stochastic estimator for the true parameter value. However, in the offline-data scenario, at each iteration, new observations are resampled from the empirical distribution of data, and stochastic estimators converge to the maximum likelihood estimator (MLE) (e.g., \citealp{moulines2011, needell2014}). It follows that, for a given dataset, the inferential procedures based on \citet{polyak1992acceleration} only quantify the variability of stochastic estimators around their target, i.e. the MLE, but neglect the sampling variability of the data. In addition, to our knowledge, the combination of stochastic approximations and composite likelihood inference has not been formally investigated.

Thus, our contribution in the following is two-fold. First, in Section~\ref{sec:clsa}, we show how different sampling schemes for the margins involved in the composite likelihood affect the statistical efficiency of SGs. Second, in Section~\ref{sec:theory}, we extend \citet{polyak1992acceleration} result by establishing the consistency and asymptotic normality of the stochastic estimator around the true parameter in the offline-data setting. In particular,  we show that, according to the relative divergence rate of the sample size and the number of iterations, the variance of the limiting distribution compounds for two sources of uncertainty: the sampling variability of the data and the noise injected by the SGs. 
While intuitive, combining the two sources of variability is technically non-trivial. Allowing the data to be random implies that all the SGs share a common source of variability, which complicates the applicability of central limit theorems for martingale sequences. 
Nevertheless, the asymptotic distribution can still be identified with the sum of exchangeable summands, which allows us to use the central limit theorem outlined in \citet{blum_chernoff_rosenblatt_teicher_1958} to establish the asymptotic normality of the estimator. Furthermore, by taking advantage of the second Bartlett's identity for the single likelihood components, it is possible to lower the noise injected in the optimization with the SGs at a fixed computational cost. In Section~\ref{sec:sims}, we  investigate the established theoretical results with simulation experiments on two working examples, one of which concerns estimating the Ising model based on full-conditional margins and the other concerns estimating a gamma-frailty model based on bivariate margins. Finally, Section~\ref{sec:real} provides a real data application to a United States national health survey\footnote{\href{https://catalog.data.gov/dataset/national-epidemiologic-survey-on-alcohol-and-related-conditions-nesarcwave-1-20012002-and-}{https://catalog.data.gov/dataset/national-epidemiologic-survey-on-alcohol-and-related-conditions-nesarcwave-1-20012002-and-}}.

\section{Composite Likelihood, Stochastic Approximations}
\label{sec:clsa}
\subsection{Composite Likelihood}

Let $\bm{Y} = (Y_1, \dots, Y_p)^\top$ be a $p$-variate random vector that follows a parametric distribution with probability density/mass function $p(\bm{y};\bm{\theta})$ and parameters 
$\bm{\theta} \in \mathbb{R}^d$. 
Let $\{\mathcal{A}_1, \dots, \mathcal{A}_K\}$ be a set of marginal or conditional events with likelihood functions $\mathcal{L}_k(\bm{\theta};\bm{y})\propto p(\bm{y}\in \mathcal{A}_k; \bm{\theta})$.
A composite log-likelihood is obtained by summing the logarithms of the $K$ likelihood objects. Thus, with
 $\bm{y_1}, \dots, \bm{y_n}$ being independent and identically distributed (i.i.d.) realizations of  $\bm{Y}$, inference on $\bm{\theta}$ can be drawn based on the composite log-likelihood function 
\begin{equation}
   \label{eq:cln}
c\ell_n(\bm{\theta})=\sum_{i=1}^n\sum_{k=1}^Kw_k\ell_k(\bm{\theta}; \bm{y}_i),  
\end{equation}
where $\ell_k(\bm{\theta}; \bm{y}_i) = \log \mathcal{L}_k(\bm{\theta};\bm{y}_i)$ is usually referred to as the $k$-th log-likelihood component or sub-log-likelihood and $\{w_1,\dots,w_K\}$ is a set of weights to be defined depending on the model being estimated.
The Composite Likelihood Estimator (CLE) is given by  
\begin{equation}\label{eq:cle}
\hat{\bm{\theta}} = \text{argmin}_{\bm{\theta}} ~ \{- c\ell_n(\bm{\theta})\}.
\end{equation}
Typically, the CLE\ is used when the sub-log-likelihoods  have  simple forms while the joint density function $p(\bm{y};\bm{\theta})$ is analytically intractable. We give two illustrative examples below. 

\begin{example}
\label{ex:graph}
\label{sec:ccl}
Let $\bm{Y}$ be a binary vector with $Y_j \in \{0, 1\}$ following an Ising model (\citealp{ising1924}). Under this model,  

	$$p(\bm{y}; \bm{\theta}) =  \exp\left\{\sum_{j = 1}^p \beta_{j0}y_j + \sum_{j< j'}\beta_{jj'}y_jy_{j'} \right\}/{Z(\bm{\theta})},$$

where 
   $ Z(\bm{\theta}) = \sum_{y\in\{0,1\}^p}\exp\left\{\sum_{j = 1}^p \beta_{j0}y_j + \sum_{j< j'}\beta_{jj'}y_jy_{j'} \right\}$
is the so-called partition function which is needed to guarantee 
$p(\bm{y};\bm{\theta})$ to be a proper probability mass function. In this model, the parameters of 
interest are $\bm{\theta} = (\beta_{i0}, \beta_{jj'}: i =1, \dots, p,  1 \leq j < j' \leq p)^\top$, whose dimension is $d =p+ p(p-1)/2$.
As $Z(\bm{\theta})$ involves a summation over all possible binary vectors, the complexity of computing $Z(\bm{\theta})$ grows exponentially with $p$. Thus, the likelihood function quickly becomes intractable when $p$ is large. 
To draw inference under the Ising model, \citet{Besag1974} proposed a composite likelihood estimator that considers $K = p$ component likelihood terms
$$\mathcal{L}_j(\bm{\theta};\bm{y}) = {\exp(y_{j}(\beta_{j0}+\sum_{j'}\beta_{jj'}y_{j'}))}/{(1+\exp(\beta_{j0}+\sum_{j'}\beta_{jj'}y_{j'}))}, j =1, \dots, p,$$ which are derived from the conditional distribution of $Y_j$ given the rest of the entries of $Y$. The composite log-likelihood function 
for a random sample of size $n$ can then be written as 
$c\ell_n(\bm{\theta})=\sum_{i=1}^n\sum_{j=1}^p   y_{ij}\big(\beta_{j0}+\sum_{j'\neq j}\beta_{jj'}y_{ij'}) - \log\big\{1+\exp\big(\beta_{j0}+\sum_{j'\neq j}\beta_{jj'}y_{ij'} \big)\big\}.$
\end{example}

\begin{example}
\label{ex:pl}
Consider the correlated gamma frailty model proposed in \citet{henderson2003}. The authors impose an autoregressive correlation structure to model the underlying gamma process. In such a setting, it is convenient to consider only pairs within a certain time lag, drastically lowering the model's estimation cost. For this reason, we consider an exchangeable correlation structure such that no pairs can be ignored a priori.  For illustration purposes, we consider a simplified version without covariates.
Let $\bm{Y}$ be a multivariate count vector of dimension $p$. Its generic element, $Y_j \in\mathbb{N}$ for $j = 1,\dots,p$, is distributed as
 $   Y_{j}|V_{j}\sim \text{Poisson}\left\{V_{j}\exp(\lambda_{j})\right\}, \text{ for } j = 1, \dots p,$
where $\lambda_{j}$ the $j$-th baseline rate.
The $p$-dimensional frailty term $\bm{V}$ has unidimensional margins distributed as $V_j\sim\text{Gamma}(\xi^{-1},\xi^{-1})$, and correlation matrix $\bm{C}$, with generic element $C_{jj'}=\rho$, for $0\leq\rho\leq 1$. The interest is in estimating $\bm{\theta} = (\bm{\lambda}, \rho, \xi)$, with $\bm{\lambda}=(\lambda_{1},\dots,\lambda_{p})^\top$, which has dimension $d = p+2$. 
The density function of the model can be written as
\begin{align}
    p(Y_1=y_1,\dots, Y_p=y_p;\bm{\theta}) &= \left(\prod_{j=1}^p\frac{u_j^{y_j}}{y_j!}\right)\int_{\mathbb{R}^p} \left(\prod_{j=1}^pV_{j}^{y_j}\right)\exp(-\bm{u}^\top \bm{V} ) d\bm{V} \notag\\
    \label{eq:pl:mvd}
    &= (-1)^{\sum_j y_j} \left(\prod_{j=1}^p\frac{u_j^{n_j}}{n_j!}\right)\frac{\partial^{(\sum_jy_j)}L(\bm{u})}{\partial^{y_1}u_1,\dots,\partial^{y_p}u_p},
\end{align}
where $u_j= \exp(\lambda_{j0})$, $\bm{u}=(u_1, \dots, u_p)^\top$, $\bm{V}=(V_1,\dots,V_p)^\top$ and $L(\bm{u}) = E_V\left\{\exp(-\bm{u}^\top \bm{V})\right\}$. The computational challenge is that the random number of derivatives involved in \eqref{eq:pl:mvd}, $\sum_{j}y_j$, can be too large to handle even in small $p$ settings.
\citet{henderson2003} substitute \eqref{eq:pl:mvd} with the composition of bivariate log-margins $\ell_{jj'}(\bm{\theta};y_j, y_{j'})$, computed via
\begin{align}
\label{eq:pl:pair}
    \ell_{jj'}(\bm{\theta};y_j, y_{j'}) &= y_j\log u_j + y_{j'}\log u_{j'} -\log (y_{j}!) - \log(y_{j'}!) + \notag\\
    &+\sum_{s = 0}^{m_2-1}\log(1+s\xi) + y_j\log D_j + y_{j'}\log D_{j'} - (y_j+y_{j'} + \xi^{-1})\log\Delta + \notag\\
    &+ \log\sum_{s=0}^{m_1} \left[(-1)^{s}\binom{m_1}{s}\binom{m_2}{s}s!\left\{\prod_{s'=m_2}^{m_1+m_2-s-1}(1+s'\xi)\right\}\xi^sf^s\right],
\end{align}
with $m_1 = \min(y_j, y_{j'})$, $m_2 = \max(y_j, y_{j'})$, $\Delta = 1+\xi u_j +\xi u_{j'} + \xi^{2} u_j u_{j'}(1-\rho^{|j-j'|})$, $D_j = 1 + \xi u_{j'}(1-\rho^{|j-j'|})$ and $f = \frac{\Delta(1-\rho)}{D_jD_{j'}}$. Therefore, the composite log-likelihood for a random sample of size $n$ can then be written as
$
	c\ell_n(\bm{\theta}) = \sum_{i=1}^n\sum_{j<j'}\ell_{jj'}(\bm{\theta};y_{ij}, y_{ij'}).
$
\end{example}
\vspace{.5cm}
\noindent
When the parametric model is correctly specified, 
the CLE is consistent and asymptotically normal \citep{lindsay_statistical_1988}. Let ${\bm{\theta}}^*$ denote the true parameter value. Then $\hat {\bm{\theta}}$ converges in probability to ${\bm{\theta}}^*$, and further 
\begin{equation}
\label{eq:cl:asyd}
    \sqrt{n}(\hat {\bm{\theta}} -{\bm{\theta}}^*)\xrightarrow{\text{d}} \mathcal{N}_p\left(0, \bm{H}^{-1}\bm{J}\bm{H}^{-1}\right),
\end{equation}
where 
$\bm{H} = E_{\bm Y}\{-\nabla^2c\ell_n({\bm{\theta}}^*)/n\}$ and $\bm{J} = \text{Var}_{\bm Y}\{\nabla c\ell_n({\bm{\theta}}^*)/n\}$. Note that such asymptotic results are obtained under a classical asymptotic regime where
$p$ is fixed while $n$ diverges. In the case of \eqref{eq:cln}, such an assumption implies considering both the number of contributions, $K$, and the parameter space, $d$, as fixed. 

\subsection{Proposed Method}
As discussed in Section~\ref{subsec:comp}, it can be computationally intensive to obtain a numerical solution to \eqref{eq:cle} that can be used for statistical inference when large sample sizes and numerous sub-likelihood components are involved. To leverage the trade-off between statistical and computational efficiencies, we propose to use an algorithm based on SGs to get an approximation of $\hat{\bm{\theta}}$. First, consider that the gradient of $-c\ell_n(\bm{\theta})$ with respect to $\bm{\theta}$ is simply given by  $-\sum_{i=1}^n\sum_{k=1}^K \nabla\ell_k(\bm{\theta}; \bm{y}_i).$ Considering its double-sum structure and the proper-likelihood nature of each $\ell_k(\bm{\theta};\bm{y}_i)$, a SG of $-c\ell_n(\bm{\theta})$ can be constructed as 
\begin{equation}
\label{eq:csg}
    S(\bm{\theta}; \bm{W}) = -\frac{1}{\gamma_1}\sum_{i=1}^n\sum_{k=1}^K w_{ik}\nabla\ell_k(\bm{\theta}; \bm{y}_i).
\end{equation}
Here, $\bm{W}=(w_{ik})_{n\times K}$ is a random matrix following a joint distribution $\mathcal{P}$, under which $w_{ik} \in \{0, 1\}$ and $P(w_{ik} = 1) = \gamma_1$. Differently from (\ref{eq:cln}), we let the weights change across observations and iterations. Such broader specification introduces greater flexibility in managing the variability of the SG by allowing for different choices of $\mathcal{P}$.

As summarized in Algorithm~\ref{algo:csgd} below, the proposed method iterates between updating $\bm{\theta}$ and constructing an SG under the current value of $\bm{\theta}$. We anchor the number of iterations to the sample size for theoretical reasons that will be shown in detail in Section~\ref{sec:theory}. For the moment, just let the notation $T_n$ refer to the number of iterations performed, where the dependence on $n$ is such that $T_n\xrightarrow{}\infty$ as $n\xrightarrow{}\infty$.

\begin{algorithm}[h]
\caption{Composite Stochastic Gradient Descent}
\begin{algorithmic}[1]
\State \text{\bf Input: }{$\bm{y}$,$\mathcal{P}$, $\bm{\theta}_0$, $\eta_0$, $T_n$, $B$.}
\For{$t$ \text{\bf in} $1,\dots, T_n$}
    \State \text{\bf Sampling Step:} Draw $\bm{W}_t$ from $\mathcal{P}$;
    \State \text{\bf Approximation Step:} Construct a stochastic gradient $\bm{S}_t = S(\bm{\theta}_{t-1}; \bm{W}_t)$;
    \State \text{\bf Update Step:} Update the parameter estimate via $\bm{\theta}_{t} = \bm{\theta}_{t-1}-\frac{\eta_{t}}{n} \bm{S}_t$;
\EndFor
\State \text{\bf Trajectories Averaging:} Compute $\bar{\bm{\theta}}_\mathcal{P}=\frac{1}{T_n-B}\sum_{t=B+1}^{T_n}\bm{\theta}_{t}$;
\State \text{\bf Output:}{ Return $\bar{\bm{\theta}}_\mathcal{P}$.}
\end{algorithmic}
\label{algo:csgd}
\end{algorithm}

In each iteration, the Sampling Step injects some randomness into the procedure by drawing $\bm{W}_t$ from $\mathcal{P}$. Note that $\mathcal{P}$ must be defined such that the SG computed during the Approximation Step, $\bm{S}_t$, is computationally cheaper than the full gradient $-\nabla c\ell_n(\bm{\theta}_{t-1})$, but still unbiased, i.e. $E_{\bm W}\{\bm{S}_t\} = -\nabla c\ell_n(\bm{\theta}_{t-1})$.
In other words, although the descent direction identified by $\bm{S}_t$ is noisy due to the randomness of $\bm{W}$, it still recovers the exact negative gradient on average.
The computational convenience of $\bm{S}_t$ is the key advantage of Algorithm~\ref{algo:csgd} and, together with $\mathcal{P}$, it controls the trade-off between computational and statistical efficiencies.
At the end of each iteration, the estimates are updated via a Robbins-Monro step \citep{robbins_stochastic_1951}, where $\eta_{t}$ is the stepsize at the $t$-th iteration computed given an initial value $\eta_0$ and a suitable decreasing schedule, formalized in Assumption~\ref{ass:stepsize} in Section~\ref{sec:theory}. The rescaling by $1/n$ only affects the initial value $\eta_0$ and not the scheduling per se. It serves to standardize the SG, such that $n^{-1}\bm{S}_t$ is of magnitude $O(1)$, which is used in the proof of Proposition~\ref{prop:conv}. It is worth remarking that the Update Step can be generalized to a second-order update, where the stepsize is substituted by a $d\times d$ matrix approximating $\bm{H}^{-1}$. For ease of exposition, we limit the discussion to one-dimensional stepsizes. Finally, after completing $T_n$ iterations, the output of the algorithm, $\bar{\bm{\theta}}_\mathcal{P}$, is computed by averaging the stochastic estimates along their trajectories \citep{ruppert1988efficient,polyak1992acceleration}. 

In practice, it is often useful to account for a burn-in period, $B$, to avoid including estimates too close to the starting point $\bm{\theta}_0$ in the computation of $\bar{\bm{\theta}}_\mathcal{P}$.

It is straightforward to notice that Algorithm~\ref{algo:csgd} closely follows the averaging SG descent outlined in \citet{polyak1992acceleration}, and it inherits, in fact, its theoretical properties in approximating $\hat{\bm{\theta}}$. However, the theoretical framework discussed in Section~\ref{sec:theory} allows the output of Algorithm~\ref{algo:csgd} to be directly used to draw inference on ${\bm{\theta}}^*$.
 
A further advantage of our proposal is that, by specifying $\bm{S}_t$ as in \eqref{eq:csg}, Algorithm~\ref{algo:csgd} gives the user the freedom to leverage the peculiar structure of the composite likelihood to improve the efficiency of the approximation by adequately choosing $\mathcal{P}$. To this end, let us introduce three possible choices for the distribution of the weights, as described in Definitions \ref{def:simple:w} through \ref{def:hyper:w}.
\begin{definition} 
\label{def:simple:w}
Let $\mathcal{P}_1$ be the joint distribution of $\bm{W}$ such that  $\bm{W} = \bm{D}\bm{1}_{nK}$, where $\bm{1}_{nK}$ is a $n\times K$ matrix  of ones and $\bm{D}$ is a $n\times n$ diagonal matrix, with diagonal distributed as $\text{Multinomial}\left\{1,(1/n,\dots,1/n)\right\}$.
\end{definition}
\begin{definition} 
\label{def:be:w}
Let $\mathcal{P}_2$ be the joint distribution of $\bm{W}$ such that $W_{i,k}\stackrel{i.i.d.}{\sim}\text{Bernoulli}\left(1/n\right)$ for $i=1,\dots,n$ and $k=1,\dots,K$.
\end{definition}
\begin{definition} 
\label{def:hyper:w}
Let $\mathcal{P}_3$ be the joint distribution of $\bm{W}$ such that $W_{i,k} = v_{iK-K+k}$, where $\bm{v}$ is a $nK$-dimensional random vector following a multivariate hypergeometric distribution with $K$ draws over $nK$ categories of dimension $1$ and $v_j$ is its $j$-th element. 
\end{definition}

All three definitions lead to drawing, on average, $K$ weights equal to $1$ and the remaining $nK-K$ equal to zero. The implied per-iteration complexity is $O(K)$, independent of $n$, and thus particularly suitable for large-scale applications. 
However, according to the sampling scheme chosen, a different dependence layout is induced on the cells of $\bm{W}$. In Section~\ref{sec:theory}, we show how this noise structure affects the asymptotic efficiency of the stochastic estimates.

We discuss the three proposed sampling schemes in more detail. A weight matrix $\bm{W}$ sampled according to Definition~\ref{def:simple:w} is constrained to have all elements equal to zero, apart from a single row filled with ones. With such weights, $\bm{S}_t$ evaluates the gradient on a single observation selected randomly from $\{1,\dots,n\}$. We refer to this construction as \textit{standard} SG to stress its widespread adoption at the core of many stochastic algorithms. Note that using such a sampling scheme, we are ignoring the double sum structure of (\ref{eq:cln}) since $\mathcal{P}_1$ constraints the $K$ selected sub-likelihood component to belong to the same observation.

Nevertheless, \eqref{eq:csg} is very flexible in defining the SG and allows for different choices of $\mathcal{P}$. Consider $\mathcal{P}_2$ as described by Definition~\ref{def:be:w}. All the elements of the matrix are now independent and identically distributed as Bernoulli random variables with proportion parameter $1/n$.
It means that, at each iteration, $K$ sub-likelihood components are selected on average by the weighting matrix. Therefore, the complexity of the approximation matches the standard SG one. However, the proportion parameter in Definition~\ref{def:be:w} can be set as low as $(nK)^{-1}$, implying an iteration complexity $O(1)$, which is unattainable using Definition~\ref{def:simple:w}. Regardless, for comparison purposes, we stick to the proportion parameter $1/n$.
Note that the structure of the noise injected by $\mathcal{P}_2$ is very different from the one implied by $\mathcal{P}_1$. While the $K$ components drawn by $\mathcal{P}_1$ share a very specific covariance stemming from the dependence among the summands of $c\ell_n(\bm{\theta})$, $\mathcal{P}_2$ completely breaks this structure by independently selecting sub-likelihoods possibly belonging to independent observations.
Finally, consider the sampling scheme $\mathcal{P}_3$. It can be seen as a random scramble of the vectorization of $\bm{W}$, where only the elements in the first $K$ positions are retained.  Like $\mathcal{P}_2$, the complexity per iteration can be lowered to $O(1)$ by decreasing the number of components retained per iteration.
However, in this case, the weights are not completely independent since, given the fixed number of components drawn, a weak negative correlation is induced among the elements of $\bm{W}$.
In Section~\ref{sec:theory}, we show how $\mathcal{P}_2$ and $\mathcal{P}_3$ improve the efficiency of $\bar{\bm{\theta}}_\mathcal{P}$ while maintaining the same computational cost as the standard SG. 

\subsection{Comparison with Gradient Descent}\label{subsec:comp}
Before investigating the theoretical properties of Algorithm~\ref{algo:csgd} in the next section, let us consider solving (\ref{eq:cle}) by a gradient descent algorithm with fixed stepsize to compute $\tilde{\bm{\theta}}$, a numerical estimate of ${\bm{\theta}}^*$. Given the subtleness of the notation, see Table~\ref{tab:thetas} as a reference for the symbols used.
\begin{table}[h]
\footnotesize
\centering
\caption{\label{tab:thetas} Notation summary for $\bm{\theta}$ values.}
\begin{tabular}{ p{1in}p{1in}p{1in}p{1in}} 
\toprule
${\bm{\theta}}^*$ & $\hat{\bm{\theta}}$  & $\tilde{\bm{\theta}}$ & $\bar{\bm{\theta}}_{\mathcal{P}}$\\
\midrule
 True parameter. Not observed. &
 CLE. Usually not available analytically.
 & Numerical approximation of $\hat{\bm{\theta}}$ used as estimate of ${\bm{\theta}}^*$.
 & Stochastic estimator of ${\bm{\theta}}^*$ based on the chosen $\mathcal{P}$.\\
 \bottomrule
\end{tabular}
 
\end{table}
To describe the relative divergence rate of two positive sequences $a_n$ and $b_n$, we write $a_n = o(b_n)$ if $\lim_{n\rightarrow\infty} a_n/b_n = 0$, $a_n = \omega(b_n)$ if $\lim_{n\rightarrow\infty} a_n/b_n = \infty$, while $a_n = \Theta(b_n)$ if $\lim_{n\rightarrow\infty} a_n/b_n = \gamma$ with $0<\gamma<\infty$.

Given a fixed starting point $\bm{\theta}_{0}$ and assuming $c\ell_n(\bm{\theta})$ to be strongly convex and its gradient to be Lipschitz continuous with constant $L>0$, then, at each iteration $t$, the numerical procedure updates via

	$\bm{\theta}_{t} = \bm{\theta}_{t-1} + \eta\nabla c\ell_n(\bm{\theta}_{t-1}),\quad t=1\,\dots,T_n,$
where $0<\eta<2/L$ is a fixed stepsize. The final parameter estimate is taken as the output of the algorithm, namely $\tilde{\bm{\theta}} = \bm{\theta}_{T_n}$. This gradient-based update highlights that the critical quantity to be computed at each iteration is $\nabla c\ell_n(\bm{\theta})$, which costs $O(nK)$ operations. Furthermore, to draw statistical inference with the numerical solution $\tilde{\bm{\theta}}$, one requires $\tilde{\bm{\theta}}$ to have the same limiting distribution as $\hat{\bm{\theta}}$, which implies that 
$\tilde{\bm{\theta}} - \hat{\bm{\theta}} = o(1/\sqrt{n})$.  
Given the linear convergence rate of gradient descent (see, for example, Theorem 2 in Section 1.4.2 of \citet{polyakbook}), the total number of iterations needs to satisfy $T_n =\omega( -\frac{1}{2}\log_c n)$ with $c\in[0,1)$, hence $T_n = \omega(\log n)$. In Table~\ref{tab:efficiency}, we compare the total computational budget, $\mathcal{B}$, and asymptotic variance of $\bar{\bm{\theta}}_\mathcal{P}$ and $\tilde{\bm{\theta}}$ according to the relative divergence rate of $T_n$ and $n$.

\begin{table}[ht]
\footnotesize
\centering
\caption{\label{tab:efficiency} Computational and statistical efficiency comparison between gradient descent and Algorithm~\ref{algo:csgd} with $\mathcal{P}$ chosen according to Definitions \ref{def:simple:w}, \ref{def:be:w} and \ref{def:hyper:w}.}
\begin{tabular}{  l c c c c } 
\toprule
&Per iteration & Number of  & Total                & Asymptotic\\
&complexity    & iterations & complexity & variance\\

\cmidrule[0.4pt](r{0.125em}){1-1}%
\cmidrule[0.4pt](lr{0.125em}){2-2}%
\cmidrule[0.4pt](lr{0.125em}){3-3}%
\cmidrule[0.4pt](lr{0.125em}){4-4}%
\cmidrule[0.4pt](lr{0.125em}){5-5}%

GD & $O(nK)$ & $T_n= \omega (\log n) $ & $\mathcal{B}= \omega(n K\log n)$ &$n^{-1}{\bm{H}}^{-1}{\bm{J}}{\bm{H}}^{-1}$\\
 \midrule
  \multirow{3}{*}{$\mathcal{P}_1$} & \multirow{3}{*}{$O(K)$} & $T_n = \omega(n) $  & $\mathcal{B}=\omega(nK)$ & $n^{-1}{\bm{H}}^{-1}{\bm{J}}{\bm{H}}^{-1}$ \\
   & & $T_n = o(n) $ & $\mathcal{B} = o(nK)$ & $T_n^{-1}{\bm{H}}^{-1}{\bm{J}}{\bm{H}}^{-1}$\\
   & & $T_n = \Theta(n)$ & $\mathcal{B}=\Theta(nK)$ & $(T_n^{-1}+n^{-1}){\bm{H}}^{-1}{\bm{J}}{\bm{H}}^{-1}$\\

 \midrule
   \multirow{3}{*}{$\mathcal{P}_2$} & \multirow{3}{*}{$O(K)$} & $T_n = \omega(n) $  & $\mathcal{B}=\omega(nK)$ & $n^{-1}{\bm{H}}^{-1}{\bm{J}}{\bm{H}}^{-1}$ \\
   & & $T_n = o(n) $ & $\mathcal{B} = o(nK)$ & $T_n^{-1}{\bm{H}}^{-1}$\\
   & & $T_n = \Theta(n)$ & $\mathcal{B}=\Theta(nK)$ & $n^{-1}{\bm{H}}^{-1}{\bm{J}}{\bm{H}}^{-1}+T_n^{-1}{\bm{H}}^{-1}$\\
\midrule
   \multirow{3}{*}{$\mathcal{P}_2$} & \multirow{3}{*}{$O(K)$} & $T_n = \omega(n) $  & $\mathcal{B}=\omega(nK)$ & $n^{-1}{\bm{H}}^{-1}{\bm{J}}{\bm{H}}^{-1}$ \\
   & & $T_n = o(n) $ & $\mathcal{B} = o(nK)$ & $T_n^{-1}{\bm{H}}^{-1}$\\
   & & $T_n = \Theta(n)$ & $\mathcal{B}=\Theta(nK)$ & $n^{-1}{\bm{H}}^{-1}{\bm{J}}{\bm{H}}^{-1}+T_n^{-1}{\bm{H}}^{-1}$\\

 \bottomrule
\end{tabular}
 
\end{table}

First, to achieve the asymptotic efficiency in \eqref{eq:cl:asyd}, Algorithm~\ref{algo:csgd} needs many more iterations compared to gradient descent.
However, given the extreme computational affordability of its iterations, the total budget needed to reach such an asymptotic behavior is lower than what is needed by gradient descent, whatever the choice of $\mathcal{P}$. While theoretically appealing, Algorithm~\ref{algo:csgd} might need to tune $\eta_0$ adequately in practice, like most stochastic optimization methods, which increases its computational cost. Furthermore, numerical optimization is typically carried out with Newton and quasi-Newton algorithms. They are well-known to be more efficient than gradient descent in practical applications. Still, they are particularly sensitive to the dimension of the parameter space since they require to compute, or approximate, the inverse of the $d\times d$ Hessian matrix. From a practical point of view, on problems of moderate dimensions, it might still be preferable to use numerical optimizers to take advantage of the asymptotic behavior of $\hat{\bm{\theta}}$.

The real advantage of Algorithm~\ref{algo:csgd} shows up when $O(nK)$ is the maximum computational budget available, and we want to quantify uncertainty around our estimates. In such a scenario, using the numerical solution $\tilde{\bm{\theta}}$ can be infeasible since the computational inaccuracy remaining might be too large. Regardless, using $\bar{\bm{\theta}}_\mathcal{P}$ is a viable option, and the reason is rather intuitive. Although subtle, when running numerical optimization, one does not use $\tilde{\bm{\theta}}$ directly to draw inference on ${\bm{\theta}}^*$. Instead, the requirement for $\tilde{\bm{\theta}}$ is to be close enough to $\hat{\bm{\theta}}$ in order to safely replace $\hat{\bm{\theta}}$ with $\tilde {\bm{\theta}}$ in (\ref{eq:cl:asyd}). We argue that this is not the case when using stochastic optimizers since $\bar{\bm{\theta}}_\mathcal{P}-\hat{\bm{\theta}}$ is a random variable itself, with distribution depending on $\mathcal{P}$. Hence, quantifying the noise injected by $\bm{W}$ in the optimization makes it possible to directly use $\bar{\bm{\theta}}_\mathcal{P}$ for inference on ${\bm{\theta}}^*$, without strict requirements on its distance from $\hat{\bm{\theta}}$. Fixing $\mathcal{B}=O(nK)$  implies the running length of the algorithm to be either $T_n = o(n)$ or $T_n = \Theta(n)$. With such divergence rates, the choice of $\mathcal{P}$ plays a central role in the asymptotic variance of $\bar{\bm{\theta}}_\mathcal{P}$ since the noise in the stochastic approximation is still non-negligible. Table~\ref{tab:efficiency} shows that, in those cases, relying on different choices of $\mathcal{P}$ is not equivalent. Estimates based on $\mathcal{P}_2$ and $\mathcal{P}_3$ enjoy a lower asymptotic variance than $\mathcal{P}_1$, as will become more apparent in the next section.
Furthermore, Section~\ref{sec:theory} also establishes the asymptotic distribution of $\bar{\bm{\theta}}_\mathcal{P}$ around ${\bm{\theta}}^*$ under some mild assumptions on the choice of $\mathcal{P}$.
\subsection{Implementation Remarks}
\label{sec:imprem}
From an implementation perspective, Algorithm~\ref{algo:csgd} allows for some practical expedients to enhance the computational performance. First, the computation of $\bm{S}_t$ at each iteration can be easily parallelized by taking advantage of the double-sum structure of $\nabla c\ell_n(\bm{\theta})$, assigning the gradient computation for a different sub-likelihood component to each available thread. 

Second, as typical of stochastic algorithms, not all the data are needed at each iteration, such that memory resources can be saved by carefully passing only the portion of the data needed to compute $\bm{S}_t$ at a given $t$. In this regard, $\mathcal{P}_1$ is the cheaper choice memory-wise since it fixes the memory cost to $O(K)$ no matter the structure of $\nabla c\ell_n(\bm{\theta})$. By choosing $\mathcal{P}_2$ or $\mathcal{P}_3$, one typically needs to store data coming from multiple observations, such that the maximum amount of memory necessary strictly depends on the model. In the case of Example~\ref{ex:graph}, each sub-log-likelihood component has a memory cost $O(K)$. Thus, since $\mathcal{P}_2$ and $\mathcal{P}_3$ can draw components potentially belonging to $K$ different observations, their maximum memory cost per iteration is $O(K^2)$. If we consider Example \ref{ex:pl}, each component accesses only $O(1)$ data instead. Consequently, by drawing $K$ components on different observations, both $\mathcal{P}_2$ and $\mathcal{P}_3$ have a maximum per-iteration memory cost of $O(K)$.

Third, the Sampling Step can be recycled across iterations to save computational resources. Namely, with $\mathcal{P}_1$, one can scramble the vector $(1,\dots, n)$ once and then use each of the first $l$ elements as indices of the observations drawn in the following $l$-dimensional window of iterations. Intuitively, as long as $l$ is low enough, the dependence induced by the recycling among iterations within the same window is negligible, such that they can still be considered independent. Thus, recycling trims the cost of the Sampling Step by a factor of $l$. The same approach can be implemented when using $\mathcal{P}_3$ by scrambling the vector $(1,\dots, nK)$ and allocating the first $l$ $K$-dimensional sequences of indices to the subsequent $l$ iterations. Unfortunately, recycling is impossible when $\mathcal{P}_2$ is chosen since the number of components drawn per iteration is not deterministic.
\section{Theoretical Results}
\label{sec:theory}
In what follows, we establish the asymptotic properties of the proposed estimator $\bar{\bm{\theta}}_\mathcal{P}$. Proposition~\ref{prop:conv} states the convergence of $\bar{\bm{\theta}}_\mathcal{P}$ to ${\bm{\theta}}^*$, while Theorem~\ref{th:main} and Corollary~\ref{cor:comp} provide novel theoretical results describing the asymptotic distribution of $\bar{\bm{\theta}}_\mathcal{P}$ under different choices of $T_n$ and $\mathcal{P}$. The following assumptions combine classical domination conditions on the log-likelihood components (e.g.  \citealp{white_maximum_1982}), with the flexible setting outlined in \citet{su2023} for globally convex and locally strongly convex objective functions.
\begin{assumption}
\label{ass:regularity}
      For $k=1,\dots,K$,  $\ell_k(\bm{\theta};\bm{y})$  exists for every $\bm{\theta}$, is continuous in $\bm{\theta}$ for all $\bm{y}\in\mathcal{Y}$ and $\lvert \ell_k(\bm{\theta};\bm{y})|$ is dominated by a function integrable with respect to the distribution of $\bm{Y}$. The vector $\bm{\theta}^*$ is the unique solution to $E_{\bm{Y}}\{\sum_{k=1}^K\ell_k(\bm{\theta};\bm{Y})\}=0$.
\end{assumption}
\begin{assumption}
    \label{ass:domination}
    Each sub log-likelihood $\ell_k({\bm{\theta}};\bm{y})$ is differentiable in $\bm{\theta}$, and its gradient continuous, for all $\bm{y}\in\mathcal{Y}$. Furthermore, with $\theta_r$ being the generic $r$-th element of $\bm{\theta}$, the quantity $|\partial\ell_k(\bm{\theta};y)/\partial \theta_{r}|$ is dominated by a function integrable with respect to $p(\bm{Y};\bm{\theta}^*)$ up to the fourth power.
    In addition, $|\partial^2\ell_k(\bm{\theta};y)/\partial \theta_{r}\partial \theta_{r'}|$ exists in a $\delta$-neighbourhood of $\bm{\theta}^*$, i.e. for some $\delta>0$ such that $\lVert \bm{\theta}-\bm{\theta}^*\rVert<\delta$, where it is continuous in $\bm{\theta}$ and dominated by a function integrable with respect to $p(\bm{Y};\bm{\theta}^*)$.
\end{assumption} 
\begin{assumption}
\label{ass:localsc}
    The expected sub log-likelihoods are concave in $\bm{\theta}$ and their gradients satisfy  $\rVert \nabla E_{\bm{Y}}\{\ell_k(\bm{\theta}', \bm{Y})\} - \nabla E_{\bm{Y}}\{\ell_k(\bm{\theta}, \bm{Y})\}\lVert \leq L_{1}\rVert \bm{\theta} - \bm{\theta}' \lVert$, for $L_{1} > 0$, $k=1,\dots,K$ and $\bm{\theta},\bm{\theta}'\in\mathbb{R}^d$. Furthermore,
    for some $L_2,\delta>0$, the expected Hessian of each sub log-likelihood verifies
    $\lVert \nabla^2E_{\bm{Y}}\{\ell_k(\bm{\theta},\bm{Y})\}-\nabla^2E_{\bm{Y}}\{ \ell_k(\bm{\theta}^*,\bm{Y})\}\rVert \leq L_{2}\lVert\bm{\theta}-\bm{\theta}^*\rVert$
    with $\lVert\bm{\theta}-\bm{\theta}^*\rVert<\delta$. Additionally $\sum_{k=1}^K\nabla^2E_{\bm{Y}}\{\ell_k(\bm{\theta}^*,\bm{Y})\}$ is negative-definite.
\end{assumption}
\begin{assumption}
    \label{ass:W}
    The sampling scheme $\mathcal{P}$ is such that $E\{W_{i,k}\}=\gamma_1$, with $0<\gamma_1<1$, for $i=1,\dots,n$ and $k=1,\dots,K$. Additionally, $\lim_{n\xrightarrow{}\infty} n\gamma_1 > 0$,
        $E\left\{W_{i,k}W_{i,k'}\right\}, E\left\{W_{i,k}W_{i,k'}W_{i,k''}\right\}$, and $E\left\{W_{i,k}W_{i,k'}W_{i,k''}W_{i,k'''}\right\}$ are of order $O(\gamma_1)$, and 
        $E\left\{W_{i,k}W_{j,k'}\right\}, E\left\{W_{i,k}W_{i,k'}W_{j,k''}\right\}$, and $E\left\{W_{i,k}W_{i,k'}W_{j,k''}W_{j,k'''}\right\}$ are of order $O(\gamma_1^2)$,
    with $i\neq j$ and $k,k',k'',k'''=1,\dots,K$.
    For notation aims, let $E\{W_{i,k}W_{i,k'}\}=\gamma_2$ if $k\neq k'$.
\end{assumption}
\begin{assumption}
    \label{ass:stepsize}
    Given $\eta_0>0$, the stepsize scheduling is chosen as $\eta_t = \eta_0 t^{-c}  $ with $1/2 < c < 1$, implying $\lim_{n\xrightarrow{}\infty}\sum_t^{T_n}\eta_t=\infty$,  $\lim_{n\xrightarrow{}\infty}\sum_t^{T_n}\eta_t/\sqrt{t}<\infty$, $\lim_{n\xrightarrow{}\infty}\sum_t^{T_n}\eta_t^2<\infty$. 
\end{assumption}
Assumption~\ref{ass:regularity} collects the regularity conditions on the behavior of the likelihood function that guarantee the existence and uniqueness of $\bm{\theta}^*$ (see \citealp{varin2005}). 
Assumption~\ref{ass:domination} allows exchanging the order of integration and differentiation when working with sub-log-likelihood objects.
In addition, it guarantees the existence of $E_{\bm{Y}}\{\nabla^2\ell_k(\bm{\theta}; \bm{Y})\}$ on $\lVert \bm{\theta}-\bm{\theta}^*\rVert<\delta$, with $\delta$ small enough. Assumption~\ref{ass:localsc}, imposes Lipschitz regularity for $E_{\bm{Y}}\{\nabla \ell_k(\bm{\theta})\}$ on all $\mathbb{R}^d$, and for $E_{\bm{Y}}\{\nabla^2\ell_k(\bm{\theta})\}$ on the ball $\lVert\bm{\theta}-\bm{\theta}^*\rVert<\delta$. Furthermore, it outlines the local strong concavity of the composition of the expected composite log-likelihood in a neighborhood of $\bm{\theta}^*$, thus guaranteeing its local identifiability. Together with Assumption~\ref{ass:domination}, it allows recovering the conditions on the objective function required by \citet{su2023}.
Assumption~\ref{ass:W} requires all the weights to share the same expected value, $\gamma_1$, which does not decay faster than $1/n$. Furthermore, it assumes the cross-products across the weights of different sub log-likelihoods to be bounded up to a constant by $\gamma_1$, when referring to the same observations, and by $\gamma_1^2$, when referring to two different ones. Together with Assumption~\ref{ass:domination}, such bounds control the stochastic behavior of the SGs constructed via \eqref{eq:csg}. Assumption~\ref{ass:W} is more general than what is needed for $\mathcal{P}_1$, $\mathcal{P}_2$ and $\mathcal{P}_3$ and potentially allows for different sampling schemes. Note that $\gamma_1$ can be fixed across different choices of $\mathcal{P}$ to match the computational cost of constructing~\eqref{eq:csg}. However, for a given $\gamma_1$, different sampling schemes lead to different $\gamma_2$, which affects the correlation across weights belonging to the same observation. Theorem~\ref{th:main} shows that the pair $\gamma_1$,$\gamma_2$ is sufficient to explain the statistical efficiency implied by different choices of $\mathcal{P}$. Assumption~\ref{ass:stepsize} is a standard requirement on the decreasing scheduling of the stepsize (e.g. \citealp{polyak1992acceleration}).

With Assumptions~\ref{ass:regularity}-\ref{ass:stepsize}, it is straightforward to adapt \citet{su2023} results to prove $\bar{\bm{\theta}}_\mathcal{P}$ being a stochastic approximation of $\hat{\bm{\theta}}$ as the number of iterations $T_n$ diverge. However, $\hat{\bm{\theta}}$ gets closer to ${\bm{\theta}}^*$ as the sample size grows. It follows that $\bar{\bm{\theta}}_\mathcal{P}$ can also be used as a consistent estimator of ${\bm{\theta}}^*$ as long as both $T_n$ and $n$ diverge simultaneously.
\begin{proposition}
    \label{prop:conv}
    With Assumptions~\ref{ass:regularity}-\ref{ass:stepsize}, the output of Algorithm~\ref{algo:csgd} provides a consistent estimator of ${\bm{\theta}}^*$ when $T_n\xrightarrow{}{}\infty$ as $n\xrightarrow{}\infty$, i.e. $\bar{\bm{\theta}}_\mathcal{P} \xrightarrow[n]{a.s.}{} {\bm{\theta}}^*$.
\end{proposition}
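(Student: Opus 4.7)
The plan is to split the error into a statistical part and an optimization part via the triangle inequality
\begin{equation*}
\lVert\bar{\bm{\theta}}_\mathcal{P} - {\bm{\theta}}^*\rVert \;\leq\; \lVert\bar{\bm{\theta}}_\mathcal{P} - \hat{\bm{\theta}}\rVert \;+\; \lVert\hat{\bm{\theta}} - {\bm{\theta}}^*\rVert ,
\end{equation*}
and to establish the almost sure vanishing of each summand separately. The first summand concerns the stochastic optimization and is governed by $T_n\to\infty$; the second concerns the classical consistency of the composite likelihood estimator and is governed by $n\to\infty$. Because Algorithm~\ref{algo:csgd} rescales the update by $1/n$, it is more convenient throughout to work with the normalized composite log-likelihood $\bar c\ell_n(\bm{\theta}) = c\ell_n(\bm{\theta})/n$, whose gradient is a sample average of the sub-likelihood gradients and hence of order $O(1)$ uniformly in $n$ under Assumption~\ref{ass:lip}.

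For the second summand, I would invoke the standard consistency argument for composite likelihood M-estimators (Lindsay, 1988). Assumption~\ref{ass:conv} gives strong convexity of each $-\ell_k$, hence of $-\bar c\ell_n$, on the compact set $\Theta$; Assumption~\ref{ass:lip} yields the uniform Lipschitz control needed for a uniform law of large numbers; and the regularity conditions in Appendix~\ref{app:asscl} supply identifiability of ${\bm{\theta}}^*$ as the unique minimizer of the expected negative composite log-likelihood. Together these deliver $\hat{\bm{\theta}} \xrightarrow{a.s.} {\bm{\theta}}^*$ as $n\to\infty$.

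For the first summand, I would condition on the data $\bm{y}_1,\dots,\bm{y}_n$. Given the sample, $\bar c\ell_n$ is a deterministic, strongly convex, Lipschitz-smooth function and $\hat{\bm{\theta}}$ is its deterministic minimizer. The update in Algorithm~\ref{algo:csgd} becomes a Robbins–Monro recursion for this fixed objective with stochastic gradient $\bm{S}_t/n$. Assumption~\ref{ass:W} guarantees that $E\{\bm{S}_t/n \mid \bm{\theta}_{t-1}, \mathrm{data}\} = -\nabla\bar c\ell_n(\bm{\theta}_{t-1})$ and, combined with the growth bound in Assumption~\ref{ass:lip}, that the conditional second moment of $\bm{S}_t/n$ is bounded by an affine function of $\lVert\bm{\theta}_{t-1}\rVert^2$. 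The stepsize schedule in Assumption~\ref{ass:stepsize} fulfils the classical Robbins–Monro summability conditions as well as the extra requirements on $(\eta_t-\eta_{t+1})/\eta_t$ and $\sum_t\eta_t/\sqrt{t}$ needed for Polyak–Ruppert averaging. These are precisely the hypotheses (3.1)–(3.3) of \cite{polyak1992acceleration}, so their Theorem~2 applied conditionally on the data yields $\bar{\bm{\theta}}_\mathcal{P} - \hat{\bm{\theta}}\to 0$ a.s. as $T_n\to\infty$, for almost every data realization. A Fubini/tower argument then promotes this to unconditional almost sure convergence.

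The main obstacle, and the place where care is needed, is that the objective $\bar c\ell_n$ is itself random and its strong-convexity modulus and Lipschitz constant are aggregated from the $n$-dependent sample. I would therefore verify that the relevant constants can be chosen uniformly in $n$ on a probability-one event: Assumption~\ref{ass:conv} gives a sample-independent lower bound $\min_k\mu_k>0$ on the strong convexity modulus of $-\bar c\ell_n$, while Assumption~\ref{ass:lip} gives a sample-independent upper bound on its Lipschitz constant. On that uniform-control event the Polyak–Ruppert conclusion holds with constants not depending on $n$, so the joint almost sure convergence $\bar{\bm{\theta}}_\mathcal{P} - \hat{\bm{\theta}}\to 0$ along any divergent sequences $(n,T_n)$ follows. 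Combining with step one via the triangle inequality gives $\bar{\bm{\theta}}_\mathcal{P} \xrightarrow{a.s.} {\bm{\theta}}^*$, as claimed.
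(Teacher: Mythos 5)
Your decomposition $\lVert\bar{\bm{\theta}}_\mathcal{P}-{\bm{\theta}}^*\rVert\leq\lVert\bar{\bm{\theta}}_\mathcal{P}-\hat{\bm{\theta}}\rVert+\lVert\hat{\bm{\theta}}-{\bm{\theta}}^*\rVert$, with \cite{polyak1992acceleration} applied \emph{conditionally on the data} to the empirical objective, is a genuinely different route from the paper's. The paper instead centres the recursion at the population gradient: it writes $n^{-1}\bm{S}_t=\nabla c\ell(\bm{\theta}_{t-1})+\bm{\xi}_{n,t}$ with $c\ell(\bm{\theta})=n^{-1}E_{\bm{Y}}\{-c\ell_n(\bm{\theta})\}$, so that the mean field is a single deterministic, $n$-independent function whose root is ${\bm{\theta}}^*$ itself, and the combined data-plus-weights randomness is absorbed into the noise term. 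It then verifies Assumptions 3.1--3.3 of \cite{polyak1992acceleration} for this population-level recursion (Lemmata~\ref{lemma:poly31}--\ref{lemma:poly33}). The advantage of the paper's formulation is precisely that the target and the mean field do not move with $n$; this is also the reason the paper flags, in Section~\ref{sec:theory}, that the conditional-on-data results of \cite{polyak1992acceleration} and the unconditional asymptotics of $\hat{\bm{\theta}}$ live on different probability spaces and cannot simply be concatenated.

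That concern is where your argument has a genuine gap. What Polyak--Juditsky gives you conditionally is: for \emph{fixed} $n$ and fixed data, $\bar{\bm{\theta}}_T\to\hat{\bm{\theta}}_n$ a.s.\ as $T\to\infty$. The proposition, however, concerns the diagonal sequence in which the objective, its minimiser $\hat{\bm{\theta}}_n$, and the noise distribution all change with $n$ while only $T_n$ iterations are performed at each $n$; a statement proved for each fixed $n$ in the limit $T\to\infty$ says nothing about this diagonal without a uniformity argument, and the ``Fubini/tower'' step does not apply because the event $\{\bar{\bm{\theta}}_\mathcal{P}\to{\bm{\theta}}^*\}$ is a joint event across all $n$ rather than a product over a fixed conditional space. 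Your uniform-constants remark is the right instinct, but what it actually delivers is a bound of the form $E\{\lVert\bar{\bm{\theta}}_{T_n}-\hat{\bm{\theta}}_n\rVert^2\mid\text{data}\}\leq C\,r(T_n)$ with $C$ independent of $n$ and the sample, hence convergence in probability of the diagonal sequence after taking expectations; upgrading that to almost sure convergence of the diagonal requires an additional step (e.g.\ Borel--Cantelli with summable tails, which in turn constrains how fast $T_n$ must grow, or a maximal inequality across $n$) that you have not supplied. To close the argument you would either need to make that quantitative uniform bound explicit and handle the a.s.\ upgrade, or adopt the paper's device of recentering the recursion at $\nabla c\ell(\bm{\theta})$ so that ${\bm{\theta}}^*$ is the fixed root throughout and the double limit collapses to a single stochastic-approximation limit.
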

The proof of Proposition~\ref{prop:conv} in the online Appendix is an adaptation of Lemma 17 in \citet{su2023}. However, it is worth emphasizing that it takes advantage of the expected behavior of $\bm{S}_t$ when evaluated at ${\bm{\theta}}^*$ rather than $\hat{\bm{\theta}}$. In other words, it acknowledges the data and $\bm{W}_t$ as random variables when constructing $\bm{S}_t$. This is critical to stress since it allows considering ${\bm{\theta}}^*$ as the target of $\bar{\bm{\theta}}_{\mathcal{P}}$ given the double asymptotics in $n$ and $T_n$. 

While Proposition~\ref{prop:conv} formalizes the consistency of $\bar{\bm{\theta}}_{\mathcal{P}}$, nothing has been said so far about its distributional properties. Theorem~\ref{th:main} tackles this aspect by outlining the asymptotic distribution of $\bar{\bm{\theta}}_\mathcal{P}$ according to the relative divergence rate of $n$ and $T_n$. Let us anticipate how Theorem~\ref{th:main} differs from the original inference framework for averaged stochastic optimization. When used to find the root of $\nabla c\ell_n(\bm{\theta}) = 0$, the original result in Theorem 2 in \citet{polyak1992acceleration}

describes the asymptotic behavior of $\bar{\bm{\theta}}_\mathcal{P}$ around $\hat{\bm{\theta}}$. It assumes the data is fixed and does not quantify the uncertainty of the stochastic estimates around ${\bm{\theta}}^*$. From a technical point of view, directly combining such a result with (\ref{eq:cl:asyd}) is not straightforward since the two asymptotic statements are defined on different probability spaces, namely with and without the conditioning on the observed values of $\bm{Y}$. Furthermore, as soon as we allow the data to be random, we are not able anymore to take advantage of the independence of the stochastic quantities $S_t({\bm{\theta}}^*; \bm{W}_t)$, with $t=1,\dots, T_n$, which is a critical step in the proofs presented in \citet{polyak1992acceleration}. In other words, while all the iterations still share the same dataset, its random nature induces dependence among them.

It follows that, when stochastically optimizing $c\ell_n(\bm{\theta})$, if the interest is drawing inference about ${\bm{\theta}}^*$, which is typically the case of composite likelihood methods and maximum likelihood estimation in general, the available results building on the asymptotic covariance matrix outlined in \citet{polyak1992acceleration} only provide a partial answer to the research question. To fill this gap, we provide Theorem~\ref{th:main}, which shows that $\bar{\bm{\theta}}_\mathcal{P}$ is asymptotically normally distributed around ${\bm{\theta}}^*$ and its covariance matrix changes according to both  $\mathcal{P}$ and the relative divergent behavior of $T_n$ and $n$. The choice of $\mathcal{P}$ affects the shape of the noise coming from the optimization, while the divergent behavior of $T_n$ and $n$ quantifies its relative magnitude compared to the sampling variability of the data.

\begin{theorem}
\label{th:main} Let $ n/(T_n+n)\xrightarrow[n]{}{\alpha}$, with $0\leq\alpha\leq 1$.  Under Assumptions~\ref{ass:regularity}-\ref{ass:stepsize}, for the sampling schemes $\mathcal{P}_1, \mathcal{P}_2, \mathcal{P}_3$ in Definitions~\ref{def:simple:w}-\ref{def:hyper:w}, it holds that: \\
Regime 1: If $\alpha = 0$, then 
$\sqrt{n}(\bar{\bm{\theta}}_\mathcal{P}-{\bm{\theta}}^*)\xrightarrow[n]{d}\mathcal{N}(0,\bm{H}^{-1}\bm{J}\bm{H}^{-1}) ;$ \\
Regime 2: If $\alpha = 1$  and $n^{7/9}=o(T_n)$, then 
$
\sqrt{T_n}(\bar{\bm{\theta}}_\mathcal{P}-{\bm{\theta}}^*)\xrightarrow[n]{d}\mathcal{N}(0, \bm{H}^{-1}\bm{V}_{\mathcal{P}}\bm{H}^{-1}) ;
$\\
Regime 3: If $0<\alpha <1$, then 
$\sqrt{T_n+n}(\bar{\bm{\theta}}_\mathcal{P}-{\bm{\theta}}^*) \xrightarrow[n]{d}\mathcal{N}(0,\bm{H}^{-1}\bm{V}_{\mathcal{P}}\bm{H}^{-1}/(1-\alpha) + \bm{H}^{-1}\bm{J}\bm{H}^{-1}/\alpha ),$
where $\bm{V}_{\mathcal{P}} = \lim_{n\xrightarrow{}\infty}{\gamma_1^{-2}n^{-1}}(\gamma_1-\gamma_2)\bm{H} +n^{-1}\left(\gamma_1^{-2}\gamma_2-1\right)\bm{J}=O(1)$.
\end{theorem}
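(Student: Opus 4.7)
The plan is to decompose the error using the CLE as an intermediate pivot,
\[
\bar{\bm{\theta}}_\mathcal{P} - {\bm{\theta}}^* = (\bar{\bm{\theta}}_\mathcal{P} - \hat{\bm{\theta}}) + (\hat{\bm{\theta}} - {\bm{\theta}}^*),
\]
and quantify the two summands separately before combining them. The second summand has its limit law given by the classical composite-likelihood asymptotics in (\ref{eq:cl:asyd}), namely $\sqrt{n}(\hat{\bm{\theta}} - {\bm{\theta}}^*) \xrightarrow{d} \mathcal{N}(0, \bm{H}^{-1}\bm{J}\bm{H}^{-1})$. For the first summand, the key observation is that conditionally on $\bm{y}_1,\ldots,\bm{y}_n$ the objective $c\ell_n$ is deterministic with minimizer $\hat{\bm{\theta}}$ and the sampling matrices $\bm{W}_1,\ldots,\bm{W}_{T_n}$ are i.i.d., so Assumptions~\ref{ass:W}--\ref{ass:stepsize} translate into the regularity hypotheses of Theorem~2 of \cite{polyak1992acceleration}. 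Applying that result conditionally yields
\[
\sqrt{T_n}(\bar{\bm{\theta}}_\mathcal{P} - \hat{\bm{\theta}}) \mid \bm{y}_1,\ldots,\bm{y}_n \xrightarrow{d} \mathcal{N}\bigl(0, \bm{H}^{-1}\bm{V}_{\mathcal{P},n}\bm{H}^{-1}\bigr),
\]
with data-dependent conditional variance $\bm{V}_{\mathcal{P},n} = n^{-2}\text{Var}_{\bm{W}}[\bm{S}_1({\bm{\theta}}^*)]$.

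The second step is to identify the deterministic limit $\bm{V}_\mathcal{P}$ of $\bm{V}_{\mathcal{P},n}$. Expanding
\[
\text{Var}_{\bm{W}}[\bm{S}_1({\bm{\theta}}^*)] = \gamma_1^{-2}\sum_{i,j,k,h}\text{Cov}(w_{ik},w_{jh})\,\nabla\ell_k({\bm{\theta}}^*;\bm{y}_i)\nabla\ell_h({\bm{\theta}}^*;\bm{y}_j)^\top,
\]
cross-observation terms ($i\neq j$) vanish in expectation because sub-likelihood scores from independent observations are independent with mean zero, leaving only the within-observation block $i=j$. In that block, Assumption~\ref{ass:W} gives $\text{Var}(w_{ik})=\gamma_1-\gamma_1^2$ and $\text{Cov}(w_{ik},w_{ih})=\gamma_3-\gamma_1^2$ for $k\neq h$. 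The critical identity is the second Bartlett identity applied component-wise, which gives $\sum_k E[\nabla\ell_k\nabla\ell_k^\top]=\bm{H}$, while by definition $\sum_{k,h}E[\nabla\ell_k\nabla\ell_h^\top]=\bm{J}$; subtracting yields $\sum_{k\neq h}E[\nabla\ell_k\nabla\ell_h^\top]=\bm{J}-\bm{H}$. Substituting produces the announced formula, and a law of large numbers over $i=1,\ldots,n$ gives $\bm{V}_{\mathcal{P},n}\xrightarrow{p}\bm{V}_\mathcal{P}$.

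Finally, I would combine the two summands. Because the conditional limit of $\sqrt{T_n}(\bar{\bm{\theta}}_\mathcal{P}-\hat{\bm{\theta}})$ given the data is Gaussian with the non-random variance $\bm{H}^{-1}\bm{V}_\mathcal{P}\bm{H}^{-1}$, a characteristic-function argument shows $\bar{\bm{\theta}}_\mathcal{P}-\hat{\bm{\theta}}$ is asymptotically independent of the data sigma-algebra, and therefore of $\hat{\bm{\theta}}-{\bm{\theta}}^*$. Rescaling both summands by $\sqrt{n+T_n}$ and using $n/(n+T_n)\to\alpha$ then yields Regime~3 by addition of variances, while Regimes~1 and~2 follow as the degenerate cases where $\alpha=0$ makes the optimization term vanish in the $\sqrt{n}$ scaling and $\alpha=1$ makes the data-sampling term vanish in the $\sqrt{T_n}$ scaling. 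The hard part is exactly this combination step: as flagged in the introduction, unconditional on the data the stochastic gradients $\{\bm{S}_t({\bm{\theta}}^*)\}_t$ share a common random dataset, so they are exchangeable but not independent across iterations and the martingale CLT that underpins the Polyak--Juditsky theorem does not directly apply. The conditional route above circumvents this by reducing to a Gaussian limit with deterministic variance; an alternative route is to treat the joint iteration-by-iteration noise as an exchangeable sequence and invoke the CLT of \cite{blum_chernoff_rosenblatt_teicher_1958}. Either way, the technical burden is to verify the Polyak--Juditsky regularity uniformly in the data and to control higher-order remainders in the stochastic recursion so that the first-order decomposition used above remains accurate at the $o_p(1/\sqrt{n+T_n})$ level.
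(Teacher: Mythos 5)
Your decomposition through $\hat{\bm{\theta}}$ is a genuinely different route from the paper's, and much of your surrounding machinery is right: the identification of $\bm{V}_{\mathcal{P}}$ via $\mathrm{Var}(w_{ik})=\gamma_1-\gamma_1^2$, $\mathrm{Cov}(w_{ik},w_{ih})=\gamma_3-\gamma_1^2$, the component-wise second Bartlett identity giving $\sum_k E[\nabla\ell_k\nabla\ell_k^\top]=\bm{H}$ and $\sum_{k,h}E[\nabla\ell_k\nabla\ell_h^\top]=\bm{J}$, and the regime arithmetic after rescaling by $\sqrt{n+T_n}$ all reproduce exactly what the paper obtains. But the step you yourself flag as ``the hard part'' is a genuine gap, and it is precisely the obstacle the paper's proof is built to avoid. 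Applying Theorem~2 of \cite{polyak1992acceleration} ``conditionally on the data'' is not a legitimate invocation of that theorem here: their result concerns a \emph{fixed} objective as the iteration count diverges, whereas in your scheme the conditional objective $c\ell_n$, its minimizer $\hat{\bm{\theta}}$, its Hessian, and the conditional noise covariance all change with $n$ while $T_n$ and $n$ diverge together. You therefore need a triangular-array version of the conditional CLT holding uniformly (or in probability) over the sequence of data realizations, and likewise the subsequent characteristic-function argument for asymptotic independence requires that conditional convergence in a form strong enough to integrate against the law of the data. Neither is supplied, and the paper explicitly notes that the conditional and unconditional limits ``are defined on different probability spaces,'' which is exactly the wall your plan runs into.

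The paper sidesteps this entirely by never pivoting through $\hat{\bm{\theta}}$. It linearizes the recursion directly around ${\bm{\theta}}^*$ (Proposition~\ref{prop:asyeq}), with remainder bounds that hold jointly over the randomness of $\bm{Y}$ and $\bm{W}$, reducing everything to the single term $T_n^{-1}\sum_t \bm{H}^{-1}\bm{\xi}_{n,t}({\bm{\theta}}^*,\bm{W}_t)$. It then regroups that double sum by observation, $\bar{\bm{\xi}}^*=n^{-1}\sum_i \bm{S}_i^*$ with $\bm{S}_i^*=T_n^{-1}\sum_t \bm{S}_{i,t}^*$, observes that $\bm{S}_1^*,\dots,\bm{S}_n^*$ are exchangeable (i.i.d.\ conditionally on $\bm{W}_1,\dots,\bm{W}_{T_n}$), and applies the CLT of \cite{blum_chernoff_rosenblatt_teicher_1958} via Cram\'er--Wold, verifying its three moment conditions by hand. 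This is the ``alternative route'' you mention in one sentence at the end; in the paper it is not an alternative but the entire argument, because $\mathrm{Var}(\bm{S}_i^*)=T_n^{-1}E_{\bm{Y}}\mathrm{Var}_{\bm{W}|\bm{Y}}(\bm{S}_{i,t}^*)+\bm{J}$ already contains both the optimization and the sampling variance in a single unconditional limit theorem, so no independence or probability-space-matching argument is ever needed. To repair your proof you would either have to carry out the uniform-in-$n$ conditional analysis you defer, or switch to the paper's exchangeable-summands representation, at which point the decomposition through $\hat{\bm{\theta}}$ becomes unnecessary.
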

The asymptotic covariance matrices in Theorem~\ref{th:main} can be described as a weighted average between $\bm{H}^{-1}\bm{V}_{\mathcal{P}}\bm{H}^{-1}$ and $\bm{H}^{-1}\bm{J}\bm{H}^{-1}$, with weights depending on the divergence rate of $T_n$ and $n$. Note that $\bm{H}$ and $\bm{J}$ are the usual matrices entering the asymptotic efficiency of the CLE, as discussed in Section~\ref{sec:clsa}. While $\bm{H}^{-1}\bm{J}\bm{H}^{-1}$ is already well known from (\ref{eq:cl:asyd}) and quantifies the variability due to $\bm{Y}$, the matrix $\bm{H}^{-1}\bm{V}_{\mathcal{P}}\bm{H}^{-1}$ can be shown to describe the noise coming from the optimization. As the notation stresses, the value of $\bm{V}_{\mathcal{P}}$ depends on the choice of $\mathcal{P}$. In particular, it results in a linear combination of the matrices $\bm{H}$ and $\bm{J}$, with coefficients based on the quantities $\gamma_1$ and $\gamma_2$. For a detailed derivation of $\bm{V}_{\mathcal{P}}$, see the proof of the theorem in the online Appendix B. Before describing Corollary~\ref{cor:comp}, which outlines the different shapes of $\bm{V}_{\mathcal{P}}$ according to the choices $\mathcal{P}$, we briefly summarize the three asymptotic regimes described in Theorem~\ref{th:main}.

When the algorithm runs for $T_n = \omega(n)$ iterations, it holds that $n/(T_n+n)\xrightarrow{}{}0$, and the estimates obtained fall under Regime 1. With such a setting, the noise component generated by the optimization is negligible compared to the sampling variability of the data. In this scenario, the algorithm runs until closely approximating the CLE, i.e., there is no difference between the asymptotic behaviors of $\bar{\bm{\theta}}_\mathcal{P}$ and $\hat{\bm{\theta}}$. Inference can be carried out based on the familiar $\bm{H}^{-1}\bm{J}\bm{H}^{-1}$, as described in (\ref{eq:cl:asyd}). 

In the opposite setting, where the algorithm is stopped at $T_n = o(n)$, we get $n/(T_n+n)\xrightarrow{}{}1$ and Regime 2 holds. Note that setting $\gamma_1=\Theta(1/n)$, as in the three sampling schemes considered, requires a minimum growing rate on the number of iterations to establish asymptotic normality, i.e. $n^{7/9}=o(T_n)$. While such a condition is not particularly restrictive in practice, its technical derivation can be found in Lemma 4 in the online Appendix. In such an asymptotic regime, the dominant variance component is the one induced by $\mathcal{P}$, such that inference can potentially ignore the variability of the data. In this case, the asymptotic distribution resembles the one in \citet{polyak1992acceleration} and related works, apart from having the parameter-dependent quantities evaluated at ${\bm{\theta}}^*$ rather than $\hat{\bm{\theta}}$. To glimpse the connection between Regime 2 and the conditional inference setting traditionally adopted in stochastic optimization, imagine $n$ being so large that the distance between $\hat{\bm{\theta}}$ and ${\bm{\theta}}^*$ is negligible. Then, there is not much difference in practice between using $\bar{\bm{\theta}}_\mathcal{P}$ to infer either on ${\bm{\theta}}^*$ or $\hat{\bm{\theta}}$. In other words, the results in \citet{polyak1992acceleration} are equivalent, in a suitable sense, to inference under Regime 2 of Theorem~\ref{th:main}.

Regime 3 describes an intermediate setting between the previous two. Since $T_n$ and $n$ grow at the same rate, i.e., $T_n=\Theta(n)$, it holds that $0<\alpha <1$ and the asymptotic covariance matrix around ${\bm{\theta}}^*$ compounds for both the optimization error and the sampling variability of the data. As it is difficult to assess whether $\bar{\bm{\theta}}_\mathcal{P}$ is obtained strictly under Regime 1 or Regime 2, it is always recommended to use Regime 3.

Hence, according to the divergent behavior of $n/(T_n+n)$, Theorem~\ref{th:main} highlights which variability component can be neglected and which can not when quantifying the uncertainty around stochastic estimates.

Furthermore, note that for ease of exposition, Theorem~\ref{th:main} assumes $\gamma_1=\Theta(1/n)$ as in the three sampling schemes $\mathcal P_1$-$\mathcal P_3$. However, such results can be extended for whatever choice of $\mathcal{P}$ satisfying Assumption~\ref{ass:W}. The three asymptotic regimes and the variance decomposition are still valid, but different sampling rates $\gamma_1$ require a careful analysis of the magnitude of $V_{\mathcal{P}}$. Nevertheless, it is interesting to comprehend how the distribution of $\bm{W}$ affects the optimization noise. In this regard, Corollary~\ref{cor:comp} outlines the effects of choosing $\mathcal{P}$ according to Definitions \ref{def:simple:w},  \ref{def:be:w} and \ref{def:hyper:w}. In particular, the choice of $\mathcal{P}$ affects the values of $\gamma_1$ and $\gamma_2$ which control the shape of $\bm{V}_{\mathcal{P}}$. 
\begin{corollary}
\label{cor:comp} Let Theorem~\ref{th:main} hold. Then, Definition~\ref{def:simple:w}, implies $\bm{V}_{\mathcal{P}_1}= \bm{J}$ and $\bm{H}^{-1}\bm{V}_{\mathcal{P}_1}\bm{H}^{-1} = \bm{H}^{-1}\bm{J}\bm{H}^{-1}$; Definition~\ref{def:be:w} implies $\bm{V}_{\mathcal{P}_2}= \bm{H}$ and $\bm{H}^{-1}\bm{V}_{\mathcal{P}_2}\bm{H}^{-1} = \bm{H}^{-1}$; Definition~\ref{def:hyper:w} implies $\bm{V}_{\mathcal{P}_3}= \bm{H}$ and $\bm{H}^{-1}\bm{V}_{\mathcal{P}_3}\bm{H}^{-1} = \bm{H}^{-1}$.
\end{corollary}
While it is apparent that $\gamma_1=1/n$ for all three sampling schemes, we leave the details about the implied values of $\gamma_2$ in the proof of Corollary~\ref{cor:comp} provided in the online Appendix B.
When $\mathcal{P}$ is chosen according to Definition~\ref{def:simple:w}, the Sampling Step of Algorithm~\ref{algo:csgd} keeps untouched the correlation structure among the sub-likelihood components of the objective function. In other words, it samples from the empirical distribution of the data. Therefore, the variability due to $\bm{W}$ takes the same shape as the one stemming from $\bm{Y}$, represented by the matrix $\bm{J}$. Note, in fact, that when $\mathcal{P}_1$ is chosen, $\gamma_1 = \gamma_2$ and therefore $\bm{H}$ asymptotically disappears when computing $\bm{V}_{\mathcal{P}}$ following Theorem~\ref{th:main}.
Instead, if $\mathcal{P}$ is chosen according to Definition~\ref{def:be:w}, the sub log-likelihood components are drawn independently, even when belonging to the same observation. This step breaks the original correlation structure among the summands in $\nabla c\ell_n(\bm{\theta})$, such that $\bm{V}_{\mathcal{P}}$ collapses onto the expected second derivative of $\bm{S}_t$, $\bm{H}$. The weights independence, in fact, implies $\gamma_1^2=\gamma_2$ and, therefore, a zero weight for $\bm{J}$ when computing $\bm{V}_{\mathcal{P}}$.
Finally, if $\mathcal{P}$ follows Definition~\ref{def:hyper:w}, the correlation among the elements keeps the weight for $\bm{J}$ different from zero but asymptotically negligible because of being $O(\frac{1}{nK})$. Thus, asymptotically, $\mathcal{P}_3$ shares the same asymptotic efficiency of $\mathcal{P}_2$. 

It is well known that $\bm{H}\neq \bm{J}$ when using composite likelihood methods. Hence, the inference with $\hat{\bm{\theta}}$ must be based on $\bm{H}^{-1}\bm{J}\bm{H}^{-1}$ rather than $\bm{H}^{-1}$, which typically results in inflated variances for each parameter. Corollary~\ref{cor:comp} shows that $\mathcal{P}_1$ injects this same variability as noise in the optimization, while $\mathcal{P}_2$ and $\mathcal{P}_3$ constrain it to $\bm{H}^{-1}$.
Such a difference is evident, as the simulations in Section~\ref{sec:sims} highlight, with the estimates based on $\mathcal{P}_2$ and $\mathcal{P}_3$ exhibiting lower variability than those obtained with $\mathcal{P}_1$.
\section{Simulation Studies}
\label{sec:sims}
We investigate the results from some simulation experiments with $R=500$ replications for the models presented in Examples \ref{ex:graph} and \ref{ex:pl}. In particular, our goal is two-fold. First, we provide evidence to support Proposition~\ref{prop:conv}, namely to show that $\bar{\bm{\theta}}_\mathcal{P}$ converges to ${\bm{\theta}}^*$ when both $T_n$ and $n$ diverge by tracking the average mean square error of $\bar{\bm{\theta}}_\mathcal{P}$. i.e. 
$    MSE = \frac{1}{dR}\sum_{r = 1}^R\lVert \bar{\bm{\theta}}_\mathcal{P}^{(r,t)}-{\bm{\theta}}^* \rVert^2,
$
where $\bar{\bm{\theta}}_\mathcal{P}^{(r,t)}$ is the $d$-dimensional output of Algorithm~\ref{algo:csgd} on the $r$-th replication when stopped at the $t$-th iteration. Furthermore, we highlight that different choices of $\mathcal{P}$ characterize different behavior of the MSE trajectories because of the implied asymptotic variabilities outlined in Corollary~\ref{cor:comp}.

Second, we assess the empirical coverage performance of confidence intervals built from the asymptotic covariance matrices outlined in Theorem~\ref{th:main}. We aim to highlight the strength of asymptotic Regime 3, which 
compounds both the sampling variability of the data and the optimization noise.  To construct the confidence intervals, an estimate for both $\bm{H}$ and $\bm{J}$ is needed. Here, we use the usual sample estimators \cite[see e.g.][Section 5]{varin_overview_2011} $
    \hat {\bm{J}}^{(r,t)} = \frac{1}{n}\sum_{i=1}^n\left\{\nabla c \ell(\bar{\bm{\theta}}_\mathcal{P}^{(r,t)}; \bm{y}_i)\right\}\left\{\nabla c \ell(\bar{\bm{\theta}}_\mathcal{P}^{(r,t)}; \bm{y}_i)\right\}^\top$ and $
    \hat {\bm{H}}^{(r,t)}= \frac{1}{n}\sum_{i=1}^n\sum_{k=1}^K\left\{\nabla \ell_k(\bar{\bm{\theta}}_\mathcal{P}^{(r,t)}; \bm{y}_i)\right\}\left\{\nabla \ell_k(\bar{\bm{\theta}}_\mathcal{P}^{(r,t)}; \bm{y}_i)\right\}^\top$.
In all experiments, we burn the first $0.25n$ iterations and start drawing inference and tracking estimates variability from $T_n=.5n$. To correctly assess the empirical coverage of confidence intervals iteration-wise, all simulations track results for $T_n\in\{0.5n, \dots, 3n \}$. Hence, for each stopping time, we observe all the $R$ runs of Algorithm~\ref{algo:csgd} for a given $\mathcal{P}$. 
Results are shown for the decreasing stepsize scheduling outlined in Assumption~\ref{ass:stepsize} with $c$ set arbitrarily small at $c=0.501$. The initial step size instead, $\eta_0$, is chosen differently for the two examples.

In the experiments we compare the performances of $\bar{\bm{\theta}}_{\mathcal{P}_1}$ (\texttt{standard}), $\bar{\bm{\theta}}_{\mathcal{P}_2}$ (\texttt{bernoulli}), $\bar{\bm{\theta}}_{\mathcal{P}_3}$ (\texttt{hyper}) together with the implementations of $\bar{\bm{\theta}}_{\mathcal{P}_1}$ and $\bar{\bm{\theta}}_{\mathcal{P}_3}$ taking advantage of a recycled Sampling Step (\texttt{recycle\_standard} and \texttt{recycle\_hyper} respectively) as described in Section~\ref{sec:imprem}. We also compute $\hat{\bm{\theta}}$ numerically as a benchmark. 
\subsection{Experiments for Example \ref{ex:graph}}
Data are generated by using the exact probabilities of observing each of the possible $p$-variate realizations of the graph. 
We assume the true graph follows a two-row grid structure, similar to the simulation setting of \citet{lee_learning_2015}. In particular, horizontal edges are set at $0.5$, vertical ones at $-0.5$, intercepts at $-0.5$ for odd nodes and $0.5$ for even ones. The optimization always starts at the null vector.

We investigate the performances of Algorithm~\ref{algo:csgd} with $n\in\{2,500, 5,000, 10,000\}$ and $p\in\{10, 20\}$,  implying $d\in\{55, 210\}$. The value of $\eta_0$,  is picked by minimising over a grid of possible candidates the mean square error of \texttt{standard} at $T_n = 3n$ in the most challenging simulation setting, i.e. $n = 2,500, p = 20$. However, additional simulation results for different choices of $\eta_0$ are available in the online Appendix C.   
Figure~\ref{fig:isi:1:mse} shows the convergence of all instances of the proposed estimator in terms of average mean square distance from ${\bm{\theta}}^*$. 
Interestingly, the different noise levels introduced by different sampling schemes characterize the convergence behavior as the estimation proceeds. That is, $\bar{\bm{\theta}}_{\mathcal{P}_2}$ and $\bar{\bm{\theta}}_{\mathcal{P}_3}$ share the same asymptotic performances and are preferable to $\bar{\bm{\theta}}_{\mathcal{P}_1}$. As expected, the recycled implementations of $\mathcal{P}_1$ and $\mathcal{P}_3$ do not show any relevant discrepancy from their non-recycled versions. Furthermore, after reaching $T_n=3n$, none of the stochastic estimators has reached the MSE of the numerical optimizer. This phenomenon happens because when the stochastic algorithm is stopped, the optimization noise is still non-negligible, which affects the variance considered in the MSE computation. It follows that, since this noise can not be neglected, it must be appropriately considered when quantifying the uncertainty around $\bar{\bm{\theta}}_\mathcal{P}$.

\begin{figure}[t]\centering
	\includegraphics[width=.7\textwidth]{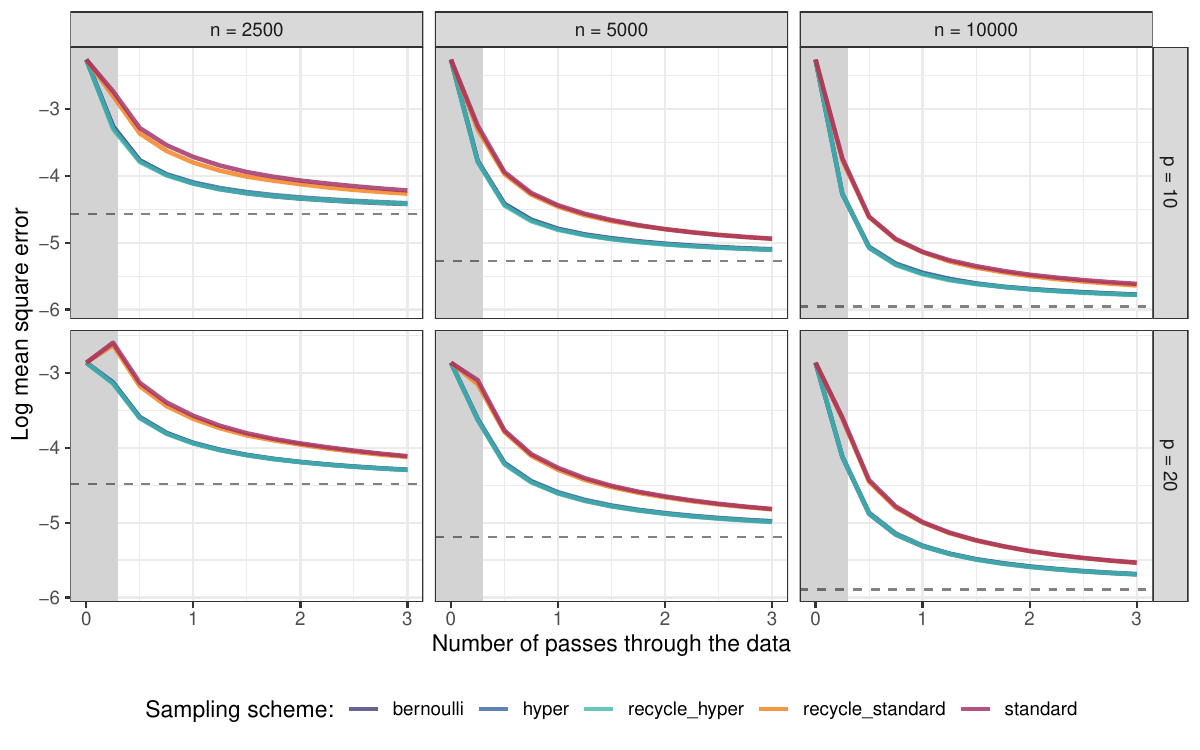}
	\caption{\label{fig:isi:1:mse}Ising model. Log mean square error trajectories along the optimization, grouped by $n$ and $p$. Solid lines refer to $\bar{\bm{\theta}}_\mathcal{P}$ under different sampling schemes. Dashed lines denote the performance of the numerical approximation of $\hat{\bm{\theta}}$. Grey areas highlight the burn-in.}
\end{figure}

\begin{figure}[!ht]\centering
	\includegraphics[width=.73\textwidth]{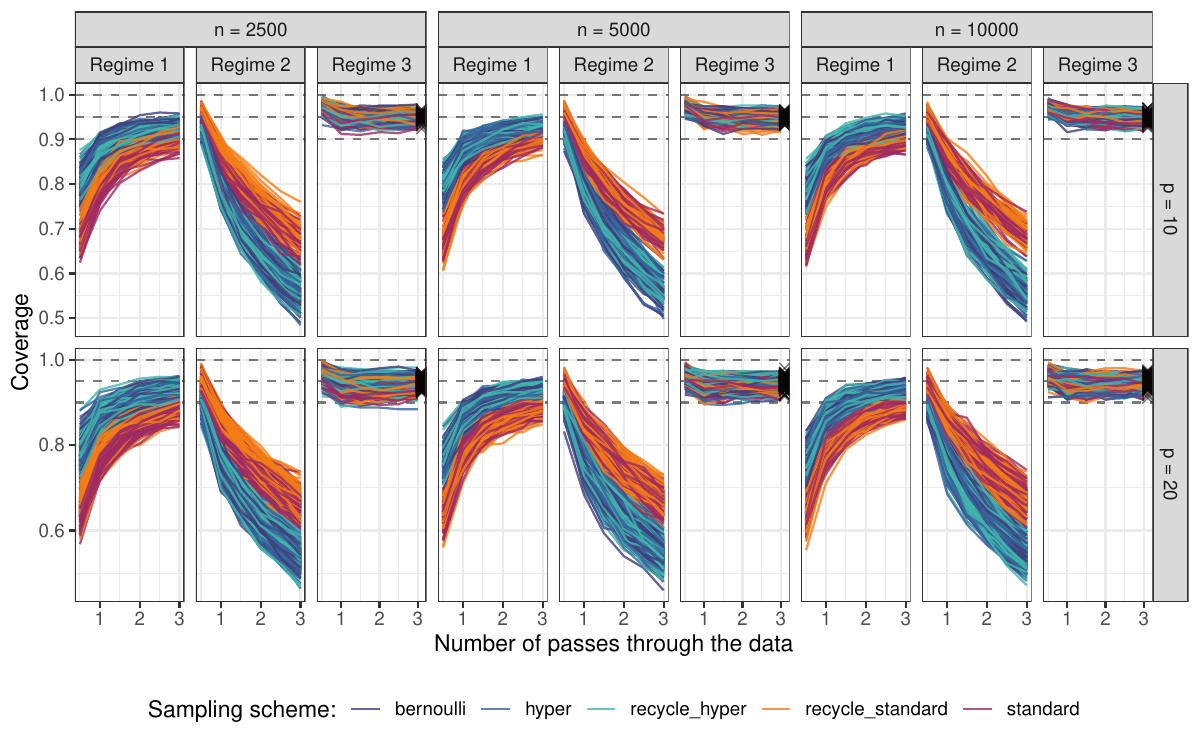}
	\caption{\label{fig:isi:1:cov_lines}Ising model. Empirical coverage of confidence intervals for $\bar{\bm{\theta}}_\mathcal{P}$ constructed according to Theorem~\ref{th:main}. Results are grouped by $n$ and $p$. Dashed lines highlight the nominal coverage level $95\%$. Solid lines refer to scalar elements of $\bar{\bm{\theta}}_\mathcal{P}$ under different sampling schemes. Dark crosses refer to scalar elements of the numerical approximation of $\hat{\bm{\theta}}$ (placed after the third pass for visualization purposes).}
\end{figure}
In this regard, Figure~\ref{fig:isi:1:cov_lines} shows that, by appropriately accounting for both sources of variability when the algorithm is stopped, it is possible to draw inference about ${\bm{\theta}}^*$ using $\bar{\bm{\theta}}_\mathcal{P}$, whatever the choice of $\mathcal{P}$.
It presents the empirical coverage levels obtained by constructing confidence intervals following the covariance matrices outlined in Theorem~\ref{th:main} under the three asymptotic regimes. As predicted by the theory, one should use Regime 1 when $T_n = \omega(n)$, and Regime 2 in the opposite scenario, $T_n = o(n)$. However, Regime 3 is the recommendable choice in practice because it directly compounds both the optimization uncertainty and the data sampling variability. As a reference, under Regime 3, Figure~\ref{fig:isi:1:cov_lines} reports the empirical coverage levels obtained by constructing confidence intervals for the numerical optimizer estimating the asymptotic covariance matrix in (\ref{eq:cl:asyd}). 

For space reasons, computational times are reported in the online Appendix C. Briefly, taking advantage of a recycling window of iterations is highly beneficial implementation-wise, especially with diverging $n$. In particular, it allows \texttt{recycle\_hyper} to be computationally competitive with \texttt{standard} and \texttt{recycle\_standard} while being systematically more efficient in statistical terms, whatever the choice of $T_n$ (and of $\eta_0$, as remarked in the additional experiments reported in the online Appendix).

\subsection{Experiments for Example \ref{ex:pl}}
While the previous example clearly shows the statistical convenience of relying on $\mathcal{P}_2$ or $\mathcal{P}_3$ rather than $\mathcal{P}_1$,
the experiments in this second example illustrate how these differences vary based on the model considered. Since the discrepancy in the asymptotic covariance of the considered estimators depends on the matrices $\bm{H}$ and $\bm{J}$, such a gap can be more or less evident according to the model analyzed. Compared to the previous example, this difference is much more apparent in the gamma frailty model, as illustrated below.  
 \begin{figure}[h]\centering
	\includegraphics[width=.73\textwidth]{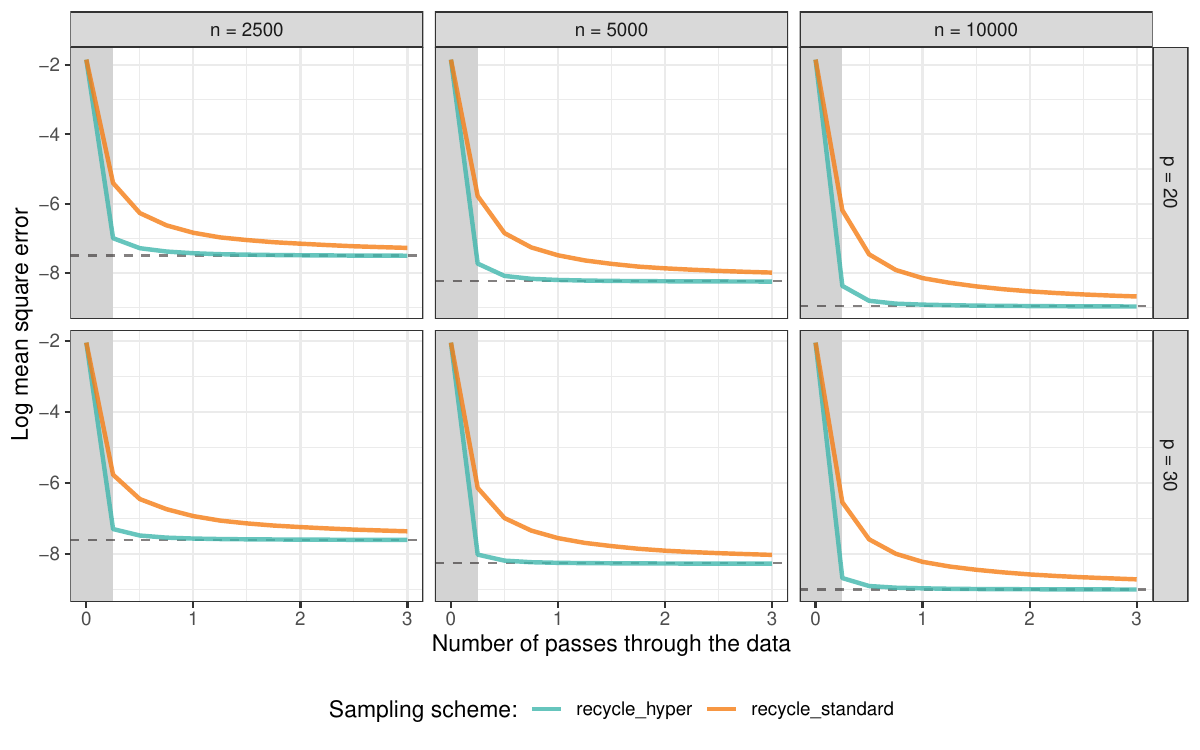}
	\caption{\label{fig:gf:1:mse} Gamma frailty model. Log mean square error trajectories grouped by $n$ and $p$. Solid lines refer to $\bar{\bm{\theta}}_\mathcal{P}$ under different sampling schemes. Dashed lines denote the performance of the numerical approximation of $\hat{\bm{\theta}}$. Grey areas highlight the burn-in.}
\end{figure}
Similarly to the previous experiments, we assess the performances of Algorithm~\ref{algo:csgd} for $n\in\{2,500, 5,000, 10,000\}$ and $p\in\{20,30\}$. Note that, from $p=20$ to $p=30$, the dimension of the parameter space goes from $d=22$ to $d=32$ while the computational burden per iteration more than doubles since $K$ increases from $K=190$ to $K = 435$. Given the large number of likelihood contributions considered compared to the Ising model, we only assess the performances of \texttt{recycle\_standard} and \texttt{recycle\_hyper}, not their exact versions. In particular, $\mathcal{P}_3$ is the sampling scheme suffering the most from the hefty $K$. Accordingly,  we increase the recycling window length to $l = 500$ for both estimators to have competitive computational times. For completeness, the online Appendix C provides further experiments showing how different values of $l$ lead to comparable estimates for \texttt{recycle\_hyper} while having massive impacts on computational times. Similarly to previous experiments, the value of $\eta_0$ shown is the stepsize minimizing the mean square error performance at $T_n = 3n$ of \texttt{recycle\_standard} in the most challenging setting, i.e., $n = 2,500, p = 30$.

Figure~\ref{fig:gf:1:mse} shows the trajectories along the optimization of the log mean square error for the proposed estimators. Similarly to the Ising model, the MSE differences are due to the asymptotic variabilities implied by $\mathcal{P}_1$ and $\mathcal{P}_3$, but they are more pronounced in this case. The estimates based on $\mathcal{P}_3$ exhibit a sharp drop at the beginning of the optimization and reach the performance of the numerical estimator almost always after one pass through the data. For the estimates based on $\mathcal{P}_1$, the convergence is much slower and does not match the numerical approximation even after the maximum length tested of three passes. 
The optimization noise of $\mathcal{P}_3$ drops almost immediately to negligible levels, leading the variance of $\bar{\bm{\theta}}_{\mathcal{P}_3}$ to overlap with the one from numerical estimation closely. The noise generated by $\mathcal{P}_1$ persists much longer, translating into higher variances for the stochastic estimates throughout the optimization and, hence, higher MSE. 
Note that \texttt{recycle\_standard} is faster than \texttt{recycle\_hyper} when both are stopped at the same $T_n$.
However, the simulations show that even after $T_n = n$, \texttt{recycle\_hyper} is already closer to the numerical estimates than \texttt{recycle\_standard} at $T_n = 3n$. Thus, it represents a more efficient alternative to \texttt{recycle\_standard} both computationally and statistically. In addition, the low variability of the \texttt{recycle\_hyper} also allows for larger steps than what shown here, which permits stopping the optimization even earlier than $T_n = n$. For space reasons, we refer to the online Appendix C for additional details and results about the experiments of this section.

\section{A Network Analysis of Mental Health Data}
\label{sec:real}
To illustrate the power of the proposed methodology, we consider an application of the Ising model to the mental health data from the Epidemiologic Survey on Alcohol and Related Conditions (NESARC) - Wave 1. The NESARC is a nationally representative survey of the United States adult population, which gathered data on alcohol behavior and mental health disorders from April 2001 to June 2002 \citep{grant2003}.

We take the network psychometrics approach \cite[e.g.,][]{epskamp2018, borsboom2022}, viewing symptoms as nodes of an unknown graph and direct symptom-to-symptom interactions as edges whose parameters are to be estimated. 
We select $p = 32$ items related to antisocial disorders, high mood, low mood, panic and personality disorders, and social and other specific forms of phobia. Therefore, the dimension of the parameter space is $d = 528$. 
The items are selected among the ones with the lowest missing response rates, avoiding screening items and related ones, and the remaining observations with missing values were discarded, leaving the dataset with a total of $31,826$ respondents. See the online Appendix D for the description of the 32 items considered. We hold out $10\%$ of the available observations as a validation set to monitor the out-of-sample behavior of the negative composite log-likelihood during the iterations. The training partition retains $n=28,643$ observations.

The model is estimated using the hypergeometric sampling of Definition~\ref{def:hyper:w}. Given the large sample size, we set the recycling window at $l=1,000$ and burn-in period $B = 0.25 n$. The stepsize scheduling is defined by $c=.501$ and $\eta_0$ chosen by halving an initial proposal until the holdout negative composite log-likelihood performance ceases improving when evaluated at $T_n = n$. The selected value is $\eta_0=5$.
After every $0.25n$ iterations, the algorithm performs a new evaluation of the holdout negative log-likelihood. When the improvement falls under $0.1\%$, the algorithm stops. In our case, it stops at $T_n = 1.75n$. The full estimation procedure, including the initial stepsize selection, took almost $15$ seconds when run on a single core of a personal laptop\footnote{Intel i5-2520M; RAM 8 GB; R version 4.3.0; gcc version 13.1.1; 4x 3.2GHz, OS Manjaro Linux 23.0.0}. As a benchmark, the numerical estimator took more than half an hour to converge on the same hardware, providing similar results.

At the end of the stochastic estimation, the asymptotic standard errors are computed by estimating the covariance matrix of $\bar{\bm{\theta}}_{\mathcal{P}_3}$ under Regime 3 of Theorem~\ref{th:main} using the usual sample estimators of $\bm{H}$ and $\bm{J}$. To investigate the structure of the estimated graphical model, all the $d$ parameters are tested against the null hypothesis of being zero. The resulting p-values are then adjusted via the Holm correction \citep{holm1979} to control for the family-wise error rate across the $d$ hypothesis at level $0.01$. The procedure identifies $17.1\%$ of the possible edges as statistically significant, as visualized in Figure~\ref{fig:mh:struct}. 

\begin{figure}[t]\centering
	\includegraphics[width=.7\textwidth]{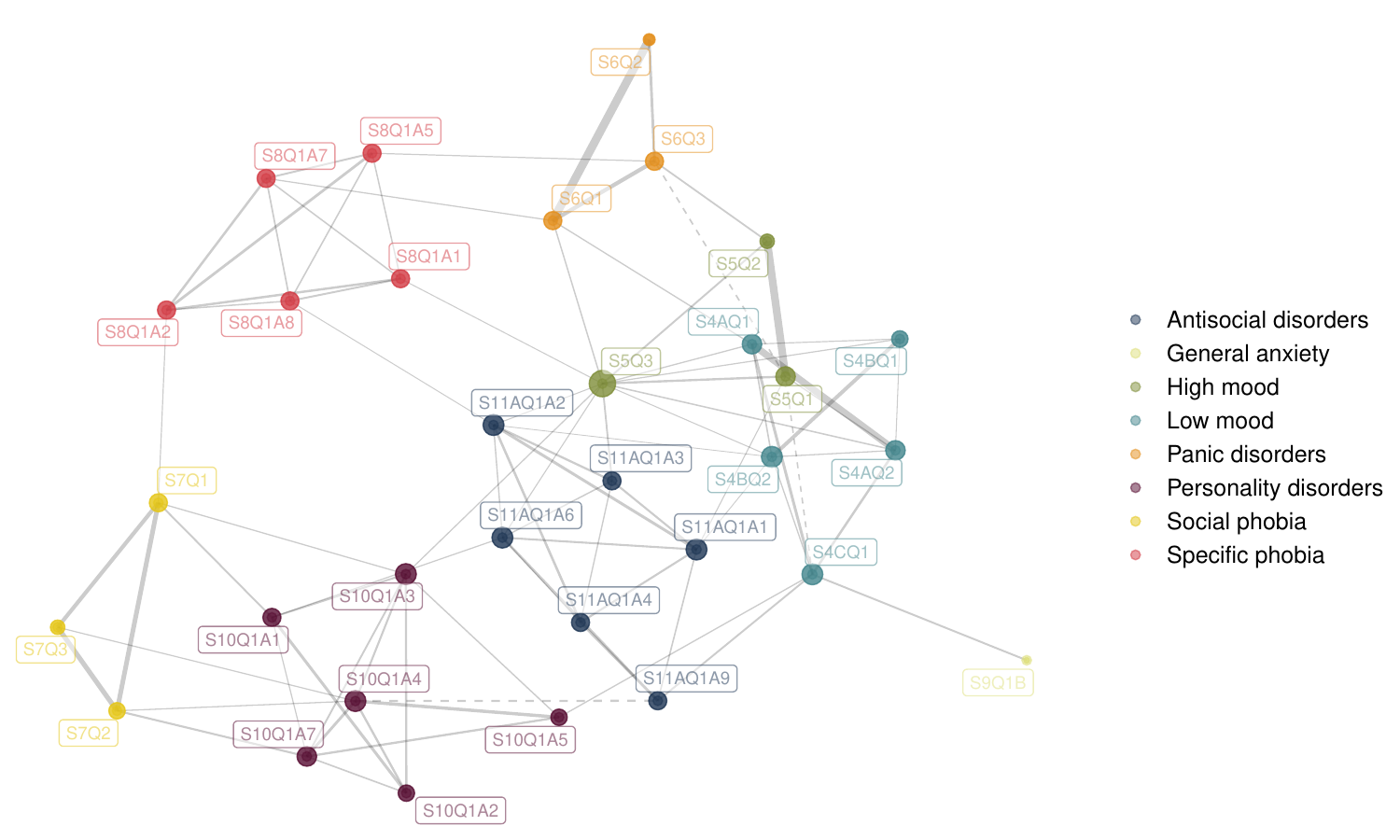}
	\caption{\label{fig:mh:struct} Graphical structure of mental health disorders. Node colors refer to specific survey areas. Solid and dashed lines stand for positive and negative estimated edges. Edge width is proportional to the absolute estimated coefficient.}
\end{figure}

The identified graphical structure highlights how symptoms of the same disorder tend to cluster together with dense positive connections. 
Instead, relationships among different disorders are less strict, with some of them being slightly negative.  Furthermore, the sparsity induced by the non-significance of many edges allows for providing conditional independence statements among symptom areas. For example, panic disorders are conditionally independent of the rest of the graph, given the specific phobias and mood disorders areas. The same can be said for the social phobia area, which is independent of mood disorders, for example, when conditioned on personality disorders and other specific types of phobias. Similar reasoning can be used to investigate symptoms belonging to the same area. For example, item \texttt{S6Q2} concerns the experience of feeling erroneously in danger after a panic attack. This symptom is isolated from the others when the remaining two items related to panic attacks, \texttt{S6Q1} and \texttt{S643}, are considered. In particular, such items refer to experiencing panic episodes for no real reason and misinterpreting nerves as a heart attack. Finally, it can also happen that single items separate two specific portions of the graph. That is the case of the node \texttt{S4CQ1}, which concerns having experienced two or more years of depression and separates from the rest of the symptoms item \texttt{S9Q1B},  which is related to experiencing six months or longer of nervousness about everyday problems.

\section{Discussion}
When the optimization noise is non-negligible, it is crucial to properly quantify the uncertainty around stochastic approximations if the goal is to run valid frequentist inferences about true parameters. We show how the asymptotic variance of such estimators compounds two sources of uncertainty: the sampling variability of the data and the noise injected in the procedure by the SGs. We optimize composite likelihoods by constructing the SGs using a hypergeometric sampling of sub-likelihood contributions, which enhances their statistical and computational efficiencies. The resulting estimator is a flexible inferential tool for applied research. In contrast to existing methods that discard part of the data to reduce computational times \citep{dillon2010stochastic, mazo2023randomized}, our proposal utilizes available information more parsimoniously, leading to improved statistical efficiency. Additionally, a small experiment  in online Appendix E compares our method with the randomized pairwise likelihood estimator of \citet{mazo2023randomized}. The results confirm that spreading the usage of likelihood components across iterations via stochastic optimization rather than discarding them leads to improved estimation performances. 

Various extensions of the proposed method are possible by expanding the scope of the parameter update in Algorithm~\ref{algo:csgd}. A first straightforward extension relates to quasi-Newton alternatives of standard SG descent \citep{byrd2011}. Such extensions can be quite effective in practice because they adapt the steps to the different scales of each parameter, which typically improves the convergence of the estimator. 

A second forthright extension enriches the update step to account for proximal operators, allowing for non-differentiable terms like projections and lasso penalties, as investigated in \citet{atchade2017} and \citet{zhang2022}. 

Nevertheless, the current proposal still has limitations, particularly when making inferences with an increasing parameter space. From a computational perspective, it does not address the challenge of computing $\bm{H}^{-1}$ and $\bm{J}$. 

It is important to have accurate estimates of both $\bm{H}^{-1}$ and $\bm{J}$ to construct reliable confidence intervals. However, estimating these quantities can be computationally challenging, especially with large numbers of parameters due to the matrix inversion required. 
Furthermore, this work focuses on settings where traditional frequentist estimation is theoretically adequate but computationally inconvenient, such as moderate parameter spaces with much larger sample sizes. Further research is needed to expand the current theoretical framework to settings where a regularization term is necessary to identify the parameters of interest. Conducting inference in such settings is complicated due to the bias introduced by regularization. We are exploring potential solutions based on recent advances in debiasing techniques for lasso-based estimators. These methods have gained popularity in both offline settings \citep{jankova2018} and with streaming data  \citep{han2023}. Addressing these theoretical challenges could enable composite likelihood inference
for large-scale data. 


    \printbibliography

\section{Consistency}
Let $f(\bm{\theta})=n^{-1}E_Y\{-\cln\}$. Then, we can decompose the stochastic gradient with $n^{-1}{\bm{S}_t} = \nabla f(\bm{\theta}_{t-1}) + {\bm{\xi}}({\bm{\theta}_{t-1}},  {\bm{W}_t}, \bm{Y}_1, \dots, \bm{Y}_n)$, where ${\bm{\xi}}({\bm{\theta}},  {\bm{W}_t}, \bm{Y}_1, \dots, \bm{Y}_n)$ represents the difference between the stochastic gradient and the gradient of the target objective function when both are evaluated at $\bm{\theta}$ during iteration $t$. 
For conciseness, in the following, we use the notation ${\bm{\xi}} = {\bm{\xi}}({\bm{\theta}}, {\bm{W}}, \bm{Y}_1, \dots, \bm{Y}_n)$, ${\bm{\xi}_t} = {\bm{\xi}}({\bm{\theta}}_{t-1}, {\bm{W}_t}, \bm{Y}_1, \dots, \bm{Y}_n)$ and ${\bm{\xi}_t}^* = {\bm{\xi}}({\bm{\theta}}^*, {\bm{W}_t}, \bm{Y}_1, \dots, \bm{Y}_n)$. Furthermore, we denote with $\mathcal{F}_{t-1}$, the filtration up to the iteration $t$ and with $\bm{\Delta}_{t}=\bm{\theta}_t-\bm{\theta}^*$ the error of the non-averaged estimate.

\textit{Proof of Proposition~1:} The proof of Proposition~1 is an adaptation of Lemma 17 in \citet{su2023} where we center the algorithm around $\bm{\theta}^*$ by taking expectations with respect to the joint distribution of $\bm{W}$ and $\bm{Y}$. In our case, thus, bounds on the stochastic gradients are influenced not only by the behaviour of the likelihood function but also by the noise introduced by the weights. To ease the presentation of the proof, we make use of Lemmata \ref{lemma:asssu} and \ref{lemma:stgr}.


\begin{lemma}
\label{lemma:asssu}
    Under Assumptions~1-4, the function $f(\bm{\theta})$ is convex and its gradient $\nabla f(\bm{\theta})$ is Lipschitz continuous. In addition, $\nabla^2f(\bm{\theta})$ exists in a neighborhood of $\bm{\theta}^*$ with $\nabla^2f(\bm{\theta})$ being positive-definite in $\bm{\theta}^*$ and satisfying 
    $\lVert \nabla^2f(\bm{\theta})-\nabla^2 f(\bm{\theta}^*)\rVert \leq L\lVert\bm{\theta}-\bm{\theta}^*\rVert$
    with $\lVert\bm{\theta}-\bm{\theta}^*\rVert<\delta$ for some $\delta,L>0$.
\end{lemma}
\textit{Proof of Lemma~\ref{lemma:asssu}:} The proof follows straightforwardly given Assumptions ~1-3. Note that, by Assumption~2, we can exchange the order of integration and differentiation using Corollary 5.9 in \citet{bartle1966}. Since, by Assumption~3, $E_{\bm{Y}}\{\ell_k(\bm{\theta},\bm{Y})\}$ is concave for all $k=1,\dots,K$, then it holds that $\sum_{k=1}^KE_{\bm{Y}}\{\ell_k(\bm{\theta},\bm{Y})\}$ is still concave and, thus, $f(\bm{\theta})$ is convex. Given the Lipschitz continuity of the single expected log sub-likelihood gradients, for some $\bm{\theta},\bm{\theta}'$, we can write
\begin{align*}
&\rVert \nabla f(\bm{\theta}') - \nabla f(\bm{\theta})\lVert = \rVert \sum_{k=1}^K[E_{\bm{Y}}\{\nabla\ell_k(\bm{\theta}',\bm{Y})\} - E_{\bm{Y}}\{\nabla\ell_k(\bm{\theta},\bm{Y})\}]\lVert \leq KL_{1}\rVert \bm{\theta} - \bm{\theta}' \lVert
\end{align*}
implying that $\nabla f(\bm{\theta})$ is Lipschitz continuous with constant $KL_{1}>0$. In addition, by Assumption~2 it directly follows that $\nabla^2f(\bm{\theta})$ exists for $\lVert\bm{\theta}-\bm{\theta}^*\rVert\leq\delta$. Furthermore, $\nabla^2f(\bm{\theta}^*)$ is positive-definite since $\sum_{k=1}^K\nabla^2E_{\bm{Y}}\{\ell_k(\bm{\theta}^*,\bm{Y})\}$ is negative definite by Assumption~3. Finally, we can show the local Lipschitz continuity of $\nabla^2f(\bm{\theta}^*)$ by stating that, for $\lVert\bm{\theta}-\bm{\theta}^*\rVert<\delta$, it holds that
\begin{align*}
\rVert \nabla^2 f(\bm{\theta}') - \nabla^2 f(\bm{\theta})\lVert = \rVert \sum_{k=1}^K[E_{\bm{Y}}\{\nabla^2\ell_k(\bm{\theta}',\bm{Y})\} - E_{\bm{Y}}\{\nabla^2\ell_k(\bm{\theta},\bm{Y})\}]\lVert \leq KL_{2}\rVert \bm{\theta} - \bm{\theta}' \lVert,
\end{align*}
which completes the proof of Lemma~\ref{lemma:asssu}.

\begin{lemma}
\label{lemma:stgr}
Under Assumptions~1, 2 and 4, it holds that $E\big\{\big\lVert\bm{\xi}_t\big\rVert^2|\mathcal{F}_{t-1}\big\}\leq C\big(1+\lVert\Delta_{t-1}\rVert^2\big)$ for some $C>0$. 

\end{lemma}
\textit{Proof of Lemma~\ref{lemma:stgr}:}
Let us start by noting that $\lVert\bm{\xi}_t\rVert^2\leq 2 \lVert n^{-1}\bm{S}_t\rVert^2 + 2\lVert \nabla f(\bm{\theta}_{t-1})\rVert^2$. For the first term, consider that 
\begin{align*}
    E\big\{\big\lVert n^{-1}\bm{S}_t\big\rVert^2|\mathcal{F}_{t-1}\big\} &= \frac{1}{n^2\gamma_1^2}E\big\{\big\lVert \sum_{i=1}^n\sum_{k=1}^K w_{ik}\nabla\ell_k(\bm{\theta}_{t-1}; \bm{Y}_i) \big\rVert^2|\mathcal{F}_{t-1}\big\}\\
    &=\frac{1}{n^2\gamma_1^2}\sum_{i=1}^n\sum_{j=1}^n\sum_{k=1}^K\sum_{h=1}^K E\big\{w_{ik}w_{jh}\big\}E\big\{\big\lvert\nabla\ell_k(\bm{\theta}_{t-1}; \bm{Y}_i)^\top\nabla\ell_h(\bm{\theta}_{t-1}; \bm{Y}_j)\big\rvert|\mathcal{F}_{t-1}\big\}\\
    &=\frac{1}{n^2\gamma_1^2}\sum_{i=1}^n\sum_{k=1}^K\sum_{h=1}^K E\big\{w_{ik}w_{ih}\big\}E\big\{\big\lvert\nabla\ell_k(\bm{\theta}_{t-1}; \bm{Y}_i)^\top\nabla\ell_h(\bm{\theta}_{t-1}; \bm{Y}_i)\big\rvert|\mathcal{F}_{t-1}\big\} +\\
    &+\frac{2}{n^2\gamma_1^2}\sum_{i<j}\sum_{k=1}^K\sum_{h=1}^K E\big\{w_{ik}w_{jh}\big\}E\big\{\big\lvert\nabla\ell_k(\bm{\theta}_{t-1}; \bm{Y}_i)^\top\nabla\ell_h(\bm{\theta}_{t-1}; \bm{Y}_j)\big\rvert|\mathcal{F}_{t-1}\big\}\\
    &=\frac{1}{n}\sum_{i=1}^n\sum_{k=1}^K\sum_{h=1}^K\frac{E\big\{w_{ik}w_{ih}\big\}}{n\gamma_1^2} E\big\{\big\lvert\nabla\ell_k(\bm{\theta}_{t-1}; \bm{Y}_i)^\top\nabla\ell_h(\bm{\theta}_{t-1}; \bm{Y}_i)\big\rvert|\mathcal{F}_{t-1}\big\} +\\
    &+\frac{2}{n^2}\sum_{i<j}\sum_{k=1}^K\sum_{h=1}^K \frac{E\big\{w_{ik}w_{jh}\big\}}{\gamma_1^2}E\big\{\big\lvert\nabla\ell_k(\bm{\theta}_{t-1}; \bm{Y}_i)^\top\nabla\ell_h(\bm{\theta}_{t-1}; \bm{Y}_j)\big\rvert|\mathcal{F}_{t-1}\big\}
\end{align*}
Note that, by Assumption~2, both $E\big\{\big\lVert\nabla\ell_k(\bm{\theta}; \bm{Y})\big\lVert\big\}$ and
$E\big\{\big\lvert\nabla\ell_k(\bm{\theta}_{t-1}; \bm{Y})^\top\nabla\ell_h(\bm{\theta}_{t-1}; \bm{Y})\big\rvert\big\}$ are finite.  By Assumption~4, instead, we have $E\big\{w_{ik}w_{jh}\big\}/\gamma_1^2<\infty$ and $E\big\{w_{ik}w_{ih}\big\}/(n\gamma_1^2)<\infty$ for all $i\neq j$ and $k,h=1,\dots,K$. 
Then, by means of the Cauchy-Schwarz inequality, for some $C,C'>0$ we can write
\begin{align*}
    E\big\{\big\lVert n^{-1}\bm{S}_t\big\rVert^2|\mathcal{F}_{t-1}\big\}
    &\leq C'\sum_{k=1}^K\sum_{h=1}^KE\big\{\big\lvert\nabla\ell_k(\bm{\theta}_{t-1}; \bm{Y})^\top\nabla\ell_h(\bm{\theta}_{t-1}; \bm{Y})\big\rvert|\mathcal{F}_{t-1}\big\} +\\
    &+C'\sum_{k=1}^K\sum_{h=1}^K E\big\{\big\lVert\nabla\ell_k(\bm{\theta}_{t-1}; \bm{Y})\big\rVert|\mathcal{F}_{t-1}\big\}E\big\{\big\lVert\nabla\ell_h(\bm{\theta}_{t-1}; \bm{Y})\big\rVert|\mathcal{F}_{t-1}\big\}\\
    &\leq C.
\end{align*}
Finally, since by Lemma~\ref{lemma:asssu} we can write $\lVert\nabla f(\bm{\theta}_{t})\rVert^2\leq L \lVert \bm{\Delta}_t\rVert^2$, then it holds that $E\big\{\big\lVert\bm{\xi}_t\big\rVert^2|\mathcal{F}_{t-1}\big\}\leq C(1+\lVert \bm{\Delta}_{t-1} \rVert^2)$ for some $C$ large enough, which completes the proof of Lemma~\ref{lemma:stgr}. 

The proof of Proposition~1 proceeds by showing that the error of the Robbins-Monro update, $\Delta_t=\bm{\theta}_t-\bm{\theta}^*$, vanishes as $n\xrightarrow{} \infty$. The general argument is that $\bm{\theta}_t$ will get increasingly closer to $\bm{\theta}^*$ in $L_2$ terms because of the convexity of $f(\bm{\theta})$ and the smoothness of its gradient.  As soon as the estimate gets sufficiently close to $\bm{\theta}^*$, then the convergence starts taking advantage of the local strong convexity geometry of $f(\bm{\theta})$. 

First, recall that we defined $\bm{\xi}_t=n^{-1}\bm{S}_t-\nabla f(\bm{\theta}_{t-1})$. By Lemma~\ref{lemma:stgr} it holds that $E\big\{\big\lVert\bm{\xi}_t\big\rVert^2|\mathcal{F}_{t-1}\big\}\leq C\big(1+\lVert\Delta_{t-1}\rVert^2\big)$ for some $C>0$. Additionally, by Lemma~\ref{lemma:asssu}, we can write $\rVert \nabla f(\bm{\theta})\lVert\leq L\rVert \bm{\theta}-\bm{\theta}^* \lVert$ with $L>0$ for all $\bm{\theta}$. Therefore,  it follows that
\begin{align}
\label{eq:EDelta2}
    E\big\{\lVert\bm{\Delta}_t\rVert^2|\mathcal{F}_{t-1}\big\}&=E\big\{\big\lVert \bm{\theta}_{t-1}-\frac{\eta_t}{n}\bm{S}_t-\bm{\theta}^*\big\rVert^2|\mathcal{F}_{t-1}\big\} \\
    &= \big\lVert \bm{\theta}_{t-1}-\eta_t\nabla f(\bm{\theta}_{t-1})-\bm{\theta}^*\big\rVert^2+E\big\{\big\lVert \eta_t\bm{\xi}_t\big\rVert^2|\mathcal{F}_{t-1}\big\}\notag\\
    &=\big\lVert \bm{\Delta}_{t-1}\big\rVert^2 - 2\eta_t\bm{\Delta}_{t-1}^\top\nabla f(\bm{\theta}_{t-1}) + \eta_t^2\lVert\nabla f(\bm{\theta}_{t-1})\rVert^2 + \eta_t^2E\big\{\big\lVert \bm{\xi}_t\big\rVert^2|\mathcal{F}_{t-1}\big\}\notag\\
    &\leq \big\lVert \bm{\Delta}_{t-1}\big\rVert^2 - 2\eta_t\bm{\Delta}_{t-1}^\top\nabla f(\bm{\theta}_{t-1}) + \eta_t^2L^2\lVert\bm{\Delta_{t-1}}\rVert^2 + \eta_t^2C\big(1+\lVert\bm{\Delta}_{t-1}\rVert^2\big)\notag\\
    &\leq \big(1+\eta_t^2L^2+\eta_t^2C'\big)\lVert\bm{\Delta}_{t-1}\rVert^2 -  2\eta_t{\bm{\Delta}_{t-1}}^\top\nabla f(\bm{\theta}_{t-1}) + \eta_t^2C'\notag\\
    &\leq \big(1+\eta_t^2L^2+\eta_t^2C'\big)\lVert\bm{\Delta}_{t-1}\rVert^2 -  2\eta_t\delta\lVert\bm{\Delta}_{t-1}\rVert\min\{\lVert\bm{\Delta}_{t-1}\rVert,\delta\} + \eta_t^2C,\notag
\end{align}
where the last inequality holds by Lemma 21 in \citet{su2023}, which elicit the above-used upper bound for ${\bm{\Delta}_{t-1}}^\top\nabla f(\bm{\theta}_{t-1})$ when $f(\bm{\theta})$ is a generic globally convex function strongly convex on a $\delta$-neighbourhood of $\bm{\theta^*}$.
By definition, $\lVert\bm{\Delta}_t\rVert^2, \eta_t, C, L>0$, while $\lim_{n\xrightarrow{}\infty}\sum_{t}^{T_n}\eta_t^2C<\infty$ and $\lim_{n\xrightarrow{}\infty}\sum_{t}^{T_n}\eta_t^2(L^2+C)<\infty$ by Assumption~5. Thus, we can apply the Robbins-Siegmund Theorem (\citealp{robbins1971}), which guarantees that 
\begin{align}
\label{eq:convergenceDelta2}
    &\lVert \bm{\Delta}_{T_n} \rVert^2 \xrightarrow[n]{}\psi\text{ with $0\leq\psi<\infty$}\quad\text{ a.s.},
\end{align}
and $\sum_{t}^{T_n}2\eta_t\delta\lVert\bm{\Delta}_{t-1}\rVert\min\{\lVert\bm{\Delta}_{t-1}\rVert,\delta\}<\infty$ almost surely.
Provided that $\lim_{n\xrightarrow{}\infty}\sum_{t}^{T_n}\eta_t=\infty$ by Assumption~5, the latter convergence statement implies
\begin{align}
\label{eq:convergenceDelta}
    \lVert\bm{\Delta}_{T_n}\rVert\xrightarrow[n]{}0\quad\text{ a.s.,}
\end{align} 
which concludes the proof of Proposition~1.

\section{Asymptotic Normality }
\label{app:main}
\subsection{Negligibility of the error terms}
\textbf{Proposition 2.}
Let Assumptions 1-5 be satisfied. Let the average error of the algorithm be defined as $\bar{\bm{\Delta}}_t=t^{-1}\sum_{j=1}^t{\bm{\Delta}}_j$ with ${\bm{\Delta}}_t = {\bm{\theta}}_{t}-{\bm{\theta}}^*$. Then, for $0\leq \alpha \leq 1$, it holds that
\begin{itemize}
\item Regime 1: If $\alpha=0$, then
\begin{align*}
 \sqrt{n}\bar{\bm{\Delta}}_{T_n} =\frac{\sqrt{n}}{T_n}\sum_{t=1}^{T_n-1}{\bm{H}}^{-1}{\bm{\xi}_{t}^*} + o_p(1),
\end{align*}
\item Regime 2: If $\alpha=1$, then
\begin{align*}
 \sqrt{T_n}\bar{\bm{\Delta}}_{T_n} =\frac{\sqrt{T_n}}{T_n}\sum_{t=1}^{T_n-1}{\bm{H}}^{-1}{\bm{\xi}_{t}^*} + o_p(1),
\end{align*}
\item Regime 3: If $0<\alpha<1$, then
\begin{align*}
 \sqrt{T_n+ n}\bar{\bm{\Delta}}_{T_n} =\frac{\sqrt{T_n+n}}{T_n}\sum_{t=1}^{T_n-1}{\bm{H}}^{-1}{\bm{\xi}_{t}^*} + o_p(1),
\end{align*}
\end{itemize}
where $T_n\rightarrow\infty$ as $n\rightarrow\infty$.

\textit{Proof of Proposition~2:}
Recall that the generic update takes the form $\bm{\theta}_t=\bm{\theta}_{t-1}-\eta_t\bm{S}_t/n$, and that we can decompose the stochastic gradient with $n^{-1}{\bm{S}_t} = \nabla f(\bm{\theta}_{t-1}) + \bm{\xi}_t$. By rearranging the terms we reformulate the update as $\nabla f(\bm{\theta}_{t-1})=(\bm{\theta}_t-\bm{\theta}_{t-1})/\eta_t-\bm{\xi}_t$. Furthermore, by taking the expansion of $\nabla f(\bm{\theta}_{t-1})$ around $\bm{\theta}^*$, i.e. $\nabla f(\bm{\theta}_{t-1})=\bm{H}\bm{\Delta}_{t-1} + \bm{r}_{t-1}$, the update takes the form
\begin{align}
\label{eq:upd_decomp}
\bm{\Delta}_{t-1} = \bm{H}^{-1}\frac{\bm{\theta}_t-\bm{\theta}_{t-1}}{\eta_t}-\bm{H}^{-1}\bm{\xi}_t - \bm{H}^{-1}\bm{r}_{t-1}.
\end{align}
Recall that we are interested in the asymptotic behaviour of $\bar{\bm{\Delta}}_{T_n}$, which can be rearranged as
\begin{align*}
    \bar{\bm{\Delta}}_{T_n}=\frac{1}{T_n}\sum_{t=1}^{T_n}\bm{\Delta}_t = \frac{1}{T_n}\sum_{t=1}^{T_n}\bm{\Delta}_{t-1} + \frac{1}{T_n}\sum_{t=1}^{T_n}\bm{\theta}_t-\bm{\theta}_{t-1} = \frac{1}{T_n}\sum_{t=1}^{T_n}\bm{\Delta}_{t-1} + \frac{\bm{\theta}_{T_n}-\bm{\theta}_{0}}{T_n}
\end{align*}
We proceed assuming $0<\alpha <1$. Then, the other two cases follow similarly. Note that, in such an asymptotic regime,  $(T_n+n)/T_n\xrightarrow[n]{}1/(1-\alpha)$, which is a positive finite constant, and therefore we rescale that sum by $\sqrt{T_n+n}$.
Thus, by  using (\ref{eq:upd_decomp}), it follows that
\begin{align*}    
\frac{\sqrt{T_n+n}}{T_n}\sum_{t=1}^{T_n}\bm{\Delta}_{t} &= \frac{\sqrt{T_n+n}}{T_n} \left(\bm{\theta}_{T_n}-\bm{\theta}_{0}\right)\tag{$\mathbf{I}^{(0)}$}\\
&-\frac{\sqrt{T_n+n}}{T_n}\bm{H}^{-1}\sum_{t=1}^{T_n}\bm{\xi}_t+ \tag{$\mathbf{I}^{(1)}$}\\
&- \frac{\sqrt{T_n+n}}{T_n}\bm{H}^{-1}\sum_{t=1}^{T_n}\bm{r}_{t-1}+\tag{$\mathbf{I}^{(2)}$}\\
&+\frac{\sqrt{T_n+n}}{T_n}\bm{H}^{-1}\sum_{t=1}^{T_n}\frac{\bm{\theta}_t-\bm{\theta}_{t-1}}{\eta_t}\tag{$\mathbf{I}^{(3)}$}\\
&=\mathbf{I}^{(0)}+\mathbf{I}^{(1)}+\mathbf{I}^{(2)}+\mathbf{I}^{(3)}.
\end{align*}
Let us start considering $\mathbf{I}^{(0)}$ and note that $\bm{\theta}_{T_n}-\bm{\theta}_{0}=\bm{\Delta}_{T_n}-\bm{\Delta}_0$. Since $\lVert\bm{\Delta}_0\rVert$ is finite and $\bm{\Delta}_{T_n}\xrightarrow{}0$ almost surely by Proposition~1, then $\mathbf{I}^{(0)}\xrightarrow{}0$ in probability as $n\xrightarrow{}\infty$. The proof of the asymptotic negligibility of $\mathbf{I}^{(2)}$ and $\mathbf{I}^{(3)}$ is identical to the proofs of Lemma 13 and Lemma 14 in \citet{su2023} after remarking that $\sqrt{T_n+n}/\sqrt{T_n}$ is finite in our setting. In particular, $\mathbf{I}^{(2)}$ is shown to converge to zero in probability by noting that $\lVert \bm{r}_{t-1}\rVert$ can be bounded, up to a constant, with $\lVert\bm{\Delta}_{t-1}\rVert$, when $\lVert\bm{\Delta}_{t-1}\rVert>\delta$, by taking advantage of the Lipschitz continuity of the gradients (see Lemma~\ref{lemma:asssu}), and with $\lVert\bm{\Delta}_{t-1}\rVert^2$ when $\lVert\bm{\Delta}_{t-1}\rVert\leq\delta$ and $\nabla f(\bm{\theta_{t-1}})$ can be expanded for all $\bm{\theta}'=c\bm{\theta}_{t-1}+(1-c)\bm{\theta}^*$, with $0\leq c\leq 1$, by aknowledging $\nabla^2 f(\bm{\theta_{t-1}})$ being Lipschitz continuous (see Lemma~\ref{lemma:asssu}). 
Since $\lVert\bm{\Delta}_{t-1}\rVert$ goes to zero almost surely by Proposition~1, the proof focuses on showing that also $T_n^{-1/2}\sum_t\lVert\bm{\Delta}_{t-1}\rVert^2$ converges in probability to zero by taking advantage of (\ref{eq:EDelta2}). The proof of $\mathbf{I}^{(3)}$ going to zero in probability follows by recognizing that, under Assumption~5, $\lim_{n\xrightarrow{}\infty} T_n\eta_{T_n}=\infty$.
We refer the readers to Appendix B.1 in \citet{su2023} for the detailed proofs.

With $\mathbf{I}^{(0)}+\mathbf{I}^{(2)}+\mathbf{I}^{(3)}$ converging to zero in probability, the only remaining term defining the asymptotic distribution of $\bm{\Delta}_{t}$ is given by $\mathbf{I}^{(1)}$. Given  that $\nabla\ell_k(\bm{\theta},\bm{y})$ is continuous in $\bm{\theta}$ by Assumption~2, also $\bm{\xi}$ is continuous in $\bm{\theta}$ for all $\bm{y}\in\mathcal{Y}$. Therefore, by Proposition~2 and continuous mapping, $\bm{\xi}_t\xrightarrow{}\bm{\xi}^*$ almost surely as $n$ diverges (see, for example, Theorem 2.3 in \citealp{van2000}), which completes the proof of Proposition~2.

\subsection{Normality}
\begin{lemma}
\label{lemma:w4} If $\gamma_1=\Theta(1/n)$, under Assumption~4, for $i,j=1,\dots,n$ and $i\neq j$, it holds that
    \begin{align*}
    \lim_{n\xrightarrow{}\infty} \frac{1}{T_n^4\gamma_1^4} \sum_{t,t'}\sum_{l,l'}\text{Cov}\left\{W_{i,k,t}W_{i,k',t'}, W_{j,h,l}W_{j,h',l'}\right\} = 0,
\end{align*}
if $n^{2/3}/T_n\xrightarrow{}0$ as $n$ diverges.
\end{lemma}
\textit{Proof of Lemma~\ref{lemma:w4}:}
As long as there are no iterations in common between the terms $W_{i,k,t}W_{i,k',t'}$ and $W_{j,h,l}W_{j,h',l'}$, their covariance is equal to zero. Therefore, we need only to focus on the cases of one or two shared iterations. In the following, we divide the total sum in partial sums according to the possible configurations of the iteration indexes $t,t',l,l'=1,\dots,T_n$ and show that the limit holds on each of them.
\begin{enumerate}
    \item One index shared (e.g. $t\neq t' = l \neq l' \neq t$): Such a configuration has $O(T_n^3)$ summands. In particular, we get
    \begin{align*}
        &\lim_{n\xrightarrow{}\infty} \frac{O(T_n^3)}{T_n^4\gamma_1^4} \text{Cov}\left\{W_{i,k,t}W_{i,k',t'}, W_{j,h,t'}W_{j,h',l'}\right\} =\\
        &\quad= \lim_{n\xrightarrow{}\infty} \frac{O(1)}{T_n\gamma_1^4} \left[E\{W_{i,k}\}E\{W_{j,h'}\}E\{W_{i,k'} W_{j,h}\}-E\{W_{i,k}\}E\{W_{i,k'}\}E\{W_{j,h}\}E\{W_{j,h'}\}\right]\\
        &\leq \lim_{n\xrightarrow{}\infty} \frac{O(1)}{T_n\gamma_1^4} (O(1)\gamma_1^4 - \gamma_1^4)=\lim_{n\xrightarrow{}\infty} \frac{O(1)}{T_n}=0
    \end{align*}
    \item Two varying indexes, one shared (e.g. $t = t' = l \neq l'$): Such a configuration has $O(T_n^2)$ summands. Hence
    \begin{align*}
        &\lim_{n\xrightarrow{}\infty} \frac{O(T_n^2)}{T_n^4\gamma_1^4} \text{Cov}\left\{W_{i,k,t}W_{i,k',t}, W_{j,h,t}W_{j,h',l'}\right\} =\\
        &\quad=\lim_{n\xrightarrow{}\infty} \frac{O(1)}{T_n^2\gamma_1^4} \left[E\left\{W_{i,k}W_{i,k'}W_{j,h}\right\}E\{W_{j,h'}\}-E\{W_{i,k}W_{i,k'}\}E\{W_{j,h}\}E\{W_{j,h'}\}\right]\\
        &\quad= \lim_{n\xrightarrow{}\infty} \frac{O(1)}{T_n^2\gamma_1^4}\left[E\left\{W_{i,k}W_{i,k'}W_{j,h}\right\}\gamma_1-\gamma_1^2E\{W_{i,k}W_{i,k'}\}\right]\\
        &\quad\leq\lim_{n\xrightarrow{}\infty} \frac{O(1)}{T_n^2\gamma_1^4}\left(O(1)\gamma_1^3-O(1)\gamma_1^3\right)=\lim_{n\xrightarrow{}\infty} \frac{O(1)}{T_n^2\gamma_1}=0\iff \lim_{n\xrightarrow{}\infty} \frac{n^{1/2}}{T_n}=0
    \end{align*}
    \item Two varying indexes, two shared (e.g. $t = l \neq t' = l'$): Such a configuration has $O(T_n^2)$ summands. Therefore
    \begin{align*}
        &\lim_{n\xrightarrow{}\infty} \frac{O(T_n^2)}{T_n^4\gamma_1^4} \text{Cov}\left\{W_{i,k,t}W_{i,k',t'}, W_{j,h,t}W_{j,h',t'}\right\} =\\
        &\quad=\lim_{n\xrightarrow{}\infty} \frac{O(1)}{T_n^2\gamma_1^4} \left[E\{W_{i,k} W_{j,h}\}E\{W_{i,k'}W_{j,h'}\}- E\{W_{i,k}\}E\{W_{i,k'}\}E\{W_{j,h}\}E\{W_{j,h'}\}\right]\\
        &\leq  \lim_{n\xrightarrow{}\infty} \frac{O(1)}{T_n^2\gamma_1^4}(O(1)\gamma_1^4 - \gamma_1^4) = \lim_{n\xrightarrow{}\infty} \frac{O(1)}{T_n^2}=0
    \end{align*}
    \item All indexes equal (e.g. $t = t' = l = l'$): Such a configuration has $O(T_n)$ summands. Then,
    \begin{align*}
        &\lim_{n\xrightarrow{}\infty} \frac{O(T_n)}{T_n^4\gamma_1^4} \text{Cov}\left\{W_{i,k,t}W_{i,k',t}, W_{j,h,t}W_{j,h',t}\right\}=\\
        &\quad=\lim_{n\xrightarrow{}\infty} \frac{O(1)}{T_n^3\gamma_1^4}\left[E\{W_{i,k}W_{i,k'}W_{j,h}W_{j,h'}\}- E\{W_{i,k}W_{i,k'}\}E\{W_{j,h}W_{j,h'}\}\right]\\
        &\quad\leq \lim_{n\xrightarrow{}\infty} \frac{O(1)}{T_n^3\gamma_1^4}\left(O(1)\gamma_1^2-O(1)\gamma_1^2\right)= \lim_{n\xrightarrow{}\infty} \frac{O(1)}{T_n^3\gamma_1^2} =0 \iff \lim_{n\xrightarrow{}\infty} \frac{n^{2/3}}{T_n}=0.
    \end{align*}
\end{enumerate}
\begin{lemma}
\label{lemma:Ew3} If $\gamma_1=\Theta(1/n)$, Under Assumption~4, for $i=1,\dots,n$, it holds that
\begin{align*}
    \lim_{n\xrightarrow{}\infty}\frac{1}{n^{4/6}T_n^4\gamma_1^4}\sum_{t,t'}\sum_{l,l'}\sum_{k,k'}\sum_{h,h'}E\left(W_{ikt}W_{ik't'}W_{ihl}W_{ih'l'}\right) = 0,
\end{align*}
 if $n^{7/9}/T_n\xrightarrow{}0$ as $n$ diverges,
\end{lemma}
\textit{Proof of Lemma~\ref{lemma:Ew3}:} Similarly to Lemma~\ref{lemma:w4}, we proceed by grouping the summands by the number of iteration indexes shared by the weights and take the limit of their partial sum. 
\begin{itemize}
    \item No shared index: Such a configuration appears with $O(T_n^4)$ summands, hence
    \begin{align*}
        \lim_{n\xrightarrow{}\infty} \frac{O(T_n^4)}{n^{4/6}T_n^4\gamma_1^4}E\{W_{i,k,t}W_{i,k',t'} W_{i,h,l}W_{i,h',l'}\}=\lim_{n\xrightarrow{}\infty}\frac{O(1)}{n^{4/6}}\frac{E\{W_{i,k}\}E\{W_{i,k'}\}E\{ W_{i,h}\}E\{W_{i,h'}\}}{\gamma_1^4}=\lim_{n\xrightarrow{}\infty}\frac{O(1)}{n^{4/6}}=0.
    \end{align*}
    \item One shared index (e.g. $t\neq t' = l \neq l'$, with $l' \neq t$): Such a configuration appears in $O(T_n^3)$ summands, therefore
    \begin{align*}
        \lim_{n\xrightarrow{}\infty} \frac{O(T_n^3)}{n^{4/6}T_n^4\gamma_1^4} E\left\{W_{i,k,t}W_{i,k',t'} W_{i,h,t'}W_{i,h',l'}\right\} &= \lim_{n\xrightarrow{}\infty} \frac{E\left\{W_{i,k}\}E\{W_{i,k'} W_{i,h}\}E\{W_{i,h'}\right\}}{n^{4/6}T_n\gamma_1^4} = \\
        &=\lim_{n\xrightarrow{}\infty}\frac{E\{W_{i,k'} W_{i,h}\}}{n^{4/6}T_n\gamma_1^2} \\
        &\leq \lim_{n\xrightarrow{}\infty}\frac{O(1)}{n^{4/6}T_n\gamma_1}=0\iff \lim_{n\xrightarrow{}\infty}\frac{n^{1/3}}{T_n}=0.
        \end{align*}
    \item Two shared indexes (e.g. $t=t'\neq l=l'$): Such a configuration appears in $O(T_n^3)$ summands, consequently
    \begin{align*}
    \lim_{n\xrightarrow{}\infty} \frac{O(T_n^2)}{n^{4/6}T_n^4\gamma_1^4}E\left\{ W_{i,k,t}W_{1,k',t}W_{i,h,l}W_{i,h',l}\right\} &= \lim_{n\xrightarrow{}\infty} \frac{O(1)}{n^{4/6}T_n^2}\frac{E\{W_{i,k}W_{1,k'}\}E\{W_{i,h}W_{i,h'}\}}{\gamma_1^4} \\
    &\leq \lim_{n\xrightarrow{}\infty} \frac{O(1)}{n^{4/6}T_n^2\gamma_1^2}=0\iff\lim_{n\xrightarrow{}\infty}\frac{n^{4/6}}{T_n}=0.
    \end{align*}
    \item Two shared indexes (e.g. $t=t=l\neq l')$: Such a configuration appears in $O(T_n^2)$ summands, thus
    \begin{align*}
        \lim_{n\xrightarrow{}\infty} \frac{O(T_n^2)}{n^{4/6}T_n^4\gamma_1^4}E\left\{ W_{i,k,t}W_{i,k',t}W_{i,h,t}W_{i,h',l}\right\} &= \lim_{n\xrightarrow{}\infty} \frac{O(1)}{T_n^2}\frac{E\left\{ W_{i,k}W_{1,k',t}W_{i,h}\}E\{W_{i,h'}\right\}}{n^{4/6}\gamma_1^4}\\
        &=\lim_{n\xrightarrow{}\infty} \frac{O(1)}{n^{4/6}T_n^2}\frac{E\{ W_{i,k}W_{1,k'}W_{i,h}\}}{\gamma_1^3}\\
        &\leq \lim_{n\xrightarrow{}\infty} \frac{O(1)}{n^{4/6}T_n^2\gamma_1^2}=0\iff\lim_{n\xrightarrow{}\infty}\frac{n^{4/6}}{T_n}=0.
    \end{align*}
    \item All indexes equal: Such a configuration appears in $O(T_n)$ summands, hence
    \begin{align*}
        \lim_{n\xrightarrow{}\infty} \frac{O(T_n)}{n^{4/6}T_n^4\gamma_1^4}E\left\{ W_{i,k,t}W_{i,k',t}W_{i,h,t}W_{i,h',t}\right\} &=\lim_{n\xrightarrow{}\infty} \frac{O(1)}{n^{4/6}T_n^3}\frac{E\left\{ W_{i,k}W_{i,k'}W_{i,h}W_{i,h'}\right\}}{\gamma_1^4}\\
        &\leq \lim_{n\xrightarrow{}\infty} \frac{O(1)}{n^{4/6}T_n^3\gamma_1^3}=0\iff\lim_{n\xrightarrow{}\infty}\frac{n^{7/9}}{T_n}=0.
    \end{align*}
\end{itemize}

\textit{Proof of Theorem~1:} Let Assumptions 1-5 be satisfied. Furthermore, let us write
\begin{align*}
\bar{\bm{\xi}}^* &= \frac{1}{T_n}\sum_{t=1}^{T_n}{\bm{\xi}_{n,t}^*}= \frac{1}{nT_n}\sum_{t=1}^{T_n}\sum_{i=1}^n\bm{S}_{i,t}^* = \frac{1}{n}\sum_{i=1}^n\bm{S}_{i}^*,
\end{align*}
where $\bm{S}_{i,t}^*= \gamma_1^{-1} \sum_k W_{i,k,t}\nabla\ell_k({\bm{\theta}}^*;Y_i)$ represents the contribution of the $i$-th observation to the stochastic gradient, and $\bm{S}_{i}^*={T_n}^{-1}\sum_{t}^{T_n}S_{i,t}^*$ its average along the optimisation, for $i=1,\dots,n$ and $t=1,\dots,T_n$.
Then, the conditional  variance of $\bm{S}_{i,t}^*$ takes the form
\begin{align*}
\text{Var}_{\bm{W}|\bm{Y}}(\bm{S}_{i,t}^*)&=\gamma_1^{-2}(\gamma_1-\gamma_2)\sum_{k=1}^K\left\{\nabla\ell_k({\bm{\theta}}^*;\bm{Y}_i)\right\}\left\{\nabla\ell_k({\bm{\theta}}^*;\bm{Y}_i)\right\}^\top +\\
&+\left(\gamma_1^{-2}\gamma_2-1\right)\sum_{k=1}^K\sum_{h=1}^K\left\{\nabla\ell_k({\bm{\theta}}^*;\bm{Y}_i)\right\}\left\{\nabla\ell_h({\bm{\theta}}^*;\bm{Y}_i)\right\}^\top.
\end{align*}
and its expected value is $E_{\bm{\bm{Y}}}\text{Var}_{\bm{W}|\bm{Y}}(\bm{S}_{i,t}^*) = \gamma_1^{-2}(\gamma_1-\gamma_2)\bm{H} +\left(\gamma_1^{-2}\gamma_2-1\right)\bm{J}$, since
\begin{align*}
    E_{\bm{\bm{Y}}}\left\{\sum_{k=1}^K\left\{\nabla\ell_k({\bm{\theta}}^*;\bm{Y}_i)\right\}\left\{\nabla\ell_k({\bm{\theta}}^*;\bm{Y}_i)\right\}^\top\right\}&=\bm{H},
\end{align*}
by exploiting the second Bartlett's identity on the single sub-likelihood components and
\begin{align*}
    E_{\bm{\bm{Y}}}\left\{\sum_{k=1}^K\sum_{h=1}^K\left\{\nabla\ell_k({\bm{\theta}}^*;\bm{Y}_i)\right\}\left\{\nabla\ell_h({\bm{\theta}}^*;\bm{Y}_i)\right\}^\top\right\}&=\bm{J},
\end{align*}
by definition of $\bm{J}$. Then, $\text{Var}(\bm{S}_{i}^*) = {T_n}^{-1}E_{{\bm{Y}}}\text{Var}_{\bm{W}|\bm{Y}}(\bm{S}_{i,t}^*) + \bm{J}$. 

In order to prove Theorem 1, we need to show the asymptotic multivariate normality of the vector $\bar{\bm{\xi}^*}$. We take advantage of the Central Limit Theorem for exchangeable processes outlined in  \citet{blum_chernoff_rosenblatt_teicher_1958} (Theorem 2, page 227) by exploiting the conditional independence of the random vectors $\bm{S}_1^*, \dots, \bm{S}_n^*$ given the weighting sequence $\bm{W}_1, \dots, \bm{W}_{T_n}$. For conciseness, let us focus on Regime 3; the other two are simpler and follow similarly. 
With such an asymptotic regime, we need to show that 
\begin{align}
\label{eq:app:thesis}
    \sqrt{T_n+n}\bar{\bm{\xi}^*}\xrightarrow[n]{d}\mathcal{N}\left(0, \bm{V}_{\mathcal{P}}/(1-\alpha) + \bm{J}/\alpha\right).
\end{align}
We proceed by using the Cramér-Wold device, such that the problem reduces to verify that every linear combination $\bm{c}^\top\bar{\bm{\xi}^*}$, with $\bm{c}\in\mathbb{R}^d$, converges to the univariate normal distribution
\begin{align}
\label{eq:app:linearcomb}
    \sqrt{T_n+n}\bm{c}^\top\bar{\bm{\xi}^*}\xrightarrow[n]{d}\mathcal{N}\left(0, \bm{c}^\top \bm{V}_{{\mathcal{P}}}\bm{c}/(1-\alpha) + \bm{c}^\top \bm{J} \bm{c}/\alpha\right).
\end{align}
Otherwise stated, if we show that (\ref{eq:app:linearcomb}) holds, then the Cramér-Wold Theorem implies (\ref{eq:app:thesis}) to hold as well.

Let us start by writing $\bm{c}^\top\bar{\bm{\xi}^*} = n^{-1}\sum_i\bm{c}^\top \bm{S}_i^*$. In particular, note that the scalar quantities $\bm{c}^\top \bm{S}_1^*, \dots, \bm{c}^\top \bm{S}_{n}^*$ define a sequence of exchangeable random variables since they are i.i.d. conditioned on the values of $\bm{W}_{1},\dots, \bm{W}_{T_n}$.
In order to apply Theorem 2 in \citet{blum_chernoff_rosenblatt_teicher_1958}, we need to show for $i\ne i'$ with $i,i'=1,\dots, n$, that
\begin{align*}
    (i)\quad &\quad E\left\{(\bm{c}^\top \bm{S}_i^*) (\bm{c}^\top \bm{S}_{i'}^*) \right\} = o\left(\frac{1}{n}\right);\\
    (ii) \quad &\quad\lim_{n}\text{Cov}\left\{(\bm{c}^\top \bm{S}_i^*)^2, (\bm{c}^\top \bm{S}_{i'}^*)^2\right\} = 0;\\
    (iii) \quad &\quad E\left\{\lvert \bm{c}^\top \bm{S}_i^*\rvert^3\right\} = o(\sqrt{n}).
\end{align*}
Since the quantities $\bm{S}_{i}^*$ and $\bm{S}_{i'}^*$ are linearly independent, condition (i) is automatically satisfied. Note, in fact, that \begin{align*}
    E\left\{\bm{S}_{i}^* \bm{S}_{i'}^*\right\} &= \frac{1}{T_n^2\gamma_1^2}\sum_{t=1}^{T_n}\sum_{t'=1}^{T_n}\sum_{k=1}^K\sum_{k'=1}^KE\left\{ W_{i,k,t}W_{i',k',t'}\right\}E\left\{\nabla\ell_k({\bm{\theta}}^*;\bm{Y}_i)\nabla\ell_{k'}({\bm{\theta}}^*;\bm{Y}_{i'})^\top\right\}=0,
\end{align*}
given that $E\left\{\nabla\ell_k({\bm{\theta}}^*;\bm{Y}_i)\nabla\ell_{k'}({\bm{\theta}}^*;\bm{Y}_{i'})^\top\right\}$ is always null as long as $\bm{Y}_i$ and $\bm{Y}_{i'}$ are independent. It follows that $E\left\{(\bm{c}^\top \bm{S}_i^*)(c^\top \bm{S}_{i'}^*)\right\} = \bm{c}^\top E\left\{\bm{S}_{i}^* {\bm{S}_{i'}^*}^\top\right\} \bm{c} = 0$.

Verifying condition (ii) is slightly more involved since
\begin{align*}
    \text{Cov}\left\{ \left(\bm{c}^\top \bm{S}_i^*\right)^2 , \left(\bm{c}^\top \bm{S}_{i'}^*\right)^2\right\} &= E_{\bm{W}} \text{Cov}_{\bm{Y}|\bm{W}}\left\{\left(\bm{c}^\top \bm{S}_i^*\right)^2 , \left(\bm{c}^\top \bm{S}_{i'}^*\right)^2\right\}  +\\
    &+\text{Cov}_{\bm{W}}\left[E_{\bm{Y}|\bm{W}}\left\{\left(\bm{c}^\top \bm{S}_i^*\right)^2\right\}, E_{\bm{Y}|\bm{W}}\left\{\left(\bm{c}^\top \bm{S}_{i'}^*\right)^2\right\}\right].
\end{align*}
The first term on the right-hand-side is null, since it holds that
\begin{align*}
    E_{\bm{Y}|\bm{W}}\left\{\bm{c}^\top \left(\bm{S}_{i}^*\right) \left(\bm{S}_{i}^*\right)^\top \bm{c} \bm{c}^\top \left(\bm{S}_{i'}^*\right)\left(\bm{S}_{i'}^*\right)^\top \bm{c}\right\} = \bm{c}^\top E_{\bm{Y}|\bm{W}}\left\{\left(\bm{S}_{i}^*\right) \left(\bm{S}_{i}^*\right)^\top \right\}\bm{c} \bm{c}^\top E_{\bm{Y}|\bm{W}}\left\{\left(\bm{S}_{i'}^*\right)\left(\bm{S}_{i'}^*\right)^\top\right\} \bm{c},
\end{align*}
which implies $\text{Cov}_{\bm{Y}|\bm{W}}\left\{\left(\bm{c}^\top \bm{S}_i^*\right)^2 , \left(\bm{c}^\top \bm{S}_{i'}^*\right)^2\right\}=0$. The second term instead is non-null, and to investigate it further, we need to elicit its sum structure. Namely
\begin{align*}
    &\text{Cov}_{\bm{W}}\left[E_{\bm{Y}|\bm{W}}\left\{\left(\bm{c}^\top S_i^*\right)^2\right\}, E_{\bm{Y}|\bm{W}}\left\{\left(\bm{c}^\top \bm{S}_{i'}^*\right)^2\right\}\right]=\\
    &\quad = \frac{1}{T_n^4\gamma_1^4}\sum_{k,k'}\sum_{k'',k'''} \bm{c}^\top E\left\{\nabla\ell_k({\bm{\theta}}^*;\bm{Y})\nabla\ell_{k'}({\bm{\theta}}^*;\bm{Y})^\top\right\}\bm{c}\bm{c}^\top E\left\{\nabla\ell_{k''}({\bm{\theta}}^*;\bm{Y})\nabla\ell_{k'''}({\bm{\theta}}^*;\bm{Y})^\top\right\}\bm{c}\times\\
    &\times \sum_{t,t'}\sum_{t'',t'''} E\left\{ W_{i,k,t}W_{i,k',t'}W_{i',k'',t''}W_{i',k''',t'''}\right\} - E\left\{ W_{i,k,t}W_{i,k',t'}\right\}E\left\{W_{i',k'',t''}W_{i',k''',t'''}\right\} \leq\\
    & \leq \frac{C}{T_n^4\gamma_1^4}\sum_{k,k'}\sum_{k'',k'''}  \sum_{t,t'}\sum_{t'',t'''}\text{Cov}\left\{W_{i,k,t}W_{i,k',t'}, W_{i',k'',t''}W_{i',k''',t'''}\right\}.
\end{align*}
where the inequality holds by Assumption~2 and the sum goes to zero by Lemma~\ref{lemma:w4}, completing the verification of condition (ii). 

Finally, to verify condition (iii), we proceed by noting that by combining Cauchy-Schwarz and Lyapunov inequalities (see, for example, \citealp{Pinelis2015}), for some $C$ large enough, we have
\begin{align*}
    E\left\{\lvert \bm{c}^\top\bm{S}_i^*\rvert^3\right\}\leq \lVert \bm{c}
\rVert^3 E\left\{\lVert\bm{S}_i^*\rVert^3\right\} \leq C E\left\{\lVert\bm{S}_i^*\rVert^3\right\} \leq C E\left\{\lVert\bm{S}_i^*\rVert^4\right\}^{3/4}.
\end{align*}
Furthermore, Assumption~2 ensures that the expectation of the cross product of four partial derivatives of single log sub-likelihood objects is finite, which implies
\begin{align*}
    E\left\{\lVert\bm{S}_i^*\rVert^4\right\} &= \frac{1}{T_n^4\gamma_1^4}\sum_{t,t'}\sum_{l,l'}\sum_{k,k'}\sum_{h,h'}E\left(W_{ikt}W_{ik't'}W_{ihl}W_{ih'l'}\right)E\left\{\nabla\ell_k(\bm{\theta}^*;\bm{Y})^\top\nabla\ell_{k'}(\bm{\theta}^*;\bm{Y})\nabla\ell_{h}(\bm{\theta}^*;\bm{Y})^\top\nabla\ell_{h'}(\bm{\theta}^*;\bm{Y})\right\}\\
    &\leq \frac{C}{T_n^4\gamma_1^4}\sum_{t,t'}\sum_{l,l'}\sum_{k,k'}\sum_{h,h'}E\left(W_{ikt}W_{ik't'}W_{ihl}W_{ih'l'}\right),
\end{align*}
for some $C$ large enough. Therefore, for condition (iii) to hold, we need to ensure that 
\begin{align*}
    \lim_{n\xrightarrow{}\infty}\frac{1}{\sqrt{n}}E\left\{\frac{1}{T_n^4\gamma_1^4}\sum_{t,t'}\sum_{l,l'}\sum_{k,k'}\sum_{h,h'}E\left(W_{ikt}W_{ik't'}W_{ihl}W_{ih'l'}\right)\right\}^{3/4} = 0,
\end{align*}
which is verified by Lemma~\ref{lemma:Ew3}. Thus, with (i), (ii) and (iii) being satisfied, we can apply the Central Limit Theorem for exchangeable processes in \citet{blum_chernoff_rosenblatt_teicher_1958}, which guarantees the asymptotic normality of the scalar quantity $\bm{c}^\top\bar{\bm{\xi}^*} = n^{-1}\sum_i\bm{c}^\top \bm{S}_i^*$, with variance $n^{-1}\bm{c}^\top \text{Var}(\bm{S}_i^*)\bm{c}$. Let us now define  
\begin{align}
\label{eq:Vp}
   \bm{V}_{{\mathcal{P}}} = \lim_{n\xrightarrow{}\infty}\frac{1}{n} E_{{\bm{Y}}}\text{Var}_{\bm{W}|\bm{Y}}(S_{i,t}^*) = \lim_{n\xrightarrow{}\infty}\frac{1}{\gamma_1^{2}n}(\gamma_1-\gamma_2)\bm{H} +\frac{1}{n}\left(\gamma_1^{-2}\gamma_2-1\right)\bm{J},
\end{align}
where $\bm{V}_{{\mathcal{P}}} = O(1)$, since $\lim_n n\gamma_1 >0$ and $\gamma_2/(n\gamma_1^2) = O(1)$ by Assumption~4.
Then, since Regime 3 holds, we need to rescale the variance of $\bm{c}^\top\bar{\bm{\xi}^*}$  by $(T_n + n)$. Hence, we obtain 
\begin{align*}
   \frac{T_n+n}{n}\bm{c}^\top \text{Var}(\bm{S}_i^*)\bm{c} &= \frac{T_n+n}{nT_n}\bm{c}^\top E_{\bm{Y}}\text{Var}_{\bm{W}|\bm{Y}}(\bm{S}_{i,t}^*)\bm{c} + \frac{T_n+n}{n}\bm{c}^\top \bm{J} \bm{c} \\&\xrightarrow[n]{} \frac{1}{1-\alpha}\bm{c}^\top \bm{V}_{{\mathcal{P}}} \bm{c} + \frac{1}{\alpha} \bm{c}^\top \bm{J} \bm{c},
\end{align*}
and therefore $\sqrt{T_n+n}\bm{c}^\top\bar{\bm{\xi}^*}\xrightarrow[n]{d}\mathcal{N}\left(0, \bm{c}^\top \bm{V}_{{\mathcal{P}}}\bm{c}/(1-\alpha) + \bm{c}^\top \bm{J} \bm{c}/\alpha\right)$ as required by the Cramér-Wold Theorem in order for (\ref{eq:app:thesis}) to hold. In Regimes 1 and 2, this last step changes because of the different variance scaling.  While the rest of the proof remains unchanged under Regime 1, note that, under Regime 2, Lemmata~\ref{lemma:Ew3} and \ref{lemma:w4} require the number of iterations to grow such that $n^{7/9}=o(T_n)$ if $\gamma_1=\Theta(1/n)$ as in the cases of $\mathcal{P}_1,\mathcal{P}_2,\mathcal{P}_3$.

Thus, following the asymptotic equivalences outlined in Proposition~2, the efficiency of $\bar{\bm{\Delta}}_{T_n}$ depends on the variability of $\bm{H}^{-1}\bar{\bm{\xi}^*}$. By ignoring the negligible terms in $\text{Var}(\bar{\bm{\xi}^*})$ according to the three asymptotic regimes, it then holds that
\begin{itemize}
\item Regime 1: If $\alpha = 0$, then $n\text{Var}(\bar {\bm{\xi}^*})\xrightarrow[n]{}\bm{J}$. Hence, 
$\sqrt{n}\bar{\bm{\Delta}}_{T_n}\xrightarrow[n]{}\mathcal{N}(0,\bm{H}^{-1}\bm{J}\bm{H}^{-1})$.
\item Regime 2: If $\alpha = 1$, then $T_n\text{Var}(\bar{\bm{ \xi}^*})\xrightarrow[n]{}\bm{V}_{{\mathcal{P}}}$. Hence, $\sqrt{T_n}\bar{\bm{\Delta}}_{T_n}\xrightarrow[n]{}\mathcal{N}(0, \bm{H}^{-1}\bm{V}_{{\mathcal{P}}}\bm{H}^{-1}) ;$
\item Regime 3: If $0<\alpha <1$, then $(T_n+n)\text{Var}(\bar{\bm{\xi}^*})\xrightarrow[n]{}\bm{V}_{{\mathcal{P}}}/(1-\alpha) + \bm{J}/\alpha$. Hence it holds that $\sqrt{T_n+n}\bar{\bm{\Delta}}_{T_n}\xrightarrow[n]{}\mathcal{N}(0,\bm{H}^{-1}\bm{V}_{{\mathcal{P}}}\bm{H}^{-1}/(1-\alpha) + \bm{H}^{-1}\bm{J}\bm{H}^{-1}/\alpha )$.
\end{itemize}
which completes the proof of Theorem 1.
\subsection{Role of the sampling schemes}
Let us start by verifying the conditions of Assumption~4 for $\mathcal{P}_1$, $\mathcal{P}_2$, $\mathcal{P}_3$. Since all three definitions describe a system of binary weights, the following calculations often take advantage of the fact that $W_{ik}^r=W_{ik}$ for any positive real number $r$ and that $W_{ik}W_{jh}\leq W_{ik}$ for all $i,j=1,\dots,n$ and $k,h=1,\dots,K$.

\textit{Verification of Assumption~4 for $\mathcal{P}_1$:}
Consider $\mathcal{P}_1$ defined as in Definition~1, where $W_{ik}=W_{ih}=W_i$ by construction for $i=1,\dots,n$ and $k,h=1,\dots,K$. Thus, it is easy to show that 
\begin{align*}
    \gamma_1&=E\left\{W_{ik}\right\} = E\left\{W_{i}\right\} =\frac{1}{n}&\text{for all $i=1,\dots,n$;}\\
    \gamma_2&=E\left\{W_{ik}W_{ih}\right\}=E\left\{W_{i}^2\right\}=\frac{1}{n}&\text{for all $i=1,\dots,n$ and $k\neq h$;}
\end{align*}
and
\begin{align*}
    E\{W_{ik}W_{jh}\}&=E\{W_{i} W_{j}\}=0;\\
    E\{W_{ik}W_{ik'}W_{jh}\}&=E\{W_{i}^2 W_{j}\}=E\{W_{i} W_{j}\}=0;\\
    E\{W_{ik}W_{ik'}W_{jh}W_{jh'}\}&=E\{W_{i}^2 W_{j}^2\}=E\{W_{i} W_{j}\}=0;\\
    E\{W_{ik}W_{ik'}W_{ih}\}&=E\{W_{i}^3\}=E\{W_{i}\}=\frac{1}{n};\\
    E\{W_{ik}W_{ik'}W_{ih}W_{ih'}\}&=E\{W_{i}^4\}=E\{W_{i}\}=\frac{1}{n};
\end{align*}
which satisfies Assumption~4.

\textit{Verification of Assumption~4 for $\mathcal{P}_2$:}
Consider now $\mathcal{P}_2$ defined as in Definition~2, where all $W_{ik},W_{jh}$ are independent by construction for all $i\neq j$, or $k\neq h$. Then, it is straightforward to write
\begin{align*}
    \gamma_1&=E\left\{W_{ik}\right\} = \frac{1}{n}&\text{for all $i=1,\dots,n$;}\\
    \gamma_2&=E\left\{W_{ik}W_{ih}\right\} = \frac{1}{n^2}&\text{for all $i=1,\dots,n$ and $k\neq h$;}
\end{align*}
and
\begin{align*}
    E\{W_{ik}W_{jh}\}&=\frac{1}{n^2}\\
    E\{W_{ik}W_{ik'}W_{jh}\}&\leq E\{W_{ik}^2W_{jh}\}=E\{W_{ik}W_{jh}\}=\frac{1}{n^2}\\
    E\{W_{ik}W_{ik'}W_{jh}W_{jh'}\}&\leq E\{W_{ik}^2W_{jh}^2\}=E\{W_{ik}W_{jh}\}=\frac{1}{n^2}\\
    E\{W_{ik}W_{ik'}W_{ih}\} &\leq E\{W_{ik}^3\}= E\{W_{ik}\} = \frac{1}{n}\\
    E\{W_{ik}W_{ik'}W_{ih}W_{ih'}\}&\leq E\{W_{ik}^4\}= E\{W_{ik}\} = \frac{1}{n}\\
\end{align*}
which satisfies Assumption~4.

\textit{Verification of Assumption~4 for $\mathcal{P}_3$:} Finally, consider $\mathcal{P}_3$ defined as in Definition~3. The draw without replacement of $K$ elements out of $nK$ implies
\begin{align*}
    \gamma_1&=E\left\{W_{ik}\right\} = \frac{1}{n}&\text{for all $i=1,\dots,n$;}\\
    \gamma_2&=E\left\{W_{ik}W_{ih}\right\} = \frac{1}{n^2}\left(1-\frac{n-1}{nK-1}\right)&\text{for all $i=1,\dots,n$ and $k\neq h$;}
\end{align*}
and 
\begin{align*}
    E\{W_{ik}W_{jh}\}&=\frac{1}{n^2}\left(1-\frac{n-1}{nK-1}\right);\\
    E\{W_{ik}W_{ik'}W_{jh}\}&\leq E\{W_{ik}W_{jh}\}=\frac{1}{n^2}\left(1-\frac{n-1}{nK-1}\right);\\
    E\{W_{ik}W_{ik'}W_{jh}W_{jh'}\}&\leq E\{W_{ik}W_{jh}\}=\frac{1}{n^2}\left(1-\frac{n-1}{nK-1}\right);\\
    E\{W_{ik}W_{ik'}W_{ih}\} &\leq E\{W_{ik}\} = \frac{1}{n};\\
    E\{W_{ik}W_{ik'}W_{ih}W_{ih'}\}&\leq E\{W_{ik}\} = \frac{1}{n};\\
\end{align*}
which satisfies Assumption~4.
\\

The sampling schemes affect the asymptotic covariance matrix of $\bar{\bm{\theta}}_\mathcal{P}$ via $\bm{V}_{{\mathcal{P}}}$, as outlined in Corollary 1. In the following, we elicit the implications of choosing $\mathcal{P}_1$, $\mathcal{P}_2$ or $\mathcal{P}_3$.

\textit{Proof of Corollary~1:}
Recall that by (\ref{eq:Vp}) we have 
\begin{align*}
   \bm{V}_{{\mathcal{P}}} = \lim_{n\xrightarrow{}\infty}\frac{1}{\gamma_1^{2}n}(\gamma_1-\gamma_2)\bm{H} +\frac{1}{n}\left(\gamma_1^{-2}\gamma_2-1\right)\bm{J}.
\end{align*}
Then, the proof of Corollary~1 reduces to plugging in the above display the values $\gamma_1$ and $\gamma_2$ implied by the sampling scheme $\mathcal{P}_1$,$\mathcal{P}_2$ and $\mathcal{P}_3$.

\begin{itemize}
    \item Let $\mathcal{P}_1$ be defined as in Definition~1. Then $\gamma_1=\gamma_2=1/n$.  It follows that 
    \begin{align*}
        \bm{V}_{\mathcal{P}_1} =\lim_{n\xrightarrow{}\infty}\left(1-\frac{1}{n}\right)\bm{J} = \bm{J}.
    \end{align*}
    \item Let $\mathcal{P}_2$ be defined as in Definition~2. Then $\gamma_1=1/n$ and $\gamma_2 = 1/n^2$.  It follows that 
    \begin{align*}
        \bm{V}_{\mathcal{P}_2} =\lim_{n\xrightarrow{}\infty}\left(1-\frac{1}{n}\right)\bm{H} = \bm{H}.
    \end{align*}
    \item Let $\mathcal{P}_3$ be defined as in Definition~3. Then $\gamma_1=1/n$ and 
    $$ \gamma_2 = \frac{1}{n^2}\left(1-\frac{n-1}{nK-1}\right).$$
    It follows that 
    \begin{align*}
        \bm{V}_{\mathcal{P}_3} =\lim_{n\xrightarrow{}\infty}\left(1-\frac{1}{n}\right)\bm{H} - \left(\frac{n-1}{n^2K-n}\right)(\bm{J} - \bm{H})= \bm{H}.
    \end{align*}
\end{itemize}

\newpage
\section{Supplementary simulation experiments}
\label{app:sims}
Implementation-wise, Algorithm 1 runs via custom \texttt{Rcpp} \citep{rcpp} code on both examples. The numerical approximation of ${\hat{\bm{\theta}}}$, instead, relies on the quasi-Newton BFGS provided by the \texttt{ucminf} function \citep{ucminf}, running with custom \texttt{Rcpp} implementations of both $\cln$ and $\nabla \cln$. Computational times are collected via the \texttt{RcppClock} package \citep{rcppclock}.

In principle, a different stepsize for each combination of $\mathcal{P}$, $n$ and $p$ should be possible. However, for a fair comparison among different settings, in the paper, we only report one fixed stepsize for each of the two models examined. In these additional experiments, we show how the choice of the stepsize influences the results in practice.
In general, larger stepsizes imply larger updates and, thus, faster convergence to the target value. However, the variability of the stochastic gradient limits the possibility for arbitrarily large values of $\eta_0$, such that the higher the variance of $\bm{S}_t$, the lower the largest allowed value of $\eta_0$. It follows that an additional advantage of $\mathcal{P}_2$ and $\mathcal{P}_3$ is to accept higher values of the stepsize compared to $\mathcal{P}_1$. While this effect has not been investigated in the main simulation section, it will be particularly evident in the additional experiments for the gamma frailty model reported below. 

Typically, a practical strategy to tune the stepsize requires starting from large values and progressively reducing them until some performance criterion stops improving. In this way, one tries to choose the largest value possible that avoids divergent trajectories but still allows for reasonably large steps. 
In the following, we replicate this kind of strategy by showing simulation results for progressively halved stepsizes.
The results reported in the paper coincide with the ones minimising the mean square error of \texttt{standard} (or its recycled counterpart) in the most challenging setting considered.

\subsection{Ising model}
Data are generated using the \texttt{IsingSampler} package \citep{IsingSampler}. In the following, we report the simulation results accounting for $n\in\{2500, 5000, 10000\}$, $p\in\{10,20\}$, $\eta_0\in\{0.25, 0.5, 1, 2, 4\}$ $B= 0.25n$ and $T_n \in\{0.5n, n, 1.5n, 2n, 2.5n, 3n\}$. While results for $\eta_0 = 1$ are investigated in the main simulation section, here we highlight what happens when the stepsize is too small or too large. 

Figure \ref{fig:isi:app:mse} shows that by choosing $\eta_0>1$, methods based on $\mathcal{P}_1$ occasionally exhibit some initial high spikes of the mean square error during the burn-in period but then continue their trajectories regularly. However, the mean square error increases for all methods when compared to $\eta_0 = 1$. With $\eta_0<1$, convergence is more regular but slower. Intercepts, in particular, struggle the most to approach the numerical estimator performance. 
Similarly, Figure~\ref{fig:isi:app:stepsize} reports the behaviour of the overall mean square error as a function of the stepsize for the estimators considered to highlight how the ``optimal" value of $\eta_0$ changes with $n$, $p$, $\mathcal{P}$ and $T_n$. In most cases, $\eta_0 = 1$ would be picked, but $\eta_0 = 0.5$ would result in a slightly better performance when $p = 10$ and $T_n = 3n$. Nevertheless, the main insights would remain the same.

To deepen the understanding of the convergence under different sampling schemes, Figure~\ref{fig:isi:1:var_tr} reports the trace of the empirical variance across the simulation of each instance of Algorithm 1 when the stepsize is chosen as $\eta_0=1$, as reported in Section 4.1. As expected from the theory, for all choices of $\mathcal{P}$, the variance of $\bar{\bm{{\bm{\theta}}}}_\mathcal{P}$ decreases both with $n$ and $T_n$ diverging. Additionally, as anticipated in Figure~\ref{fig:isi:app:mse}, both $\bar{{\bm{\theta}}}_{\mathcal{P}_2}$ and $\bar{{\bm{\theta}}}_{\mathcal{P}_3}$ exhibit lower variability compared to $\bar{{\bm{\theta}}}_{\mathcal{P}_1}$, with this difference decreasing along the optimization because of being scaled by $T_n$. 

Finally, Figure \ref{fig:isi:app:cov} shows how inference results change according to the stepsize chosen. When the value of $\eta_0$ is too tiny, confidence intervals suffer from the poor estimation of both $\bm{H}$ and $\bm{J}$ since the parameters used in their evaluation are still too far from the target because of slow convergence. When the stepsize is too large, instead, the stochastic estimators have too much freedom in exploring the parameter space. Typically, this leads to observing more variability than expected and conduces to under covering confidence intervals. Thus, Figure \ref{fig:isi:app:cov} highlights that for a reliable inference procedure, the stepsize must be selected carefully, as done in Figure~\ref{fig:isi:app:stepsize}.

To conclude, we report a detailed overview of the computational efficiency (with $\eta_0 = 1$) of the proposed estimators, showing also the profiled times for each Step of Algorithm 1. In particular, Table~\ref{tab:ising:timeapp} highlights how the Sampling Step avoids $\mathcal{P}_2$ and $\mathcal{P}_3$ to scale efficiently with an increasing number of components to consider. However, recycling the sampling step across iterations easily solves such a problem for $\mathcal{P}_3$.
\begin{figure}[!h]\centering
	\includegraphics[width=\textwidth]{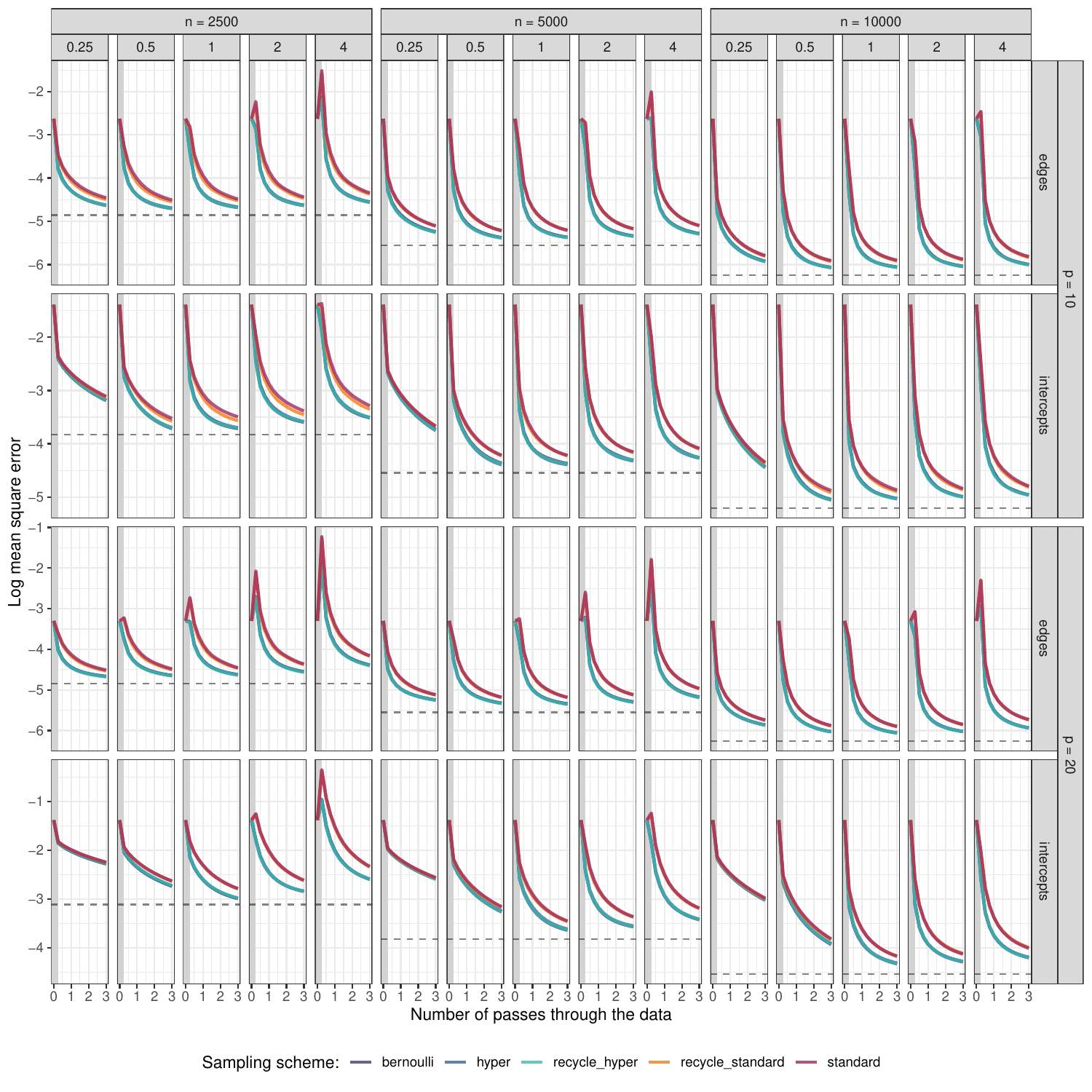}
	\caption{\label{fig:isi:app:mse}Ising model, $\eta_0\in\{0.25, 0.50, 1, 2, 4\}$. Log mean square error trajectories along the optimisation, grouped by sample size, stepsize, parameter type and number of nodes. Coloured lines refer to ${\bar{\bm{\theta}}}_{\mathcal{P}}$ under different sampling schemes, while dashed lines indicate the performance of the numerical approximation of ${\hat{\bm{\theta}}}$.}
\end{figure}

\begin{figure}[t]\centering
	\includegraphics[width=\textwidth]{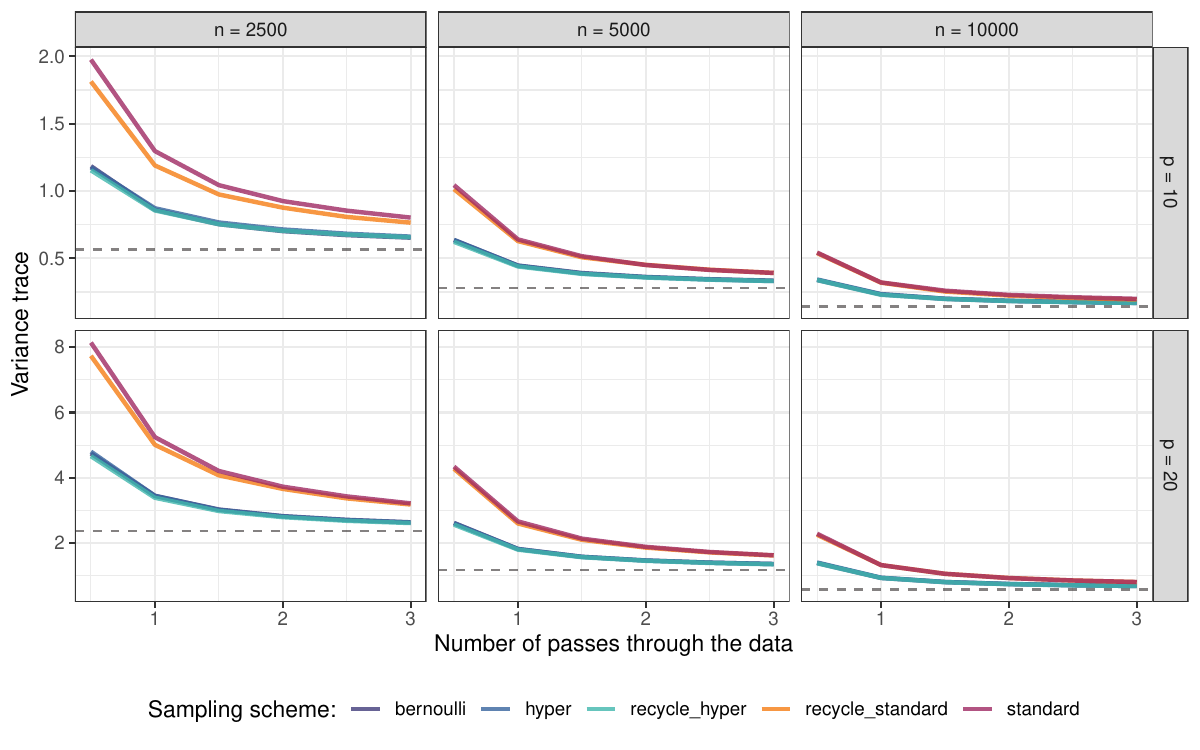}
	\caption{\label{fig:isi:1:var_tr}Ising model. Trajectories of the empirical variance trace  for different $n$ and $p$. Solid lines refer to $\bar{\bm{{\bm{\theta}}}}_\mathcal{P}$ under different sampling schemes. Dashed lines denote the variability observed for the numerical approximation of $\hat{\bm{{\bm{\theta}}}}$.}
\end{figure}
\begin{figure}[!h]\centering
	\includegraphics[width=\textwidth]{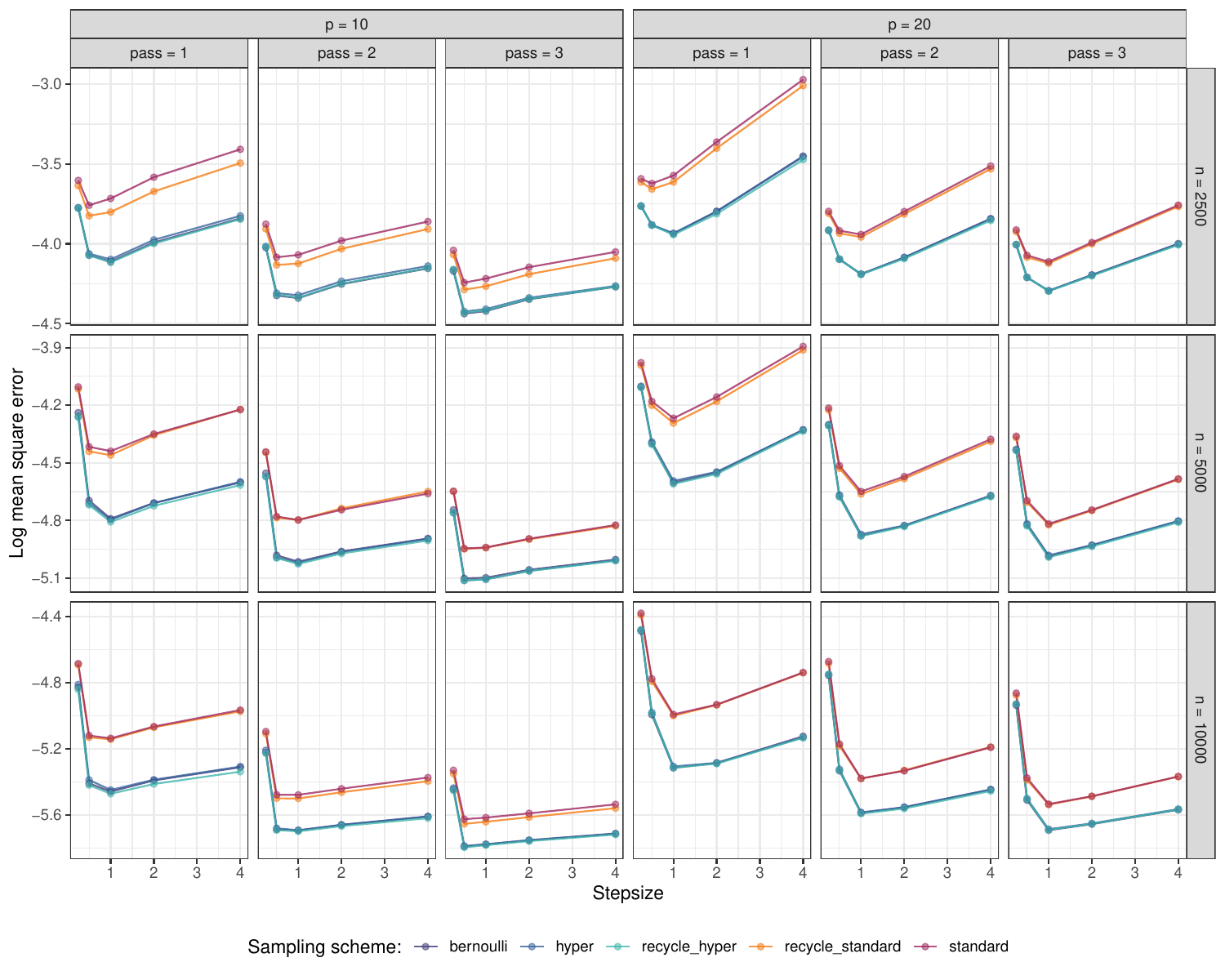}
	\caption{\label{fig:isi:app:stepsize} Ising model. Overall log mean square error as a function of the stepsize, grouped by $n$, $p$, and the number of passes through the data. Coloured lines refer to ${\bar{\bm{\theta}}}_{\mathcal{P}}$ under different sampling schemes.}
\end{figure}

\begin{figure}[!h]\centering
	\includegraphics[width=\textwidth]{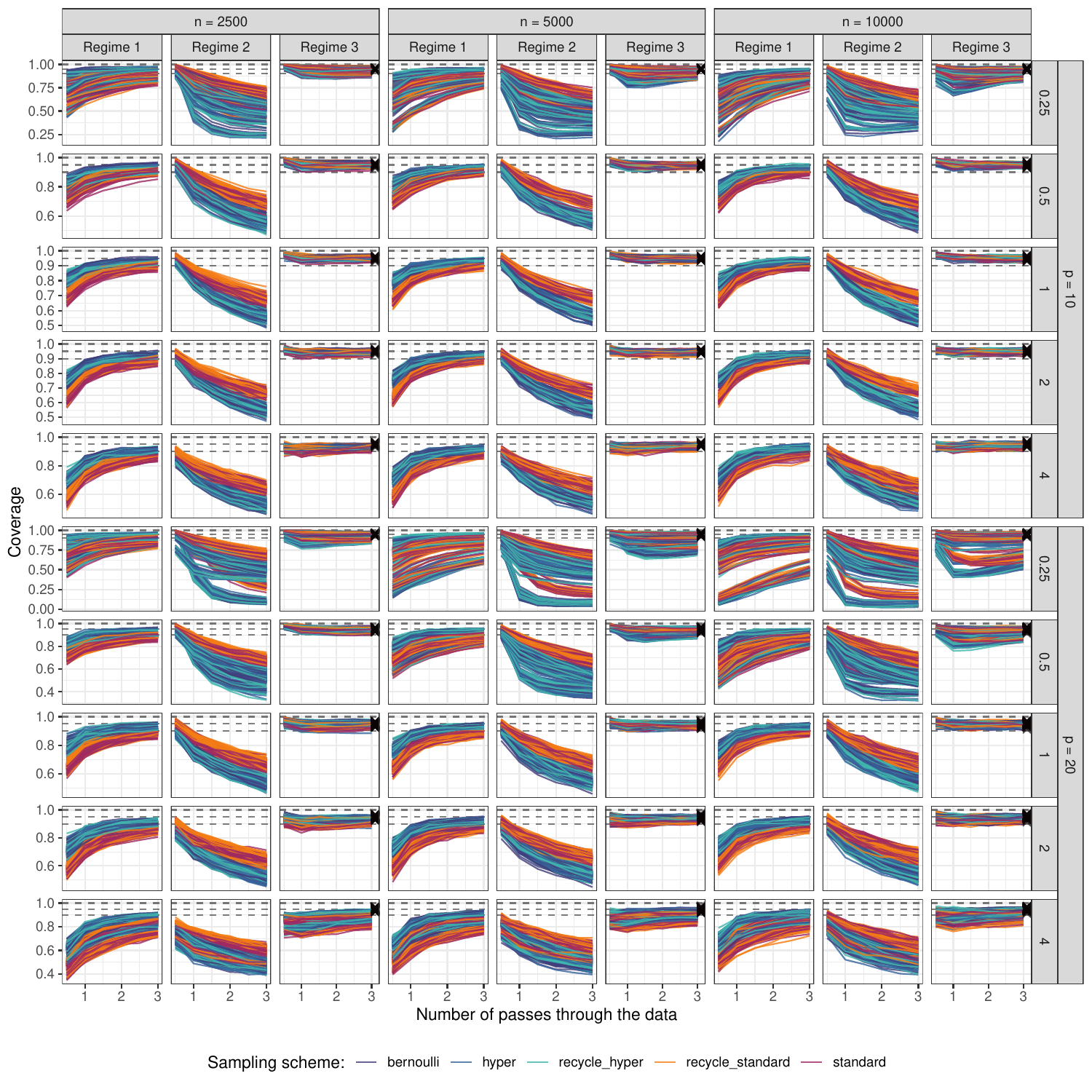}
	\caption{\label{fig:isi:app:cov} Ising model, $\eta_0\in\{0.25, 0.50, 1, 2, 4\}$. Empirical coverage of confidence intervals for ${\bar{\bm{\theta}}}_{\mathcal{P}}$ constructed according to the three asymptotic regimes in Theoem 1. Results are grouped by sample size and graph dimension. Dashed lines highlight the nominal coverage level. Coloured lines refer to scalar elements of ${\bar{\bm{\theta}}}_{\mathcal{P}}$ under different sampling schemes. Dark crosses equivalently refer to scalar elements of the numerical approximation of ${\hat{\bm{\theta}}}$.}
\end{figure}

\begin{table}[h!]
\centering
\caption{\label{tab:ising:timeapp}Comparison of computational times (s) for ${\bar{\bm{\theta}}}_{\mathcal{P}}$ with $T_n=3n$ under different sampling schemes and the \texttt{numerical} approximation of ${\hat{\bm{\theta}}}$. }
\tiny
\begin{tabular}{cclcccccc}
  \toprule
 $p$ & $n$ &  & \texttt{bernoulli} & \texttt{hyper} & \texttt{recycle\_hyper} & \texttt{recycle\_standard} & \texttt{standard} & \texttt{numerical} \\ 
  \midrule
\multirow{16}{*}{$10$}&\multirow{4}{*}{2500}& Total & $1.81$ & $1.70$  & $8.12 \times 10^{-2}$ & $6.66 \times 10^{-2}$ & $2.48 \times 10^{-1}$ & $2.46$ \\ 
\cmidrule(lr){3-9}
  && Sampling Step & $2.33 \times 10^{-4}$ & $2.17 \times 10^{-4}$ & $2.29 \times 10^{-6}$ & $3.93 \times 10^{-7}$ & $2.46 \times 10^{-5}$ &  \\ 
  & & Approximation Step & $5.99 \times 10^{-6}$ & $6.47 \times 10^{-6}$ & $6.18 \times 10^{-6}$ & $6.11 \times 10^{-6}$ & $6.13 \times 10^{-6}$ &  \\ 
  & & Update Step & $1.65 \times 10^{-7}$ & $1.77 \times 10^{-7}$ & $1.68 \times 10^{-7}$ & $1.66 \times 10^{-7}$ & $1.62 \times 10^{-7}$ &  \\ 
\cmidrule(lr){2-9}
  &\multirow{4}{*}{5000} & Total & $7.16$ & $7.48$ & $2.01 \times 10^{-1}$ & $1.35 \times 10^{-1}$ & $8.03 \times 10^{-1}$ & $4.99$ \\ 
  \cmidrule(lr){3-9}
  & & Sampling Step & $4.69 \times 10^{-4}$ & $4.89 \times 10^{-4}$ & $4.99 \times 10^{-6}$ & $5.96 \times 10^{-7}$ & $4.50 \times 10^{-5}$ &  \\ 
  & & Approximation Step & $6.21 \times 10^{-6}$ & $6.94 \times 10^{-6}$ & $6.15 \times 10^{-6}$ & $6.12 \times 10^{-6}$ & $6.18 \times 10^{-6}$ &  \\ 
  & & Update Step & $1.72 \times 10^{-7}$ & $2.09 \times 10^{-7}$ & $1.72 \times 10^{-7}$ & $1.74 \times 10^{-7}$ & $1.77 \times 10^{-7}$ &  \\ 
\cmidrule(lr){2-9}
  &\multirow{4}{*}{10000} & Total & $28.5$ & $41.9 $ & $6.82 \times 10^{-1}$ & $2.82 \times 10^{-1}$ & $2.88$ & $10.0$ \\ 
    \cmidrule(lr){3-9}
  & & Sampling Step & $9.39 \times 10^{-4}$ & $1.39 \times 10^{-3}$ & $1.42 \times 10^{-5}$ & $1.01 \times 10^{-6}$ & $8.76 \times 10^{-5}$ &  \\ 
  & & Approximation Step & $6.61 \times 10^{-6}$ & $7.89 \times 10^{-6}$ & $6.23 \times 10^{-6}$ & $6.11 \times 10^{-6}$ & $6.26 \times 10^{-6}$ &  \\
  & & Update Step & $1.64 \times 10^{-7}$ & $2.10 \times 10^{-7}$ & $1.64 \times 10^{-7}$ & $1.62 \times 10^{-7}$ & $1.67 \times 10^{-7}$ &  \\ 
  \midrule
  \multirow{16}{*}{$20$}&$\multirow{4}{*}{2500}$ & Total & $3.74$ & $3.90$ & $2.39 \times 10^{-1}$ & $2.02 \times 10^{-1}$ & $3.84 \times 10^{-1}$ & $19.5$ \\ 
        \cmidrule(lr){3-9}
  & & Sampling Step & $4.71 \times 10^{-4}$ & $4.87 \times 10^{-4}$ & $5.00 \times 10^{-6}$ & $4.29 \times 10^{-7}$ & $2.47 \times 10^{-5}$ &  \\ 
  & & Approximation Step & $2.40 \times 10^{-5}$ & $2.63 \times 10^{-5}$ & $2.21 \times 10^{-5}$ & $2.19 \times 10^{-5}$ & $2.19 \times 10^{-5}$ &  \\ 
  & & Update Step & $3.79 \times 10^{-7}$ & $4.28 \times 10^{-7}$ & $3.56 \times 10^{-7}$ & $3.51 \times 10^{-7}$ & $3.43 \times 10^{-7}$ &  \\ 
        \cmidrule(lr){2-9}
  &\multirow{4}{*}{5000} & Total & $14.4$ & $21.2$ & $6.14 \times 10^{-1}$ & $3.98 \times 10^{-1}$ & $1.07$ & $44.4$ \\ 
        \cmidrule(lr){3-9}
  & & Sampling Step & $9.36 \times 10^{-4}$ & $1.38 \times 10^{-3}$ & $1.43 \times 10^{-5}$ & $6.37 \times 10^{-7}$ & $4.56 \times 10^{-5}$ &  \\ 
  & & Approximation Step & $2.20 \times 10^{-5}$ & $2.60 \times 10^{-5}$ & $2.21 \times 10^{-5}$ & $2.15 \times 10^{-5}$ & $2.16 \times 10^{-5}$ &  \\ 
  & & Update Step & $3.33 \times 10^{-7}$ & $4.28 \times 10^{-7}$ & $3.53 \times 10^{-7}$ & $3.46 \times 10^{-7}$ & $3.46 \times 10^{-7}$ &  \\ 
        \cmidrule(lr){2-9}
  &\multirow{4}{*}{10000} & Total & $57.6$ & $90.4$ & $1.73$ & $8.07 \times 10^{-1}$ & $3.40$ & $98.5$ \\ 
        \cmidrule(lr){3-9}
  & & Sampling Step & $1.89 \times 10^{-3}$ & $2.98 \times 10^{-3}$ & $3.03 \times 10^{-5}$ & $1.03 \times 10^{-6}$ & $8.69 \times 10^{-5}$ &  \\ 
  & & Approximation Step & $2.28 \times 10^{-5}$ & $2.69 \times 10^{-5}$ & $2.29 \times 10^{-5}$ & $2.15 \times 10^{-5}$ & $2.20 \times 10^{-5}$ &  \\ 
  & & Update Step & $3.30 \times 10^{-7}$ & $4.17 \times 10^{-7}$ & $3.52 \times 10^{-7}$ & $3.46 \times 10^{-7}$ & $3.52 \times 10^{-7}$ &  \\ 
   \bottomrule
\end{tabular}

\end{table}
\FloatBarrier
\subsection{Gamma frailty model}
We set the true parameters as $\xi = 0.25$ and $\rho = 0.5$, while $\lambda_j$ equals $0.25$ if $j$ is even and $-0.25$ if odd. As in the Ising model experiments, optimization always starts from the null vector, and all the performance criteria are evaluated for the different running lengths of the algorithm. Results in the main paper are reported for $\eta_0=2$, which is the initial stepsize value minimizing the mean square error performance at $T_n = 3n$ of \texttt{recycle\_standard} in the most challenging setting, i.e., $n = 2,500, p = 30$.

Differently to the pairwise likelihood outlined in \citet{henderson2003}, we scale the objective function by $p(p-1)/2$ (i.e. the number of pairs) rather than just $p-1$. This helps to keep the same value of $\eta_0$ across settings with different $p$ since it assures the objective function to be comparable in magnitude across values of $p$.
Implementation-wise, the parameters $\xi$ and $\rho$ are reparametrized to avoid introducing explicit constraints in the optimization. In particular, we use $\check \xi = \exp\{\xi^{-1}\}$ and $\check\rho = \text{artanh}(\rho)$.
 In the following, we report the simulation results accounting for $n\in\{2500, 5000, 10000\}$, $p\in\{10,20\}$, $B= 0.25n$ and $T_n \in\{0.5n, n, 1.5n, 2n, 2.5n, 3n\}$. Similar to previous experiments, we assess results under different choices of the stepsize, i.e. $\eta_0\in\{0.5, 1, 2, 4, 8\}$. As anticipated in the main simulation section, $\mathcal{P}_3$ becomes very computationally intensive when the number of likelihood components increases. For this reason, we only consider the results for \texttt{recycle\_hyper}. However, the choice of $l=500$ in the simulation section is arbitrary, so here we report results for \texttt{recycle\_hyper} with $l\in\{100, 500, 1000\}$, showing that such choice does not have noticeable impacts on the accuracy of the estimation but is crucial for the practical computational efficiency of the estimator.

 Figure~\ref{fig:gf:app:mse} reports the log mean square error of the stochastic estimators considered under different simulation settings. First, as anticipated at the beginning of the section, estimators based on $\mathcal{P}_3$ allow for larger stepsizes without suffering from divergence problems. Considering the reparametrisations of $\xi$ and $\rho$, in fact, \texttt{recycle\_standard} starts experiencing diverging trajectories when $\eta_0>2$, while all the instances of \texttt{recycle\_hyper} remain stable in their converging trajectories. Second, the effects of different choices of $l$ are unnoticeable. To better grasp them, we need to focus on some specific settings. Figure~\ref{fig:gf:app:zoom}, for example, outlines the behaviour of the overall log mean square error for $\eta = 2$ (the value reported in the paper). For visualisation purposes, we report only the trajectories after the burn-in period. All the lines are almost overlapping, and no clear distinction can be identified among the three instances considered. The only noticeable differences appear analysing the computational times, as reported in Table~\ref{tab:gf:timeapp}.

 Similarly to the previous example, Figure~\ref{fig:gf:app:step} plots the overall log mean square error as a function of the stepsize. Clearly, the graphs show how methods based on $\mathcal{P}_3$ are much more stable when increasing the stepsize, while the mean square error of \texttt{recycle\_standard} starts exploding. Figure~\ref{fig:gf:1:var_tr} highlights the trajectories of the observed variance trace instead, showing that the  mean square error differences among $\mathcal{P}_1$ and $\mathcal{P}_3$ observed in Figure~\ref{fig:gf:app:mse} are due to the different variability they induce in the estimates. Trajectories are shown for $\eta_0=2$.
 
 Finally, Figure~\ref{fig:gf:app:cov} show the empirical coverage performance for all the possible values of $\eta_0$. For all the estimators considered, most difficulties arise when the stepsize is too small. In those cases, in fact, the estimates for the reparameterisations of $\xi$ and $\rho$ are still too distant from their targets. Hence, the evaluation of the empirical estimator of $\bm{H}$ and $\bm{J}$ is still not accurate enough.

\begin{figure}[!h]\centering
	\includegraphics[width=\textwidth]{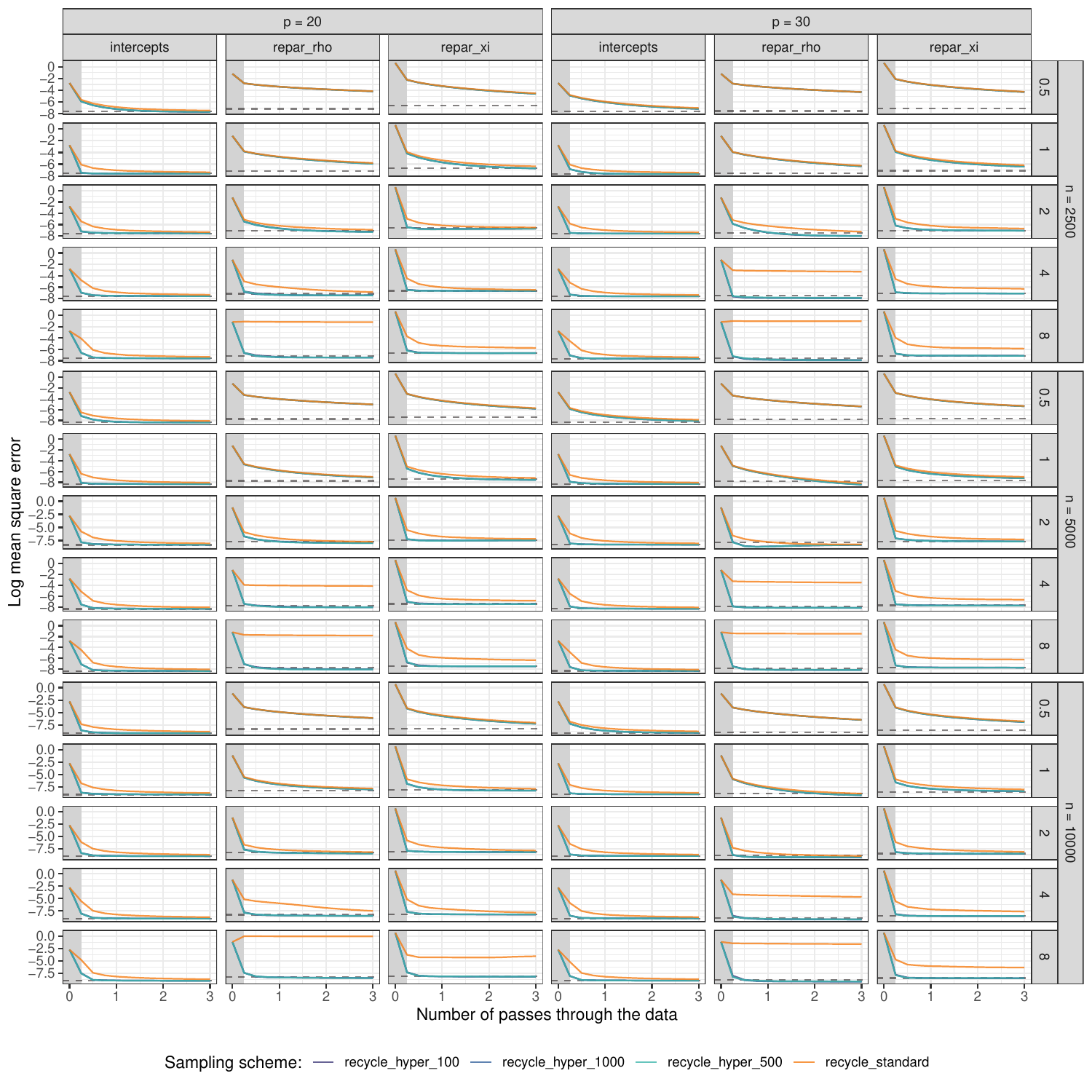}
	\caption{\label{fig:gf:app:mse}Gamma frailty, $\eta_0\in\{0.50, 1, 2, 4, 8\}$. Log mean square error trajectories along the optimisation, grouped by sample size, stepsize and parameter type. Coloured lines refer to ${\bar{\bm{\theta}}}_{\mathcal{P}}$ under different sampling schemes, while the dashed lines indicate the performance of the numerical approximation of ${\hat{\bm{\theta}}}$.}
\end{figure}

\begin{figure}[!h]\centering
	\includegraphics[width=\textwidth]{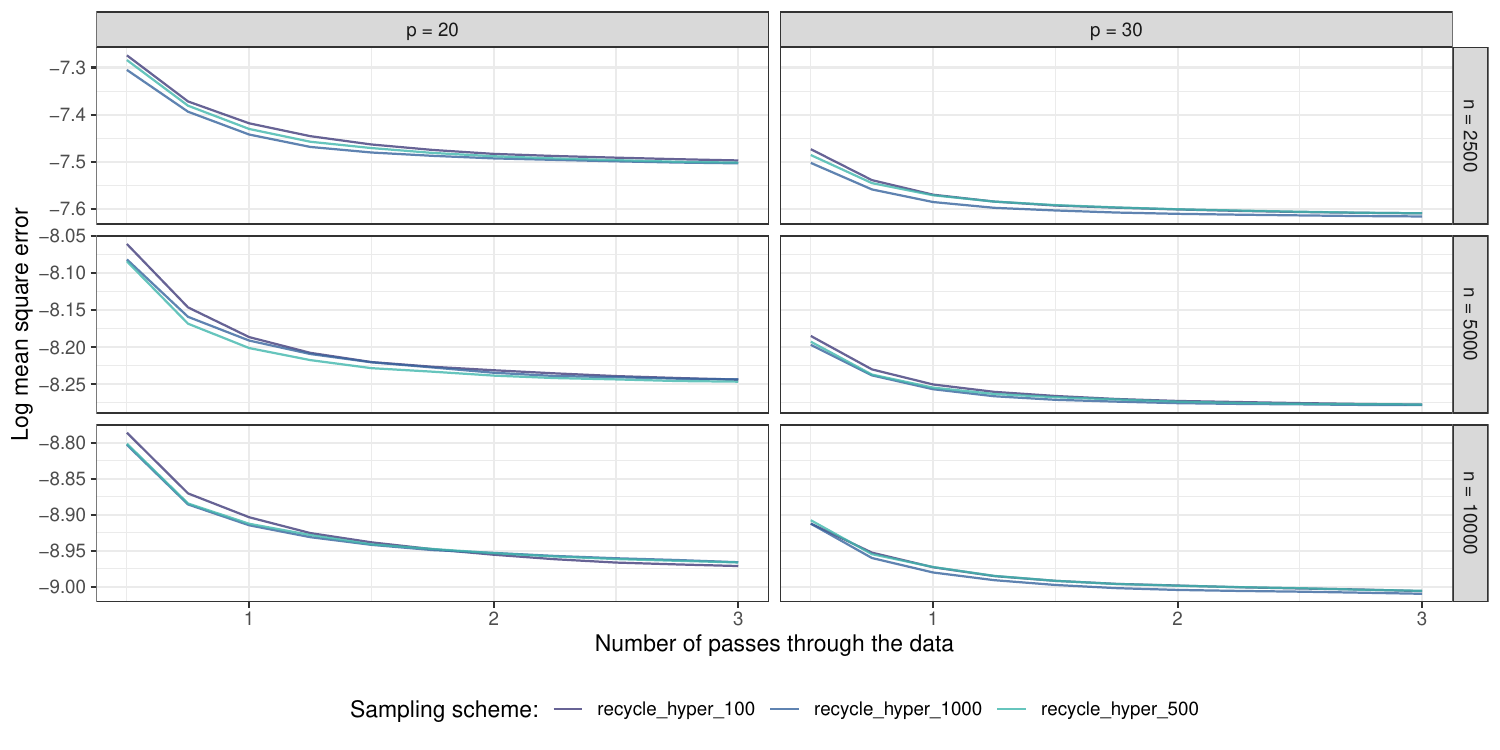}
	\caption{\label{fig:gf:app:zoom}Gamma frailty. Overall log mean square error trajectories along the optimisation for $\eta_0 = 2$, grouped by $n$ and $p$. Coloured lines refer to the different lengths $l$ of the recycling window used for $\mathcal{P}_3$.}
\end{figure}

\begin{figure}[!h]\centering
	\includegraphics[width=\textwidth]{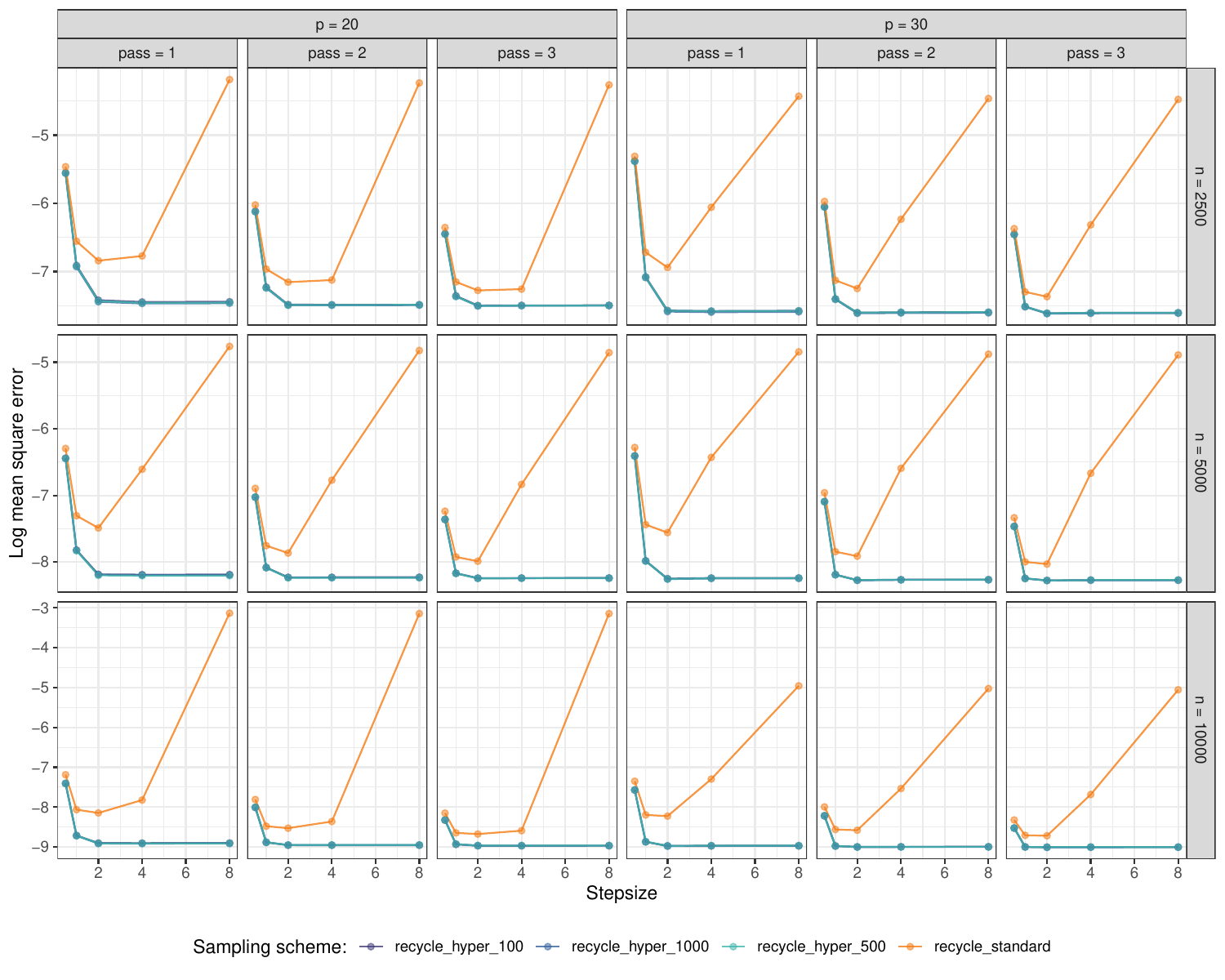}
	\caption{\label{fig:gf:app:step}Gamma frailty, log mean square error as a function of the stepsize, grouped by $n$, $p$, and the number of passes through the data. Coloured lines refer to ${\bar{\bm{\theta}}}_{\mathcal{P}}$ under different sampling schemes.}
\end{figure}

\begin{figure}[t]\centering
	\includegraphics[width=.73\textwidth]{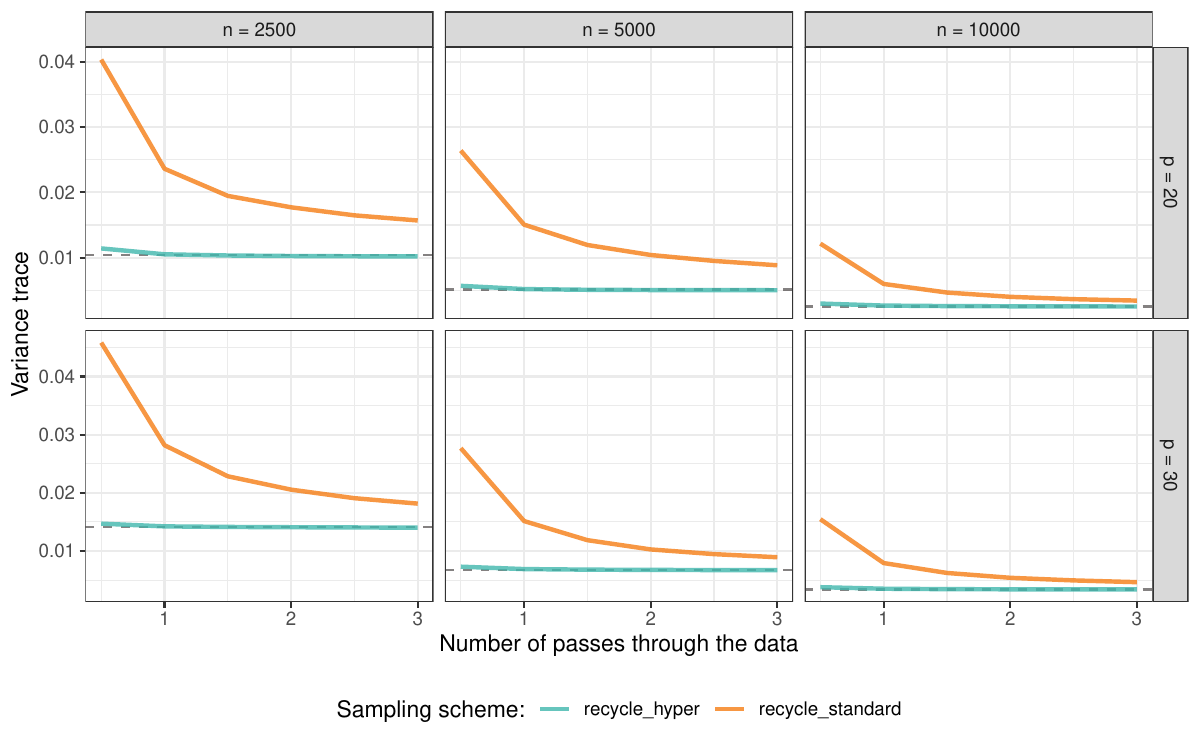}
	\caption{\label{fig:gf:1:var_tr}Gamma frailty model. Trajectories of empirical variance traces for different $n$ and $p$. Solid lines refer to $\bar{\bm{{\bm{\theta}}}}_\mathcal{P}$ under different sampling schemes. Dashed ones indicate values observed for the numerical approximation of $\hat{\bm{{\bm{\theta}}}}$.}
\end{figure}
\begin{figure}[!h]\centering
	\includegraphics[width=\textwidth]{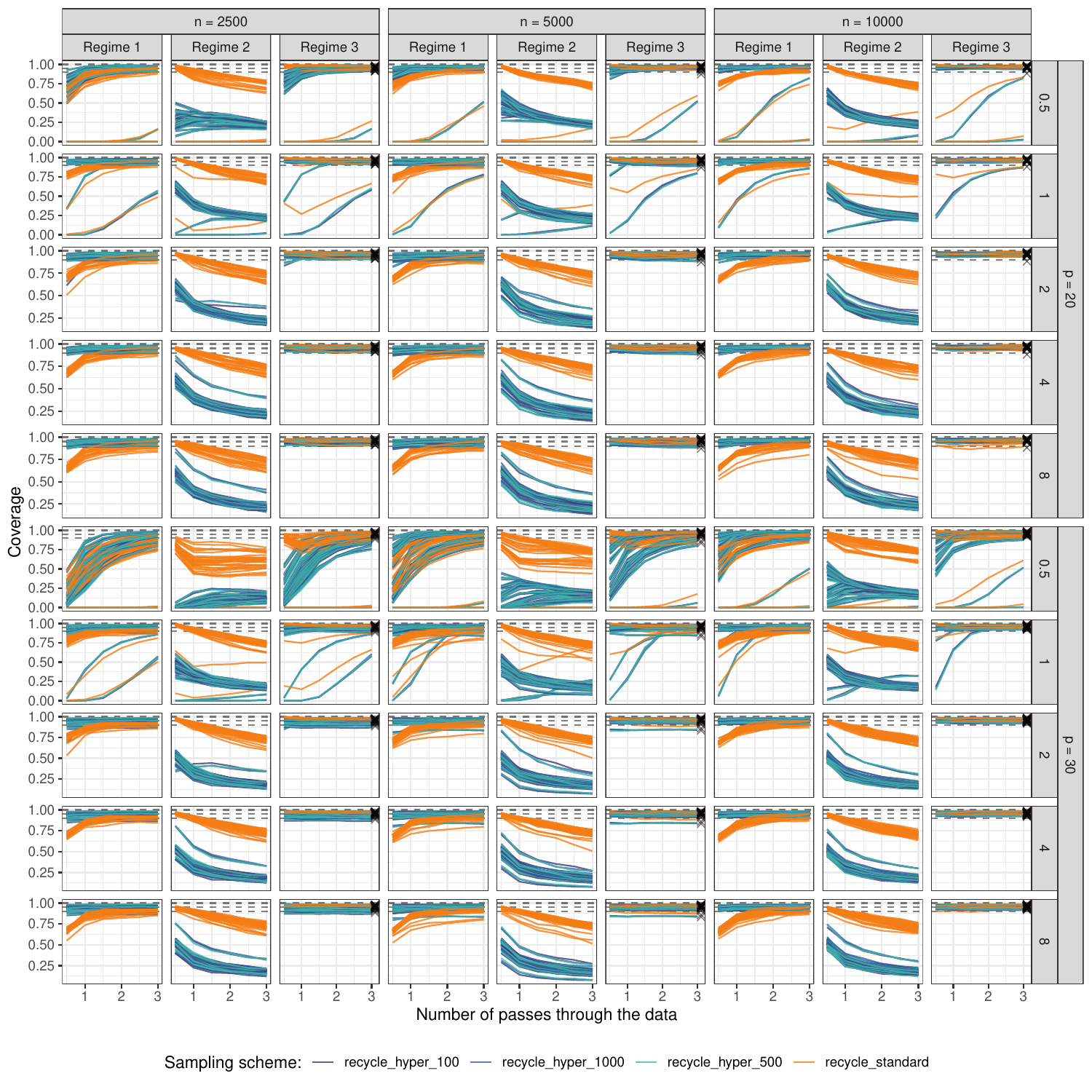}
	\caption{\label{fig:gf:app:cov}Gamma frailty, $\eta_0\in\{0.50, 1, 2, 4, 8\}$. Empirical coverage of confidence intervals for ${\bar{\bm{\theta}}}_{\mathcal{P}}$ constructed according to the three asymptotic regimes in Theorem 1. Results are grouped by sample size. Dashed lines highlight the nominal coverage level. Coloured lines refer to scalar elements of ${\bar{\bm{\theta}}}_{\mathcal{P}}$ under different sampling schemes. Dark crosses equivalently refer to scalar elements of the numerical approximation of ${\hat{\bm{\theta}}}$.}
\end{figure}

\begin{table}[!ht]
\centering
\caption{\label{tab:gf:timeapp}Comparison of computational times (s) for ${\bar{\bm{\theta}}}_{\mathcal{P}}$ with $T_n=3n$ under different sampling schemes and the \texttt{numerical} approximation of ${\hat{\bm{\theta}}}$.}
\tiny
\begin{tabular}{cclccccc}
  \toprule
 $p$ & $n$ & operation & \texttt{recycle\_hyper\_100} & \texttt{recycle\_hyper\_1000} & \texttt{recycle\_hyper\_500 }& \texttt{recycle\_standard} & \texttt{numerical} \\ 
  \midrule
\multirow{16}{*}{20}&\multirow{4}{*}{2500}& Total & $1.62$ & $1.22 $ & $1.25 $ & $8.93 \times 10^{-1}$ & $6.76 $ \\ 
\cmidrule(lr){3-8}
  & & Sampling Step & $1.57 \times 10^{-6}$ & $1.60 \times 10^{-6}$ & $1.55 \times 10^{-6}$ & $4.07 \times 10^{-7}$ &  \\ 
  & & Approximation Step & $1.52 \times 10^{-4}$ & $1.54 \times 10^{-4}$ & $1.53 \times 10^{-4}$ & $1.17 \times 10^{-4}$ &  \\  
  &  & Update Step & $1.31 \times 10^{-6}$ & $1.31 \times 10^{-6}$ & $1.31 \times 10^{-6}$ & $1.17 \times 10^{-7}$ &  \\
    \cmidrule(lr){2-8}
  &\multirow{4}{*}{5000} & Total & $4.66 $ & $2.61 $ & $2.77 $ & $1.80 $ & $13.5 $ \\ 
    \cmidrule(lr){3-8}
  & & Sampling Step & $1.60 \times 10^{-6}$ & $1.68 \times 10^{-6}$ & $1.59 \times 10^{-6}$ & $4.29 \times 10^{-7}$ &  \\ 
  && Approximation Step & $1.55 \times 10^{-4}$ & $1.58 \times 10^{-4}$ & $1.55 \times 10^{-4}$ & $1.18 \times 10^{-4}$ &  \\
  & & Update Step & $1.31 \times 10^{-6}$ & $1.32 \times 10^{-6}$ & $1.31 \times 10^{-6}$ & $1.20 \times 10^{-7}$ &  \\ 
    \cmidrule(lr){2-8}
  &\multirow{4}{*}{10000} & Total & $18.1$ & $6.32 $ & $7.24 $ & $3.64$ & $27.2 $ \\ 
  \cmidrule(lr){3-8}
  & & Sampling Step & $1.68 \times 10^{-6}$ & $1.68 \times 10^{-6}$ & $1.62 \times 10^{-6}$ & $4.48 \times 10^{-7}$ &  \\ 
  & & Approximation Step & $1.64 \times 10^{-4}$ & $1.64 \times 10^{-4}$ & $1.59 \times 10^{-4}$ & $1.19 \times 10^{-4}$ &  \\ 
  & & Update Step & $1.32 \times 10^{-6}$ & $1.31 \times 10^{-6}$ & $1.31 \times 10^{-6}$ & $1.20 \times 10^{-7}$ &  \\  
   \midrule
\multirow{16}{*}{30}&\multirow{4}{*}{2500} & Total & $4.23 $ & $2.95 $ & $3.03 $ & $2.23 $ & $12.0$ \\ 
    \cmidrule(lr){3-8}
  & & Sampling Step & $1.86 \times 10^{-6}$ & $1.85 \times 10^{-6}$ & $1.81 \times 10^{-6}$ & $7.45 \times 10^{-7}$ &  \\ 
  & & Approximation Step & $3.74 \times 10^{-4}$ & $3.72 \times 10^{-4}$ & $3.68 \times 10^{-4}$ & $2.94 \times 10^{-4}$ &  \\ 
  & & Update Step & $1.35 \times 10^{-6}$ & $1.34 \times 10^{-6}$ & $1.33 \times 10^{-6}$ & $1.31 \times 10^{-7}$ &  \\ 
    \cmidrule(lr){2-8}
  &\multirow{4}{*}{5000} & Total & $13.8 $ & $6.31 $ & $7.09 $ & $4.45 $ & $24.4$ \\ 
      \cmidrule(lr){3-8}
  & & Sampling Step & $2.00 \times 10^{-6}$ & $1.83 \times 10^{-6}$ & $1.88 \times 10^{-6}$ & $8.03 \times 10^{-7}$ &  \\ 
  & & Approximation Step & $3.94 \times 10^{-4}$ & $3.73 \times 10^{-4}$ & $3.77 \times 10^{-4}$ & $2.94 \times 10^{-4}$ &  \\
  & & Update Step & $1.35 \times 10^{-6}$ & $1.33 \times 10^{-6}$ & $1.33 \times 10^{-6}$ & $1.31 \times 10^{-7}$ &  \\    
  \cmidrule(lr){2-8}
  &\multirow{4}{*}{10000} & Total &$54.0$ & $15.4$ & $19.3$ & $8.88$ & $48.1 $ \\ 
      \cmidrule(lr){3-8}
  & & Sampling Step & $2.03 \times 10^{-6}$ & $1.86 \times 10^{-6}$ & $1.88 \times 10^{-6}$ & $8.00 \times 10^{-7}$ &  \\ 
  & & Approximation Step & $4.02 \times 10^{-4}$ & $3.86 \times 10^{-4}$ & $3.88 \times 10^{-4}$ & $2.93 \times 10^{-4}$ &  \\
  & & Update Step & $1.35 \times 10^{-6}$ & $1.33 \times 10^{-6}$ & $1.34 \times 10^{-6}$ & $1.32 \times 10^{-7}$ &  \\
     \bottomrule
\end{tabular}

\end{table}

\newpage
\FloatBarrier
\section{Supplementary material NESARC data}
\label{app:mh}
A description of the $p=32$ items considered can be found in Table~\ref{tab:dict}. As described in Section 5, the optimisation stops by controlling the holdout average composite log-likelihood.
Panel (A) in Figure~\ref{fig:mh:checks} shows the estimated edge parameters along the optimization. The stable behavior of most of their trajectories suggests a decreasing utility of additional iterations. Panel (B) instead tracks the holdout negative composite log-likelihood, which exhibits a sharp drop at the beginning of the optimization and then slowly decreases until the procedure ends at $T_n$. Thus, while the algorithm does not strictly minimise the holdout objective function, the decreasing trajectory suggests it is avoiding overfitting. 
\begin{figure}[!ht]\centering
	\includegraphics[width=.85\textwidth]{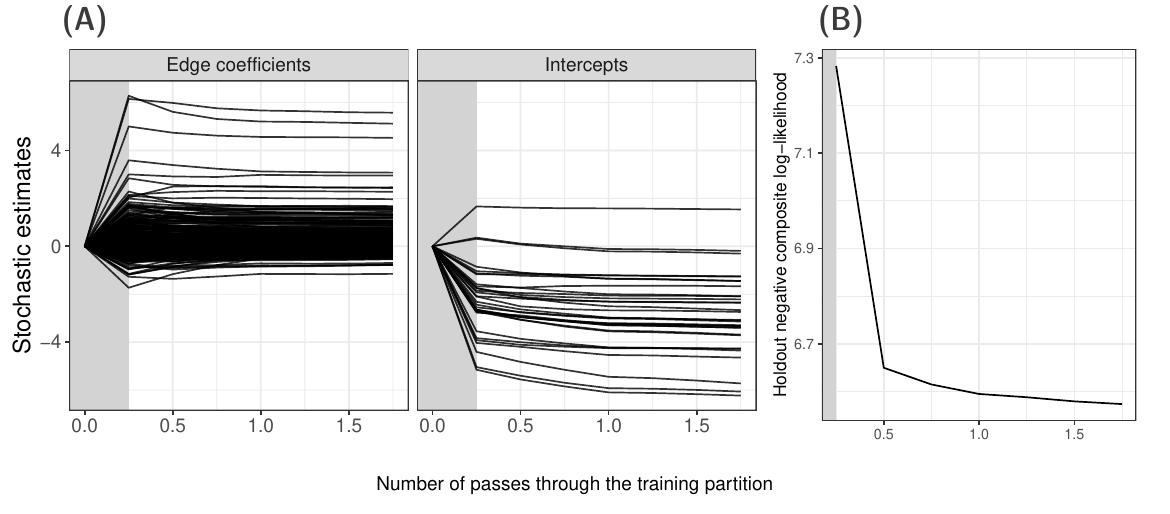}
	\caption{\label{fig:mh:checks} (A) Trajectories of the estimated parameters. (B) Average negative log-likelihood on the holdout partition. Grey areas denote the burn-in period.}
\end{figure}

Since the original NESARC data is not publicly available anymore,
the analysis carried out in Section 5 can be reproduced on synthetic data.
In particular, we provide a dataset simulated from the Ising model with parameters given by the numerical estimates of $\hat{\bm{\theta}}$ computed on the original NESARC data. Figure~\ref{fig:mh:mockest} shows how the stochastic estimates for the edges of the Ising model computed on the simulated dataset compare with the parameters generating the data and with the stochastic estimates computed on the original NESARC answers. While not identical, the results are still very similar and can be used as a reference for reproducing the analysis.

\begin{figure}[!h]\centering
	\includegraphics[width=.8\textwidth]{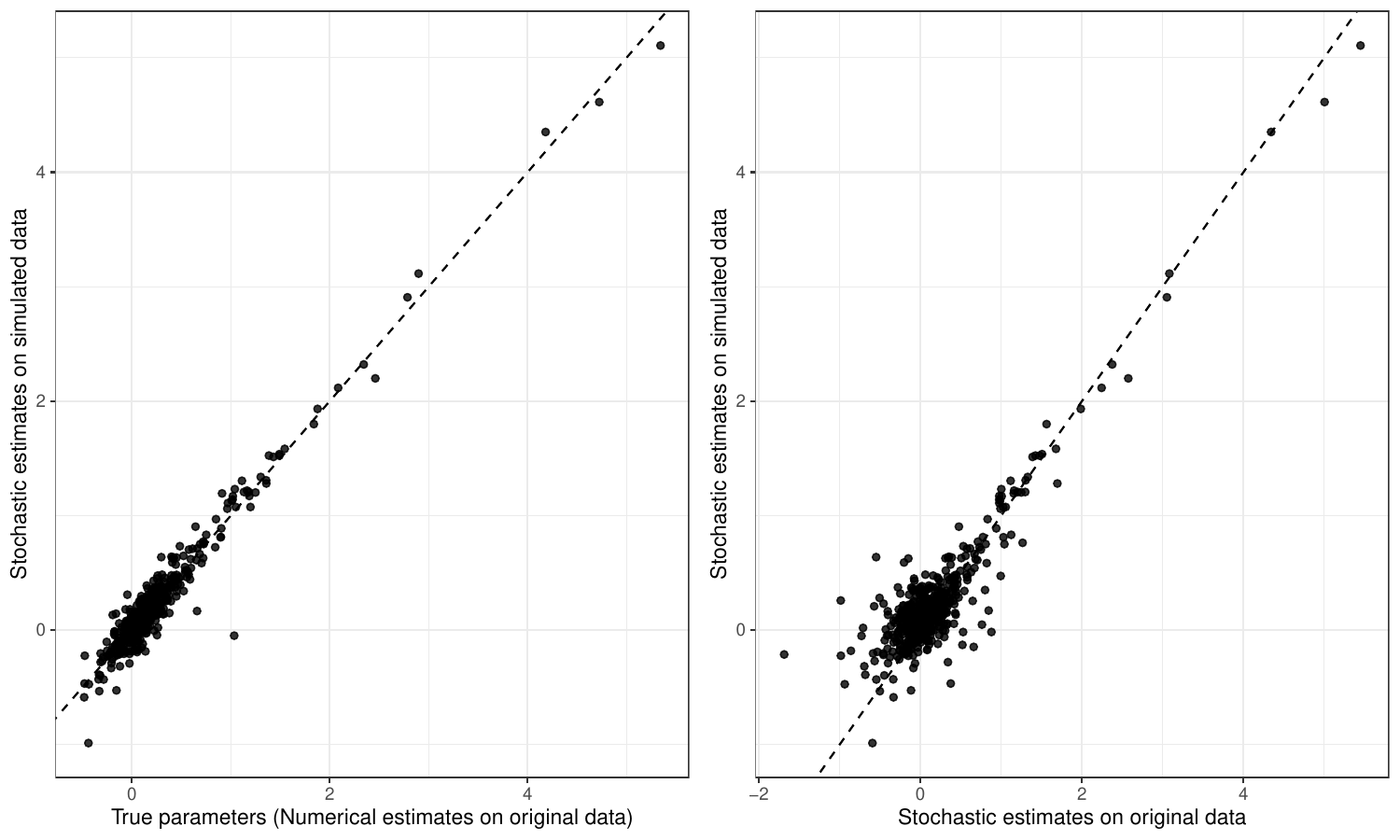}
	\caption{\label{fig:mh:mockest} Comparing estimated edges on the simulated dataset against numerical and stochastic estimates on original data.}
\end{figure}
\begin{footnotesize}
\begin{longtable}{llp{3in}}
\caption{\label{tab:dict}Description of the items of the NESARC survey considered in the analysis.}
\\Item & Area & Description \\
\midrule
S4AQ1 & Low mood & ever had 2-week period when felt sad, 
blue, depressed, or down most of time \\ 
  S4AQ2 & Low mood & ever had 2-week period when didn't care 
about things usually cared about \\ 
  S4BQ1 & Low mood & blood/natural father ever depressed \\ 
  S4BQ2 & Low mood & blood/natural mother ever depressed \\ 
  S4CQ1 & Low mood & had 2+ years period when mood was low, 
sad or depressed most of day, 
more than half the time \\ 
  S5Q1 & High mood & had 1+ week period of excitement/elation
that seemed not normal self \\ 
  S5Q2 & High mood & had 1+ week period of excitement/elation 
that made others concerned about you \\ 
  S5Q3 & High mood & had 1+ week period irritable/easily annoyed
that caused you to shout/break
 things/start fights or arguments \\ 
  S6Q1 & Panic disorders & had panic attack, suddenly felt 
frightened/overwhelmed/nervous as
if in great danger but were not \\ 
  S6Q2 & Panic disorders & was surprised by panic attack
 that happened out of the blue,
for no real reason or in a situation where
didn't expect to be frightened/nervous \\ 
  S6Q3 & Panic disorders & thought was having heart attack,
but doctor said just nerves or panic attack \\ 
  S7Q1 & Social phobia & ever had strong fear or
avoidance of social situation \\ 
  S7Q2 & Social phobia & had fear/avoidance of social situation due to fear of
embarrassment at what you might say/do around others \\ 
  S7Q3 & Social phobia & had fear/avoidance of social situation due to fear of
becoming speechless, having nothing to say or
saying something foolish \\ 
  S8Q1A1 & Specific phobia & ever had fear/avoidance of
insects, snakes, birds, other animals \\ 
  S8Q1A2 & Specific phobia & ever had fear/avoidance of
heights \\ 
  S8Q1A5 & Specific phobia & ever had fear/avoidance of
flying \\ 
  S8Q1A7 & Specific phobia & ever had fear/avoidance of
being in closed spaces \\ 
  S8Q1A8 & Specific phobia & ever had fear/avoidance of
seeing blood/getting an injection \\ 
  S9Q1B & General anxiety & ever had 6+ month period felt very tense/nervous/worrie
most of time abouteveryday problems \\ 
  S10Q1A1 & Personality disorders & avoid jobs or tasks that deal with a lot of people \\ 
  S10Q1A2 & Personality disorders & avoid getting involved with people
unless certain they will like you \\ 
  S10Q1A3 & Personality disorders & find it hard to be 'open' even with
people you are close to \\ 
  S10Q1A4 & Personality disorders & often worry about being criticized
or rejected in social situations \\ 
  S10Q1A5 & Personality disorders & believe that you are not as good, as smart,
or as attractive as most people \\ 
  S10Q1A7 & Personality disorders & afraid of trying new things or doing
things outside usual routine because
afraid of being embarrassed \\ 
  S11AQ1A1 & Antisocial disorders & often cut class, not go to class or go
to school and leave without permission \\ 
  S11AQ1A2 & Antisocial disorders & ever stay out late at night even though
parents told you to stay home \\ 
  S11AQ1A3 & Antisocial disorders & ever have time when bullied or pushed people
around or tried to make them afraid of you \\ 
  S11AQ1A4 & Antisocial disorders & ever run away from home at least twice or run
away and stay away for a longer time \\ 
  S11AQ1A6 & Antisocial disorders & more than once quit a job without knowing
where would find another one \\ 
  S11AQ1A9 & Antisocial disorders & ever have time lasting 1+ months when
had no regular place to live \\ 
   \bottomrule
\end{longtable}
\end{footnotesize}

\section{Experiments with randomized pairwise likelihood}
Another possible approach to our stochastic estimation strategy is to compute the numerical CLE on a subset of all likelihood components. This subset could be determined using a randomization strategy similar to the methods used in \citet{dillon2010stochastic} and \citet{mazo2023randomized}. However, this approach requires discarding part of the data to simplify the computational complexity of the problem. Unlike this method, our estimator spreads the data along the optimization rather than discarding it. Essentially, by spreading the sampling step across iterations rather than doing it at a single time point before optimization, we can average out the uncertainty caused by sampling components. As a result, our estimator is more efficient than alternatives that involve maximizing a random subset of the initial pool of components.

Below is a simulation experiment on the gamma frailty example, comparing it with the randomized pairwise likelihood estimator (RPLE) proposed by \citet{mazo2023randomized}. We conducted $500$ replications to align with the settings described in Section 4.2. 
We assess the mean square error performance of the RPLE 
with an average sampling rate fixed at  $\pi\in\{0.5, 0.25, 0.1\}$ (i.e. $\pi nK$ contribution out of $nK$) in the context of $p=30$. The results are presented in Figure~\ref{fig:rple}, overlaid on 
Figure in Section 4.2.
The horizontal dashed lines represent the average MSE performance across simulations, and the labels denote the corresponding value of $\pi$. It is important to note that our sampling schemes impose $\gamma_1=1/n$, much lower than the sampling rate provided by $\pi$.
\begin{figure}[t]\centering
	\includegraphics[width=\textwidth]{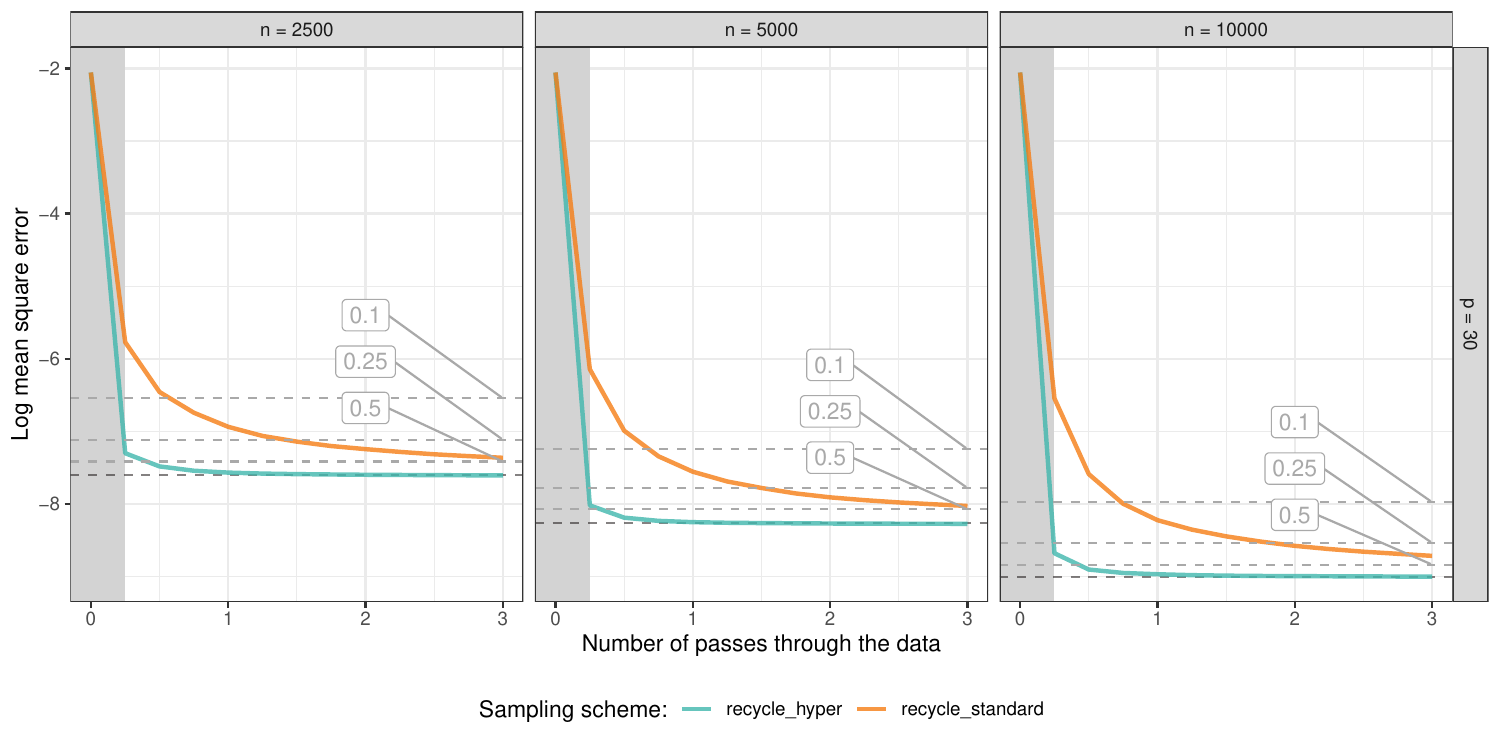}
	\caption{\label{fig:rple}Performance benchmark against the randomized pairwise likelihood estimator with $\pi\in\{0.5, 0.25, 0.1\}$. The lower dashed line denotes the performance of $\hat{\bm{\theta}}$.}
\end{figure}
The plots clearly show that our approach is more flexible in dealing with computational constraints and guarantees a better balance between computational and statistical efficiency.

\end{document}